\newcommand\letcase[3]{\ensuremath{\mathsf{letcase}\ #1 = #2\ \mathsf{in}\ \{#3\}}}
\newcommand\qletcase[3]{\ensuremath{\mathsf{letcase}^\circ\ #1 = #2\ \mathsf{in}\ \{#3\}}}
\newcommand\qvalue[1][]{value{#1}$^\circ$}
\newcommand\tr{\ensuremath{\mathsf{tr}}}
\newcommand\topr[1][1]{\ensuremath{\longrightarrow_{#1}}}
\newcommand\tsem[1]{\ensuremath{\llbracket #1\rrbracket}}
\newcommand\sem[2][\theta]{\ensuremath{\tsem{#2}_{#1}}}
\newcommand\fsem[2][\theta]{\ensuremath{\llparenthesis #2\rrparenthesis_{#1}}}
\newcommand\conv[1]{\ensuremath{\left[{#1}\right]}}
\newcommand\ket[1]{\ensuremath{|#1\rangle}}
\newcommand\bra[1]{\ensuremath{\langle #1|}}
\newcommand\had{\ensuremath{\mathsf{H}}}
\newcommand\cnot{\ensuremath{\mathsf{Cnot}}}
\newcommand\X{\ensuremath{\mathsf{X}}}
\newcommand\Z{\ensuremath{\mathsf{Z}}}
\newcommand\lambdens{\ensuremath{\lambda_\rho}}
\newcommand\qlambdens{\ensuremath{\lambda_\rho^\circ}}
\newcommand\D[1][n]{\ensuremath{\mathcal D_{#1}}}
\newcommand\emptybit{\ensuremath{\varepsilon}}
\newcommand\ext[1]{\ensuremath{\overline{#1}}}
\newcommand\trd[1]{\ensuremath{\mathsf{trd}({#1})}}
\newcommand\w[1]{\ensuremath{\mathsf{w}{(#1)}}}
\newcommand\para[1][e]{\ensuremath{(\mathsf b,\mathsf{#1})}}
\newcommand\toD{\ensuremath{\rightsquigarrow}}
\begin{document}

\title{A Lambda Calculus for Density Matrices\\
  with Classical and Probabilistic Controls} \titlerunning{A Lambda Calculus for
  Density Matrices} \author{Alejandro D\'{\i}az-Caro\thanks{Supported by
    projects STIC-AmSud 16STIC05 FoQCoSS, PICT 2015-1208 and the Laboratoire
    International Associ\'e ``INFINIS''.}} \authorrunning{A.~D\'{\i}az-Caro}

\institute
{Universidad Nacional de Quilmes \& CONICET\\
  Roque S\'aenz Pe\~na 352, B1876BXD Bernal, Buenos Aires, Argentina\\
  \email{alejandro.diaz-caro@unq.edu.ar} }

\maketitle 

\begin{abstract}
  In this paper we present two flavors of a quantum extension to the lambda
  calculus. The first one, $\lambdens$, follows the approach of classical
  control/quantum data, where the quantum data is represented by density
  matrices. We provide an interpretation for programs as density matrices and
  functions upon them. The second one, $\qlambdens$, takes advantage of the
  density matrices presentation in order to follow the mixed trace of programs
  in a kind of generalised density matrix. Such a control can be seen as a
  weaker form of the quantum control and data approach. \keywords{lambda
    calculus, quantum computing, density matrices, classical control}
\end{abstract}

\section{Introduction}
In the last decade several quantum extensions to lambda calculus have been
investigated,
e.g.~\cite{vanTonderSIAM04,SelingerValironMSCS06,PaganiSelingerValironPOPL14,ZorziMSCS16,ArrighiDowekLMCS17,ArrighiDiazcaroValironIC17,DiazcaroDowek17}.
In all of those approaches, the language chosen to represent the quantum state
are vectors in a Hilbert space. However, an alternative formulation of quantum
mechanics can be made using density matrices. Density matrices provide a way to
describe a quantum system in which the state is not fully known. More precisely,
density matrices describe quantum systems in a mixed state, that is, a
statistical set of several quantum states. All the postulates of quantum
mechanics can be described in such a formalism, and hence, also quantum
computing can be done using density matrices.

The first postulate states that a quantum system can be fully described by a
density matrix $\rho$, which is a positive operator with trace (\tr) one. If a
system is in state $\rho_i$ with probability $p_i$, then the density matrix of
the system is $\sum_ip_i\rho_i$. The second postulate states that the evolution
of a quantum system $\rho$ is described with a unitary operator $U$ by $U\rho
U^\dagger$, where $U^\dagger$ is the adjoint operator of $U$. The third
postulate states that the measurement is described by a set of measurement
operators $\{\pi_i\}_i$ with $\sum_i\pi_i^\dagger\pi_i=\mathsf I$, so that the
output of the measurement is $i$, with probability
$\tr(\pi_i^\dagger\pi_i\rho)$, leaving the sate of the system as
$\frac{\pi_i\rho\pi_i^\dagger}{\tr(\pi_i^\dagger\pi_i\rho)}$. The fourth
postulate states that from two systems $\rho$ and $\rho'$, the composed one can
be described by the tensor product of those $\rho\otimes\rho'$.

Naturally, if we want to use the output of a measurement as a condition in the
classical control, we need to know that output. However, density matrices can
still be used as a way to compare processes before running them. For example the
process of tossing a coin, and according to its result, applying Z or not to a
balanced superposition, and the process of tossing a coin and not looking at its
result, may look quite different in most quantum programming languages. Yet both
processes output the same density matrix, and so they are indistinguishable.

In~\cite{SelingerMSCS04}, Selinger introduced a language of quantum flow charts,
and an interpretation of his language into a CPO of density matrices. After this
paper, the language of density matrices has been widely used in quantum
programming, e.g.
\cite{DHondtPanangadenMSCS06,FengYuYingJCSS13,FengDuanYingPOPL11,YingTOPLAS11,YingYingWuPOPL17}.
Indeed, the book ``Foundations of Quantum Programming'' \cite{FOQ} is entirely
written in the language of density matrices. Yet, as far as we know, no lambda
calculus for density matrix have been proposed.

Apart from the distinction of languages by how they treat the quantum states
(vectors in a Hilbert space or density matrices), we also can distinguish the
languages on how the control is considered: either quantumly or classically. The
idea of quantum data / classical control stated by Selinger in
\cite{SelingerMSCS04} induced a quantum lambda calculus in this paradigm
\cite{SelingerValironMSCS06}. Later, this calculus was the base to construct the
programming language Quipper
\cite{GreenLeFanulumsdaineRossSelingerValironPLDI13}, an embedded, scalable,
functional programming language for quantum computing. The concept of quantum
data / classical control declares that quantum computers will run in a
specialized device attached to a classical computer, and it is the classical
computer which will instruct the quantum computer what operations to perform
over which qubits, and then read the classical result after a measurement. It is
a direct consequence from the observation that quantum circuits are classical
(i.e. one cannot superpose circuits or measure them). Several studies have been
done under this paradigm, e.g. \cite{ AltenkirchGrattageLICS05,
  GreenLeFanulumsdaineRossSelingerValironPLDI13, PaganiSelingerValironPOPL14,
  SelingerValironMSCS06, ZorziMSCS16 }.

Dually to the quantum data / classical control paradigm, there is what we can
call the quantum data and control paradigm. The idea is to provide a
computational definition of the notions of vector space and bilinear functions.
In the realm of quantum walks, quantum control is not uncommon (e.g.
\cite{AmbainisBachNayakVishwanathWatrousSTOC01,AharonovAmbainisKempeVaziraniSTOC01}).
Also, several high-level languages on quantum control have been proposed in the
past (e.g.
\cite{AltenkirchGrattageLICS05,YingYuFeng12,YingYuFeng14,BadescuPanangadenQPL15}),
however, up to now, no complete lambda-calculus with quantum control have been
proposed. We benefit, though, from the long line of works in this direction
\cite{ArrighiDowekLMCS17,ArrighiDiazcaroValironIC17,ArrighiDiazcaroLMCS12,AssafDiazcaroPerdrixTassonValironLMCS14,DiazcaroPetitWoLLIC12}.

In this paper, we propose a quantum extension to the lambda calculus,
$\lambdens$, in the quantum data / classical control paradigm, where the quantum
data is given by density matrices, as first suggested by Selinger's
interpretation of quantum flow charts~\cite{SelingerMSCS04}. Then, we propose a modification of such a
calculus, called $\qlambdens$, in which we generalise the density matrices to
the classical control: That is, after a measurement, we take all the possible
outcomes in a kind of generalised density matrix of arbitrary terms. The control
does not become quantum, since it is not possible to superpose programs in the
quantum sense. However, we consider the density matrix of the mixed state of
programs arising from a measurement. Therefore, this can be considered as a kind
of probabilistic control, or even another way, perhaps weaker, of quantum
control.

\paragraph{Outline of the Paper.}
In Section~\ref{sec:lambdens} we introduce the typed calculus $\lambdens$, which
manipulates density matrices, and we give two interpretations of the calculus.
One where the terms are interpreted into a generalisation of mixed states, and
another where the terms are interpreted into density matrices. Then we prove
some properties of those interpretations. In Section~\ref{sec:qlambdens} we
introduce a modification of $\lambdens$, called $\qlambdens$, where the output
of a measurement produce a sum with all the possible outputs. We then extend the
interpretation of $\lambdens$ to accommodate $\qlambdens$, and prove its basic
properties. In Section~\ref{sec:correctness} we prove the Subject Reduction
(Theorem~\ref{thm:SR}) and Progress (Theorem~\ref{thm:Progress}) properties for
both calculi. In Section~\ref{sec:Examples} we give two interesting examples, in
both calculi. Finally, in Section~\ref{sec:Conclusion}, we conclude and discuss
some future work.

\section{Classical-control calculus with probabilistic
  rewriting}\label{sec:lambdens}
\subsection{Definitions}
The grammar of terms, given in Table~\ref{tab:Grammar}, have been divided in
three categories.

\begin{enumerate}
\item Standard lambda calculus terms: Variables from a set $\mathsf{Vars}$,
  abstractions and applications.
\item The four postulates of quantum mechanics, with the measurement postulate
  restricted to measurements in the computational basis\footnote{A
    generalisation to any arbitrary measurement can be considered in a future,
    however, for the sake of simplicity in the classical control, we consider
    only measurements in the computational basis, which is a common practice in
    quantum lambda
    calculi~\cite{PaganiSelingerValironPOPL14,SelingerValironSTQC09,SelingerValironMSCS06,ZorziMSCS16,DallagoMasiniZorziQPL09,DiazcaroDowek17}.}:
  $\rho^n$ to represent the density matrix of a quantum system. $U^nt$ to
  describe its evolution. $\pi^nt$ to measure it. $t\otimes t$ to describe the
  density matrix of a composite system (that is, a non entangled system composed
  of two subsystems).
\item Two constructions for the classical control: a pair $(b^m,\rho^n)$, where
  $b^m$ is the output of a measurement in the computational basis and $\rho^n$
  is the resulting density matrix, and the conditional $\mathsf{letcase}$
  construction reading the output of the measurement.
\end{enumerate}
\begin{table}[t]
  \begin{align*}
    t & := x\mid \lambda x.t \mid tt & \text{(Standard lambda calculus)}\\
      & \hspace{2mm}\mid \rho^n\mid U^n t\mid \pi^n t\mid t\otimes t & \text{(Quantum postulates)}\\
      & \hspace{2mm}\mid  (b^m,\rho^n)\mid \letcase xr{t,\dots,t} & \text{(Classical control)}
  \end{align*}
  where:
  \begin{itemize}
  \item $n,m\in\mathbb N$, $m\leq n$.
  \item $\rho^n$ is a density matrix of $n$-qubits, that is, a positive
    $2^n\times 2^n$-matrix with trace $1$.
  \item $b^m\in\mathbb N$, $0\leq b^m<2^m$.
  \item $\{t,\dots,t\}$ contains $2^m$ terms.
  \item $U^n$ is a unitary operator of dimension $2^n\times 2^n$, that is, a
    $2^n\times 2^n$-matrix such that $(U^n)^\dagger=(U^n)^{-1}$.
  \item $\pi^n = \{\pi_0,\dots,\pi_{2^n-1}\}$, describes a quantum measurement
    in the computational basis, where each $\pi_i$ is a projector operator of
    dimension $2^n$ projecting to one vector of the canonical base.
  \end{itemize}
  \caption{Grammar of terms of $\lambdens$.}
  \label{tab:Grammar}
\end{table}

The rewrite system, given in Table~\ref{tab:TRS}, is described by the relation
$\topr[p]$, which is a probabilistic relation where $p$ is the probability of
occurrence. If $U^m$ is applied to $\rho^n$, with $m\leq n$, we write
$\ext{U^m}$ for $U^m\otimes I^{n-m}$. Similarly, we write $\ext{\pi^m}$ when we
apply this measurement operator to $\rho^n$ for $\{\pi_0\otimes
I^{n-m},\dots,\pi_{2^m-1}\otimes I^{n-m}\}$. If the unitary $U^m$ needs to be
applied, for example, to the last $m$ qubits of $\rho^n$ instead of the first
$m$, we will need to use the unitary transformation $I^{n-m}\otimes U^m$
instead. And if it is applied to the qubits $k$ to $k+m$, then, we can use
$\ext{I^{k-1}\otimes U^m}$.

This rewrite system assumes that after a measurement, the result is known.
However, since we are working with density matrices we could also provide an
alternative rewrite system where after a measurement, the system turns into a
mixed state. We left this possibility for Section~\ref{sec:qlambdens}.

The type system, including the grammar of types and the derivation rules, is
given in Table~\ref{tab:TS}. The type system is affine, so variables can be used
at most once, forbiding from cloning a density matrix.

\begin{table}[t]
  \begin{align*}
    (\lambda x.t)r &\topr t[r/x]\\
    U^m\rho^n &\topr {\rho'}^n &\text{ with }{\rho'}^n=\ext{U^m}\rho^n\ext{U^m}^\dagger\\
    \pi^m\rho^n &\topr[p_i] (i,\rho^n_i) & \text{ with }
                                           \left\{
                                           \begin{array}{l}
                                             p_i=\tr(\ext{\pi_i}^\dagger\ext{\pi_i}\rho^n)\\
                                             \rho^n_i = \tfrac{\ext{\pi_i}\rho^n\ext{\pi_i}^\dagger}{p_i}
                                           \end{array}
    \right.\\
    \rho\otimes\rho' &\topr \rho'' & \text{ with }\rho''=\rho\otimes\rho'\\
    \letcase x{(b^m,\rho^n)}{t_0,\dots,t_{2^m-1}} &\topr t_{b^m}[\rho^n/x]
  \end{align*}
  \smallskip
  \[
    \infer{\lambda x.t\topr[p]\lambda x.r}{t\topr[p] r}\qquad \infer{ts\topr[p]
      rs}{t\topr[p] r}\qquad \infer{st\topr[p] sr}{t\topr[p] r}\qquad
    \infer{U^nt\topr[p] U^nr}{t\topr[p] r}\qquad
  \]
  \[
    \infer{\pi^n t\topr[p] \pi^n r}{t\topr[p] r}\qquad \infer{t\otimes s\topr[p]
      r\otimes s}{t\topr[p] r}\qquad \infer{s\otimes t\topr[p] s\otimes
      r}{t\topr[p] r}
  \]
  \[
    \infer{\letcase xt{s_0,\dots,s_n}\topr[p]\letcase
      xr{s_0,\dots,s_n}}{t\topr[p] r}
  \]
  \caption{Rewrite system for $\lambdens$.}
  \label{tab:TRS}
\end{table}
\begin{table}[t]
  \[
    A:= n\mid (m,n)\mid A\multimap A
  \]
  where $m\leq n\in\mathbb N$.
  \[
    \infer[\mathsf{ax}]{\Gamma,x:A\vdash x:A}{} \qquad
    \infer[\multimap_i]{\Gamma\vdash\lambda x.t:A\multimap B}{\Gamma,x:A\vdash
      t:B} \qquad \infer[\multimap_e]{\Gamma,\Delta\vdash tr:B}{\Gamma\vdash
      t:A\multimap B & \Delta\vdash r:A}
  \]
  \[
    \infer[\mathsf{ax}_\rho]{\Gamma\vdash\rho^n:n}{} \qquad
    \infer[\mathsf{u}]{\Gamma\vdash U^mt:n}{\Gamma\vdash t:n} \qquad
    \infer[\mathsf{m}]{\Gamma\vdash\pi^mt:(m,n)}{\Gamma\vdash t:n} \qquad
    \infer[\otimes]{\Gamma,\Delta\vdash t\otimes r:n+m}{\Gamma\vdash t:n &
      \Delta\vdash r:m}
  \]
  \[
    \infer[\mathsf{ax}_{\mathsf{am}}]{\Gamma\vdash (b^m,\rho^n):(m,n)}{} \quad
    \infer[\mathsf{lc}] {\Gamma\vdash\letcase xr{t_0,\dots,t_{2^m-1}}:A}
    {x:n\vdash t_0:A & \dots & x:n\vdash t_{2^m-1}:A & \Gamma\vdash r:(m,n) }
  \]
  \caption{Type system for $\lambdens$.}
  \label{tab:TS}
\end{table}

\begin{example}\label{ex:teleportation}
  The teleportation algorithm, while it is better described by pure states, can
  be expressed in the following way:

  Let $\beta_{00} = \frac
  12(\ket{00}\bra{00}+\ket{00}\bra{11}+\ket{11}\bra{00}+\ket{11}\bra{11})$.
  Then, the following term expresses the teleportation algorithm.
  \[
    \lambda x. \letcase y {\pi^2(\had^1(\cnot^2 (x\otimes\beta_{00})))} { y,
      \Z_3 y, \X_3 y, \Z_3\X_3 y }
  \]
  where $\Z_3=I\otimes I\otimes\Z^1$ and $\X_3=I\otimes I\otimes\X^1$.

  The type derivation is as follows.

  \noindent\scalebox{.69}{
    \parbox{\textwidth}{
      \[
        \infer[\multimap_i]{\vdash\lambda x.\letcase
          y{\pi^2(\had^1(\cnot^2(x\otimes\beta_{00})))} {y, \Z_3 y, \X_3 y,
            \Z_3\X_3 y}:1\multimap 3} { \infer[\mathsf{lc}]{x:1\vdash\letcase
            y{\pi^2(\had^1(\cnot^2(x\otimes\beta_{00})))} {y, \Z_3 y, \X y,
              \Z_3\X_3 y}:3} { \infer[\mathsf{ax}]{y:3\vdash y:3}{} &
            \infer[\mathsf{u}]{y:3\vdash \Z_3 y:3}{
              \infer[\mathsf{ax}]{y:3\vdash y:3}{} } &
            \infer[\mathsf{u}]{y:3\vdash \X_3 y:3}{
              \infer[\mathsf{ax}]{y:3\vdash y:3}{} } &
            \infer[\mathsf{u}]{y:3\vdash \Z_3\X_3 y:3}{
              \infer[\mathsf{u}]{y:3\vdash \X_3 y:3}{
                \infer[\mathsf{ax}]{y:3\vdash y:3}{} } } &
            \infer[\mathsf{m}]{x:1\vdash\pi^2(\had^1(\cnot^2(x\otimes\beta_{00}))):(2,3)}
            { \infer[\mathsf{u}]{x:1\vdash\had^1(\cnot^2(x\otimes\beta_{00})):3}
              { \infer[\mathsf{u}]{x:1\vdash\cnot^2(x\otimes\beta_{00}):3} {
                  \infer[\otimes]{x:1\vdash x\otimes\beta_{00}:3} {
                    \infer[\mathsf{ax}]{x:1\vdash x:1}{} &
                    \infer[\mathsf{ax}_\rho]{\vdash\beta_{00}:2}{} } } } } } }
      \]
    } }
\end{example}

\subsection{Interpretation}\label{sec:DenSem}
We give two interpretations for terms. One, noted by $\fsem[]{\cdot}$, is the
interpretation of terms into density matrices and functions upon them, and the
other, noted by $\sem[]{\cdot}$, is a more fine-grained interpretation,
interpreting terms into a generalisation of mixed states. In particular, we want
$\sem[]{\pi^n\rho^n}\!=\!\{(\tr(\pi_i^\dagger\pi_i\rho^n),\frac{\pi_i\rho^n\pi_i^\dagger}{\tr(\pi_i^\dagger\pi_i\rho^n)})\}_i$,
while $\fsem[]{\pi^n\rho}=\sum_i\pi_i\rho^n\pi_i^\dagger$. However, since the
$\mathsf{letcase}$ construction needs also to distinguish each possible result
of a measurement, we will carry those results in the interpretation
$\sem[]{\cdot}$, making it a set of triplets instead of a set of tuples.

Let $\mathbb N^\varepsilon=\mathbb N_0\cup\{\varepsilon\}$, so terms are
interpreted into sets of triplets $(p,b,e)$ with $p\in\mathbb R_+^{\leq 1}$,
representing the probability, $b\in\mathbb N^\varepsilon$, representing the
output of a measurement if it occurred, and $e\in\tsem A$ for some type $A$ and
an interpretation $\tsem{\cdot}$ on types yet to define. In addition, we
consider that the sets $\{\dots,(p,b,e),(q,b,e),\dots\}$ and
$\{\dots,(p+q,b,e),\dots\}$ are equal. Finally, we define the weight function as
$\w{\{(p_i,b_i,e_i)\}_i}=\sum_ip_i$. We are interested in sets $S$ such that
$\w{S}=1$.

The interpretation of types is given in Table~\ref{tab:IntTy}. $\D$ is the set
of density matrices of $n$-qubits, that is $\D=\{\rho\mid\rho\in\mathcal
M^+_{2^n\times 2^n}\textrm{ such that } \tr(\rho)=1\}$, where $\mathcal
M^+_{2^n\times 2^n}$ is the set of positive matrices of size $2^n\times 2^n$.
$P(b,A)$ is the following property: $[A=(m,n)\implies
b\neq\varepsilon]$.
We also establish the convention that
$P(\{(p_i,b_i,e_i)\}_i,A)=\bigwedge_iP(b_i,A)$. Finally, we write
$\trd{S}=\{e\mid (p,b,e)\in S\}$.

\begin{table}[t]
  \begin{align*}
    \tsem{n} &= \D\\
    \tsem{(m,n)} &=\D\\
    \tsem{A\multimap B} &= \{f\mid \forall e\in\tsem A, \forall b\in\mathbb N^\varepsilon \textrm{ s.t. }P(b,A),\\
             &\qquad\trd{f(b,e)}\subseteq\tsem B, \w{f(b,e)}=1\textrm{ and }P(f(b,e),B)\}
  \end{align*}
  \caption{Interpretation of types}
  \label{tab:IntTy}
\end{table}

Let $E=\bigcup_{A\in\mathsf{Types}}\tsem A$. We denote by $\theta$ to a
valuation $\mathsf{Vars}\to\mathbb N^\varepsilon\times E$. Then, we define the
interpretation of terms with respect to a given valuation $\theta$ in
Table~\ref{tab:IntTerms}.

\begin{table}[t]
  \begin{align*}
    \sem x &= \{(1, b,e)\}\textrm{ where }\theta(x)=(b,e)\\
    \sem{\lambda x.t} &= \{(1,\varepsilon,\para\mapsto\sem[\theta,x=\para]t)\}\\
    \sem{tr} &= \begin{aligned}[t]
      \{(p_iq_jh_{ijk},b''_{ijk},g_{ijk})\mid
      & \sem r=\{(p_i,b_i,e_i)\}_i,\\
      &\sem t=\{(q_j,b'_j,f_j)\}_j \textrm{ and }\\
      &f_j(b_i,e_i)=\{(h_{ijk},b''_{ijk},g_{ijk})\}_k\}\\ 
    \end{aligned}\\
    \sem{\rho^n} &= \{(1,\emptybit,\rho^n)\}\\
    \sem{U^nt} &=\{(p_i,\emptybit,\ext{U^n}\rho_i\ext{U^n}^\dagger)\mid \sem t=\{(p_i,b_i,\rho_i)\}_i\}\\
    \sem{\pi^mt} &=\{(p_j\tr(\ext{\pi_i}^\dagger\ext{\pi_i}\rho_j),i,\frac{\ext{\pi_i}\rho_j\ext{\pi_i}^\dagger}{\tr(\ext{\pi_i}^\dagger\ext{\pi_i}\rho_j)})\mid \sem t=\{(p_j,b_j,\rho_j)\}_j\}\\
    \sem{t\otimes r} &=
                       \begin{aligned}[t]
                         \{(p_iq_j,\emptybit,\rho_i\otimes \rho'_j)\mid& \sem t=\{(p_i,b_i,\rho_i)\}_i\textrm{ and }\\
                         &\sem r=\{(q_j,b'_j,\rho'_j)\}_j\}
                       \end{aligned}
    \\
    \sem{(b^m,\rho^n)} &= \{(1,b^m,\rho^n)\}\\
    \omit\rlap{$\sem{\letcase xr{t_0,\dots,t_{2^m-1}}} =
    \begin{aligned}[t]
      & \{(p_iq_{ij},b'_{ij},e_{ij})\mid \\
      &\quad\sem r=\{(p_i,b_i,\rho_i)\}_i\textrm{ and }\\
      &\quad\sem[\theta,x=(\emptybit,\rho_i)]{t_{b_i}}=\{(q_{ij},b'_{ij},e_{ij})\}_j\}
    \end{aligned}$}
  \end{align*}
  \caption{Interpretation of terms}
  \label{tab:IntTerms}
\end{table}

\begin{definition}
  $\theta\vDash\Gamma$ if and only if, for all $x:A\in\Gamma$, $\theta(x)=(b,e)$
  with $e\in\tsem A$, and $P(b,A)$.
\end{definition}

Lemma~\ref{lem:listMN} states that a term with type $(m,n)$ will be the result of a measurement, and hence, its
interpretation will carry the results $b_i\neq\emptybit$.
\begin{lemma}\label{lem:listMN}
  Let $\Gamma\vdash t:(m,n)$, $\theta\vDash\Gamma$, and $\sem t$
  be well-defined. Then, $\sem t=\{(p_i,b_i,e_i)\}_i$ with $b_i\neq\emptybit$
  and $e_i\in\D$.
\end{lemma}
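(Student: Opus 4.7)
The plan is to proceed by structural induction on the typing derivation of $\Gamma\vdash t:(m,n)$. Since the type $(m,n)$ can only be produced by the rules $\mathsf{ax}$ (with $x:(m,n)\in\Gamma$), $\multimap_e$, $\mathsf{m}$, $\mathsf{ax}_{\mathsf{am}}$, and $\mathsf{lc}$, there are exactly five cases to analyse, and for each we verify both conclusions simultaneously: (i) every bit carried in $\sem t$ is different from $\emptybit$, and (ii) every element in $\trd{\sem t}$ belongs to $\D$.

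The base cases are essentially immediate. For $\mathsf{ax}_{\mathsf{am}}$, $\sem{(b^m,\rho^n)}=\{(1,b^m,\rho^n)\}$, and the grammatical constraints $0\le b^m<2^m$ and $\rho^n\in\D$ give both conclusions. For $\mathsf{ax}$ with $x:(m,n)\in\Gamma$, I use the hypothesis $\theta\vDash\Gamma$: by definition $\theta(x)=(b,e)$ with $e\in\tsem{(m,n)}=\D$ and $P(b,(m,n))$, which unfolds to $b\neq\emptybit$. For $\mathsf{m}$, the definition of $\sem{\pi^m t}$ explicitly produces triplets of the form $(\,\cdot\,,i,\ext{\pi_i}\rho_j\ext{\pi_i}^\dagger/\tr(\cdots))$; the measurement outcome $i$ is a natural number hence different from $\emptybit$, and the post-measurement state is by construction a density matrix (positive and of trace one by normalisation).

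The two inductive cases use the interpretation of types from Table~\ref{tab:IntTy}. For $\multimap_e$ with $t=t_1t_2$ typed as $\Gamma\vdash t_1:A\multimap(m,n)$ and $\Delta\vdash t_2:A$, unfolding $\sem{t_1t_2}$ I need to show that for every $f_j$ in $\trd{\sem{t_1}}$ and every $(b_i,e_i)$ built from $\sem{t_2}$, the set $f_j(b_i,e_i)$ has all bits $\neq\emptybit$ and all third components in $\D$. By the induction hypothesis applied to $t_1$ (more precisely by a straightforward general invariant that $\trd{\sem{t_1}}\subseteq\tsem{A\multimap(m,n)}$, which the interpretation of application respects), each $f_j\in\tsem{A\multimap(m,n)}$, and the definition of $\tsem{A\multimap B}$ directly gives $\trd{f_j(b_i,e_i)}\subseteq\tsem{(m,n)}=\D$ together with $P(f_j(b_i,e_i),(m,n))$, i.e.\ all bits are non-$\emptybit$. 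For this to apply, one needs $P(b_i,A)$, which is guaranteed by the conventions established just before the table (if $A=(m',n')$, the bits of $\sem{t_2}$ are non-$\emptybit$ by induction hypothesis; if $A$ is of another shape, $P(b_i,A)$ is vacuous).

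For $\mathsf{lc}$ with $t=\letcase{x}{r}{t_0,\dots,t_{2^{m'}-1}}$ typed as $(m,n)$, each branch $t_{b_i}$ is typed $x:n\vdash t_{b_i}:(m,n)$. The valuation $(\theta,x=(\emptybit,\rho_i))$ satisfies the extended context since $\rho_i\in\D$ (this follows because $\Gamma\vdash r:(m',n)$ ensures, again by induction hypothesis applied to $r$, that the states $\rho_i$ appearing in $\sem r$ lie in $\D$) and $P(\emptybit,n)$ holds vacuously. The induction hypothesis applied to each $t_{b_i}$ then yields that $\sem[\theta,x=(\emptybit,\rho_i)]{t_{b_i}}$ has all bits non-$\emptybit$ and all states in $\D$, and this property is preserved by the definition of $\sem{\letcase{x}{r}{\dots}}$. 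The main delicate point, and the one that shapes the induction, is thus the $\multimap_e$ case: it is really the semantic constraint built into $\tsem{A\multimap B}$ that carries the property through function application, so the lemma is in effect a corollary of having baked $P(f(b,e),B)$ into the interpretation of arrows.
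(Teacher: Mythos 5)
Your proof is correct and follows essentially the same route as the paper's: induction on the typing derivation over the five rules that can produce type $(m,n)$, using $\theta\vDash\Gamma$ for $\mathsf{ax}$, the semantic constraint $P(f(b,e),B)$ baked into $\tsem{A\multimap B}$ for $\multimap_e$, and the explicit form of the interpretation for $\mathsf{m}$, $\mathsf{ax}_{\mathsf{am}}$ and $\mathsf{lc}$. The one point where you are in fact slightly more careful than the paper is in the $\multimap_e$ case, where you correctly observe that the needed fact $\trd{\sem{t_1}}\subseteq\tsem{A\multimap(m,n)}$ is a general well-definedness invariant rather than a literal instance of the induction hypothesis (the paper labels it ``induction hypothesis'' even though $A\multimap(m,n)$ is not of the form $(m,n)$).
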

\begin{proof}
  By induction on the type derivation{ (cf.~Appendix~\ref{proof:listMN})}.\qed
\end{proof}

Lemma~\ref{lem:wellDefined} states that the interpretation of a typed term is
well-defined.
\begin{lemma}
  \label{lem:wellDefined}
  If $\Gamma\vdash t:A$ and $\theta\vDash\Gamma$, then $\w{\sem t}=1$, and
  $\trd{\sem t}\subseteq\tsem A$.
\end{lemma}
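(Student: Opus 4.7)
The plan is induction on the derivation of $\Gamma\vdash t:A$. The axiom cases $\mathsf{ax}$, $\mathsf{ax}_\rho$, and $\mathsf{ax}_{\mathsf{am}}$ are immediate: the interpretation is a singleton of weight $1$, and the third component is, respectively, in $\tsem A$ because $\theta\vDash\Gamma$, in $\D$ by the side condition on $\rho^n$, and in $\D$ by the side condition on $\rho^n$ (matching $\tsem{(m,n)}=\D$).

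For the quantum postulates, the weight of the input is $1$ by the induction hypothesis, and each case is a direct calculation. For rule $\mathsf u$, each summand still has weight $p_i$ since unitaries are trace-preserving: $\tr(\ext{U^n}\rho_i\ext{U^n}^\dagger)=\tr(\rho_i)=1$, and $\ext{U^n}\rho_i\ext{U^n}^\dagger\in\D$. For rule $\otimes$, I would use bilinearity of trace and the fact that $\rho_i\otimes\rho'_j\in\mathcal D_{n+m}$ whenever $\rho_i\in\D$ and $\rho'_j\in\mathcal D_m$, and weight $\sum_{ij}p_iq_j=1\cdot 1=1$. The key case is $\mathsf m$: by the measurement postulate $\sum_i\ext{\pi_i}^\dagger\ext{\pi_i}=\mathsf I$, so
\[
  \sum_i \tr(\ext{\pi_i}^\dagger\ext{\pi_i}\rho_j) = \tr\!\left(\sum_i\ext{\pi_i}^\dagger\ext{\pi_i}\rho_j\right) = \tr(\rho_j)=1,
\]
which gives $\w{\sem{\pi^mt}}=\sum_j p_j=1$; each post-measurement state is a normalised density matrix by construction.

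For abstraction ($\multimap_i$) the interpretation is a singleton of weight $1$, so I only need to check that the associated map lies in $\tsem{A\multimap B}$. Given any $e\in\tsem A$ and any $b\in\mathbb N^\varepsilon$ with $P(b,A)$, the extended valuation $\theta,x=(b,e)$ satisfies $\Gamma,x:A$, so the induction hypothesis on the body $t$ yields $\w{\sem[\theta,x=(b,e)]{t}}=1$ and $\trd{\sem[\theta,x=(b,e)]{t}}\subseteq\tsem B$. The remaining obligation $P(\sem[\theta,x=(b,e)]{t},B)$ is where I invoke Lemma~\ref{lem:listMN}: when $B=(m',n')$, the lemma forces every first-coordinate tag to be different from $\emptybit$. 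For application ($\multimap_e$), the induction hypothesis on the function part gives a singleton $\{(1,\varepsilon,f)\}$ (weight $1$ and $f\in\tsem{A\multimap B}$), and the induction hypothesis on the argument gives weight $1$ and third components in $\tsem A$; feeding these into $f$ uses precisely the three guarantees packaged in $\tsem{A\multimap B}$, so weights multiply to $1$ and the result lies in $\tsem B$.

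The $\mathsf{lc}$ case is the bookkeeping heart of the argument. By the induction hypothesis, $\sem r$ has weight $1$ and third coordinates in $\D$; by Lemma~\ref{lem:listMN} every tag $b_i$ lies in $\{0,\dots,2^m-1\}$, so the branch $t_{b_i}$ exists and the extended valuation $\theta,x=(\emptybit,\rho_i)$ satisfies $x:n$ (note $P(\emptybit,n)$ holds trivially). Applying the induction hypothesis to each $t_{b_i}$ gives $\sum_j q_{ij}=1$ and the third components in $\tsem A$, whence $\sum_{i,j}p_i q_{ij}=\sum_ip_i=1$ and $\trd{\sem{\letcase xr{t_0,\ldots,t_{2^m-1}}}}\subseteq\tsem A$. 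The main obstacle I anticipate is not the weight arithmetic, which is routine, but making sure at each binding step that the extended valuation still models the extended context and that the side condition $P(b,A)$ is preserved; this is exactly where Lemma~\ref{lem:listMN} is indispensable, in both the abstraction case and the letcase case.
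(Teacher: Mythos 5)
Your proposal follows essentially the same route as the paper: induction over the (syntax-directed) typing derivation, with the completeness relation $\sum_i\ext{\pi_i}^\dagger\ext{\pi_i}=\mathsf I$ giving weight $1$ in the measurement case, and Lemma~\ref{lem:listMN} discharging the $P(b,A)$ side conditions needed to enter the definition of $\tsem{A\multimap B}$ and to select the branch in the $\mathsf{lc}$ case.

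One local error: in the $\multimap_e$ case you assert that the induction hypothesis on the function part yields a singleton $\{(1,\varepsilon,f)\}$. That is false in general --- a term of arrow type need not be an abstraction; e.g.\ $\letcase y{\pi^1\rho}{\lambda x.x,\lambda x.Ux}$ has type $1\multimap 1$ and its interpretation is a two-element set. The induction hypothesis only gives $\sem t=\{(q_j,b'_j,f_j)\}_j$ with $\sum_jq_j=1$ and each $f_j\in\tsem{A\multimap B}$, and the paper's proof carries the extra index $j$ through, obtaining $\sum_{ijk}p_iq_jh_{ijk}=1$. Your argument repairs immediately once the singleton assumption is dropped, but as written the case is misstated. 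Relatedly, you credit Lemma~\ref{lem:listMN} only in the abstraction and $\mathsf{lc}$ cases; the paper also invokes it in the application case, since the guarantees packaged in $\tsem{A\multimap B}$ are only available for argument tags $b_i$ satisfying $P(b_i,A)$, which is nontrivial precisely when $A=(m,n)$.
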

\begin{proof}
  By induction on $t${ (cf.~Appendix~\ref{proof:wellDefined})}.\qed
\end{proof}

Since the interpretation $\sem[]{\cdot}$ of a term is morally a mixed state, the
interpretation $\fsem[]{\cdot}$, which should be the density matrix of such a
state, is naturally defined using the interpretation $\sem[]{\cdot}$.
\begin{definition}
  Let $e\in\tsem A$ for some $A$, $\theta$ a valuation, and $t$ be a term such
  that $\sem t=\{(p_i,b_i,e_i)\}_i$.
  We state the convention that $\para\mapsto\sum_ip_ie_i=\sum_ip_i(\para\mapsto e_i)$.
  We define $\conv e$ and $\fsem t$ by mutual recursion as follows:
  \begin{align*}
    \conv{\rho} &=\rho\\
    \conv{\para\mapsto\sem[\theta,x=\para]t} &=\para\mapsto\fsem[\theta,x=\para]t\\
    \fsem t &=\sum_ip_i\conv{e_i}
  \end{align*}
\end{definition}

\begin{lemma}[Substitution]\label{lem:substitutionSem}
  Let $\sem r=\{(p_i,b_i,e_i)\}_i$, then
  \[
    \fsem{t[r/x]}=\sum_ip_i\fsem[\theta,x=(b_i,e_i)]t
  \]
\end{lemma}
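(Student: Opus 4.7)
The plan is to prove the lemma by induction on the structure of the term $t$, following the grammar in Table~\ref{tab:Grammar}. Since the target equality is about the density matrix interpretation $\fsem{\cdot}$, but the definition of $\fsem{\cdot}$ goes through $\sem{\cdot}$, one can either (i) prove a stronger statement at the level of $\sem{\cdot}$, namely that $\sem{t[r/x]}$ is, up to the convention $\para\mapsto\sum_ip_ie_i=\sum_ip_i(\para\mapsto e_i)$ on the third component, the ``weighted union'' $\bigcup_i\{(p_iq_{ij},c_{ij},h_{ij})\}_j$ where $\sem[\theta,x=(b_i,e_i)]{t}=\{(q_{ij},c_{ij},h_{ij})\}_j$, from which the statement follows by summing; or (ii) unfold $\fsem{t[r/x]}$ directly in each case. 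I will pursue (ii), invoking the convention about functions whenever an $e_i$ represents an abstraction's body.

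The base cases are straightforward. For $t=x$, one has $x[r/x]=r$ and $\sem[\theta,x=(b_i,e_i)]{x}=\{(1,b_i,e_i)\}$, so $\sum_ip_i\fsem[\theta,x=(b_i,e_i)]{x}=\sum_ip_i\conv{e_i}=\fsem{r}$. For the other ``leaf'' terms, namely a variable $y\neq x$, a density matrix $\rho^n$, and an annotated pair $(b^m,\rho^n)$, the substitution acts as the identity and the interpretation does not depend on $\theta(x)$, so both sides reduce to $\fsem{t}$ after using $\sum_ip_i=1$ (Lemma~\ref{lem:wellDefined}).

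For the inductive cases $\lambda y.s$, $U^ns$, $\pi^ns$ and $s_1\otimes s_2$, substitution commutes with the construction, and the IH combines with either the linearity of $\sum_ip_i(\cdot)$ or the convention $\para\mapsto\sum_ip_ie_i=\sum_ip_i(\para\mapsto e_i)$ to close the case. For instance, for $t=\lambda y.s$, and assuming $y$ fresh in $r$ by $\alpha$-conversion, one computes
\[
\fsem{\lambda y.s[r/x]}=\para\mapsto\fsem[\theta,y=\para]{s[r/x]}
=\para\mapsto\sum_ip_i\fsem[\theta,y=\para,x=(b_i,e_i)]{s}
=\sum_ip_i\fsem[\theta,x=(b_i,e_i)]{\lambda y.s},
\]
where the middle equality is the IH and the last equality is the convention on functions.

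The main obstacle will be the application case $t=t_1t_2$ and, analogously, the $\mathsf{letcase}$ construction, because $\sem{tr}$ involves a triple-indexed sum over the elements of $\sem{t}$, $\sem{r}$ and the function-evaluations $f_j(b_i,e_i)$. After substituting and unfolding, one must recognize that the outer factor $\sum_ip_i$ of the target emerges from the outcomes of $r$ propagating simultaneously through $t_1[r/x]$ and $t_2[r/x]$, which in general mixes with the sums internal to each subterm. To untangle this, the IH is applied to $t_1$ and $t_2$ separately, but --- crucially --- the statement as given concerns only $\fsem{\cdot}$, while to unfold $\sem{(t_1[r/x])(t_2[r/x])}$ one needs the corresponding decomposition of $\sem{t_i[r/x]}$ itself. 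The clean remedy is to strengthen the induction, proving mutually the auxiliary fact that $\sem{t[r/x]}$ is the weighted union described above; the lemma then follows by plugging that decomposition into $\fsem{tr}=\sum_{ijk}p_iq_jh_{ijk}\conv{g_{ijk}}$ and regrouping the outer sum over the outcomes of $r$.
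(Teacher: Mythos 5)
Your proposal is correct and follows essentially the same route as the paper: induction on the structure of $t$, strengthened by the auxiliary claim that if $\sem[\theta,x=(b_i,e_i)]{t}=\{(q_{ij},b'_{ij},\rho_{ij})\}_j$ then $\sem{t[r/x]}=\{(p_iq_{ij},b'_{ij},\rho_{ij})\}_{ij}$, which is exactly the ``weighted union'' you identify as necessary for the application and $\mathsf{letcase}$ cases. The only cosmetic difference is that the paper packages the regrouping in the application, unitary, measurement and tensor cases into five small auxiliary lemmas about $\fsem{\cdot}$, whereas you inline that computation.
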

\begin{proof}
  By induction on $t$. However, we enforce the hypothesis by also showing that
  if $\sem[\theta,x=(b_i,e_i)] t\!=\!\{(q_{ij},b'_{ij},\rho_{ij})\}_j$, then
  $\sem{t[r/x]}\!=\!\{(p_iq_{ij},b'_{ij},\rho_{ij})\}_{ij}$. We use five
  auxiliary results (cf.~{Appendix~\ref{proof:substitutionSem}}).\qed
\end{proof}

Theorem~\ref{thm:IntRed} shows how the interpretation $\fsem[]{\cdot}$ of a term
relates to all its reducts.

\begin{theorem}\label{thm:IntRed}
  If $\Gamma\vdash t:A$, $\theta\vDash\Gamma$ and $t\topr[p_i] r_i$, with
  $\sum_ip_i=1$, then $\fsem t=\sum_i p_i \fsem{r_i}$.
\end{theorem}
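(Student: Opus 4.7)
The plan is to proceed by structural induction on the derivation of $t\topr[p_i]r_i$, splitting into cases according to which rewrite rule is applied at the root (five base cases from the five axioms of Table~\ref{tab:TRS}) and which context rule is used (the seven congruence cases). The type derivation and the hypothesis $\theta\vDash\Gamma$ together with Lemma~\ref{lem:wellDefined} ensure that every interpretation encountered is well-defined, so I can manipulate the weighted multisets $\sem{\cdot}$ freely throughout.

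For the base cases I would unfold $\sem{\cdot}$ and then $\fsem{\cdot}$ on both sides. The unitary, tensor, and $(b^m,\rho^n)$-cases reduce to a one-line identification since each produces a single reduct with probability $1$. The measurement case is the place where the probabilities $p_i$ of $\topr[p_i]$ actually appear: here $\sem{\pi^m\rho^n}=\{(p_i,i,\rho^n_i)\}_i$ with $p_i=\tr(\ext{\pi_i}^\dagger\ext{\pi_i}\rho^n)$, so $\fsem{\pi^m\rho^n}=\sum_ip_i\rho^n_i=\sum_ip_i\fsem{(i,\rho^n_i)}$ immediately. The $\beta$-case and the $\mathsf{letcase}$-elimination case both reduce a construct to a substitution; for each I would invoke Lemma~\ref{lem:substitutionSem}. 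Concretely, for $\beta$: unfolding $\sem{(\lambda x.t)r}$ gives $\sum_ip_i\fsem[\theta,x=(b_i,e_i)]t$ for $\sem{r}=\{(p_i,b_i,e_i)\}_i$, which is exactly $\fsem{t[r/x]}$ by the lemma. For the $\mathsf{letcase}$-axiom, $\sem{(b^m,\rho^n)}=\{(1,b^m,\rho^n)\}$ forces the selected branch $t_{b^m}$, and the substitution lemma applied to $r=\rho^n$ gives the result.

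For the congruence cases, the key observation is that $\fsem{\cdot}$ is linear (or multilinear) in the interpretation of each immediate subterm. Specifically, $\fsem{U^n t}=\ext{U^n}\fsem{t}\ext{U^n}^\dagger$, $\fsem{\pi^m t}=\sum_i\ext{\pi_i}\fsem{t}\ext{\pi_i}^\dagger$, and $\fsem{t\otimes s}=\fsem{t}\otimes\fsem{s}$; each of these is read off the definition by interchanging the sum inside the linear operation. Feeding in the inductive hypothesis $\fsem{t}=\sum_ip_i\fsem{r_i}$ then distributes the sum through the outer operation to yield $\sum_ip_i\fsem{C[r_i]}$. The abstraction case uses the stated convention $\para\mapsto\sum_ip_ie_i=\sum_ip_i(\para\mapsto e_i)$: by IH on $t$ under every valuation $\theta,x=\para$ satisfying $\Gamma,x:A$, we pull the sum outside of $\para\mapsto\fsem[\theta,x=\para]t$. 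The $\mathsf{letcase}$-congruence uses the fact that $\sem{\letcase xr{t_0,\dots,t_{2^m-1}}}$ splits as a sum over the triples of $\sem{r}$, each selecting a branch $t_{b_i}$; since the selection depends only on the discrete $b_i$, splitting $\sem{r}$ as a convex combination of $\sem{r'_k}$ factors cleanly.

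The main obstacle, as I see it, is the application congruence (and to a lesser extent the branch-selection structure of $\mathsf{letcase}$): the interpretation $\sem{tr}$ is a double-summation involving $\sem{r}$, $\sem{t}$ and the application $f_j(b_i,e_i)$, so the naive statement ``$\fsem{\cdot}$ is linear in the left subterm'' requires strengthening the induction hypothesis to the level of $\sem{\cdot}$ (that $\sem{t}=\sum_ip_i\sem{r_i}$ as weighted multisets) and then verifying by direct inspection that the threefold product $p'_{i'}q_jh_{i'jk}$ in $\sem{ts}$ distributes over a convex decomposition of either $\sem{t}$ or $\sem{s}$. With this strengthening the result drops out; the abstraction case is the one place where the strengthened $\sem{\cdot}$-version requires using the convention to merge singletons of distinct functions, which is precisely where dropping back to $\fsem{\cdot}$ via $\conv{\cdot}$ becomes essential.
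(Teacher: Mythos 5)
Your proof follows essentially the same route as the paper's: induction on the derivation of $\topr[p]$, with the $\beta$- and $\mathsf{letcase}$-axioms discharged by Lemma~\ref{lem:substitutionSem}, the measurement axiom read off from $\sem{\pi^m\rho^n}=\{(p_i,i,\rho^n_i)\}_i$, and the congruence cases handled by the linearity facts $\fsem{U^mt}=\ext{U^m}\fsem t\ext{U^m}^\dagger$, $\fsem{\pi^mt}=\sum_i\ext{\pi_i}\fsem t\ext{\pi_i}^\dagger$ and $\fsem{t\otimes s}=\fsem t\otimes\fsem s$ (the paper isolates these as Lemmas~\ref{lem:InterpUM} and~\ref{lem:InterpTensor}) together with the convention $\para\mapsto\sum_ip_ie_i=\sum_ip_i(\para\mapsto e_i)$ for the abstraction congruence. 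A cosmetic slip: there is no ``$(b^m,\rho^n)$'' rewrite axiom --- that term is a value; the five axioms are $\beta$, unitary, measurement, tensor and $\mathsf{letcase}$.

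The one point of divergence is the application congruence, where your diagnosis is sharper than the paper's but your proposed fix does not quite close the case. The strengthened hypothesis $\sem t=\sum_ip_i\sem{r_i}$ \emph{as weighted multisets} is false in general: the identification convention only merges triples whose third components are literally equal, so for $\lambda x.s\topr[p_i]\lambda x.s_i$ the singleton $\{(1,\emptybit,\para\mapsto\sem[\theta,x=\para]s)\}$ does not decompose into the distinct functions $\para\mapsto\sem[\theta,x=\para]{s_i}$; one cannot ``merge singletons of distinct functions''. You acknowledge this, but it means the multiset-level hypothesis is unavailable precisely where you invoke it, namely when the function position of an application reduces underneath a $\lambda$. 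The resolution (which is what Lemmas~\ref{lem:InterpAbs} and~\ref{lem:InterpApp} buy the paper) is to write $\fsem{ts}$ as a sum of terms $\fsem[\theta,x=(b_j,e_j)]{r_i}$ over the elements $(q_j,b_j,e_j)$ of the argument's interpretation and to apply the induction hypothesis \emph{pointwise at the extended valuations} $\theta,x=(b_j,e_j)$, staying at the level of $\fsem{\cdot}$ rather than decomposing $\sem t$. To be fair, the paper's own write-up of this case is equally loose (it silently identifies the decomposition of $\sem s$ with the reducts $s_i$), so your instinct that this is the delicate step is right; it just needs the pointwise-$\fsem{\cdot}$ argument, not the multiset strengthening.
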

\begin{proof}
  By induction on the relation $\topr[p]${ (cf.~Appendix~\ref{proof:IntRed}).}\qed
\end{proof}

\section{Probabilistic-control calculus with no-probabilistic
  rewriting}\label{sec:qlambdens}
\subsection{Definitions}
In the previous sections we have presented an extension to lambda calculus to
handle density matrices. The calculus could have been done using just vectors,
because the output of a measurement is not given by the density matrix of the
produced mixed state, instead each possible output is given with its
probability. In this section, we give an alternative presentation, named
$\qlambdens$, where we can make the most of the density matrices setting.

In Table~\ref{tab:ExtGrammar} we give a modified grammar of terms for
$\qlambdens$ in order to allow for linear combination of terms. We follow the
grammar of the algebraic
lambda-calculi~\cite{ArrighiDowekLMCS17,AssafDiazcaroPerdrixTassonValironLMCS14,VauxMSCS09}.

\begin{table}[t]
  \begin{align*}
    t & := x\mid \lambda x.t \mid tt & \text{(Standard lambda calculus)}\\
      & \hspace{2mm}\mid \rho^n\mid U^n t\mid \pi^n t\mid t\otimes t & \text{(Quantum postulates)}\\
      & \hspace{2mm}\mid \sum_{i=1}^n p_i t_i\mid  \qletcase xr{t,\dots,t} & \text{(Probabilistic control)}
  \end{align*}
  where $p_i\in (0,1]$, $\sum_{i=1}^n p_i = 1$, and $\sum$ is considered modulo
  associativity and commutativity (cf.~for example~\cite{ArrighiDowekLMCS17}).
  \caption{Grammar of terms of $\qlambdens$.}
  \label{tab:ExtGrammar}
\end{table}

The new rewrite system is given by the non-probabilistic relation $\toD$,
described in Table~\ref{tab:ExtTRS}. The measurement does not reduce, unless it
is the parameter of a $\mathsf{letcase}^\circ$. Therefore, if only a measurement
is needed, we can encode it as:
\[
  \qletcase{x}{\pi^m\rho^n}{x,\dots,x}\toD \sum_ip_i\rho_i^n\toD \rho'
\]
where $\rho'=\sum_i\ext{\pi_i}\rho^n\ext{\pi_i}^\dagger$. The rationale is that
in this version of the calculus, we can never look at the result of a
measurement. It will always produce the density matrix of a mixed-state. As a
consequence, the $\mathsf{letcase}^\circ$ constructor rewrites to a sum of
terms.

\begin{table}[t]
  \begin{align*}
    (\lambda x.t)r &\toD t[r/x]\\
    \qletcase x{\pi^m\rho^n}{t_0,\dots,t_{2^m}-1} &\toD\sum\limits_i p_i t_i[\rho^n_i/x]
                                                  & \textrm{with }\left\{ \begin{array}{l}
                                                                            \rho_i^n=\frac{\ext{\pi_i}\rho^n\ext{\pi_i}^\dagger}{p_i}\\
                                                                            p_i=\tr(\ext{\pi_i}^\dagger\ext{\pi_i}\rho^n)
                                                                          \end{array}\right.\\
    U^m\rho^n &\toD {\rho'}^n &\textrm{with }\ext{U^m}\rho^n\ext{U^m}^\dagger={\rho'}^n\\
    \rho\otimes\rho'&\toD \rho'' & \text{with }\rho''=\rho\otimes\rho'\\
    \sum_i p_i\rho_i&\toD \rho' & \text{with }\rho'=\sum_ip_i\rho_i\\
    \sum_i p_i t &\toD t\\
    (\sum_i p_i t_i)r &\toD \sum_i p_i (t_ir)
  \end{align*}
  \[
    \infer{\lambda x.t\toD\lambda x.r}{t\toD r}\qquad \infer{ts\toD rs}{t\toD
      r}\qquad \infer{st\toD sr}{t\toD r}\qquad \infer{U^nt\toD U^nr}{t\toD
      r}\qquad
  \]
  \[
    \infer{\pi^n t\toD \pi^n r}{t\toD r}\qquad \infer{t\otimes s\toD r\otimes
      s}{t\toD r}\qquad \infer{s\otimes t\toD s\otimes r}{t\toD r}
  \]
  \[
    \infer[\scriptstyle(\forall i\neq j, t_i=r_j)]{\sum_{i=1}^n p_it_i\toD
      \sum_{i=1}^n p_ir_i} {t_j\toD r_j}
  \]
  \[
    \infer {\qletcase xt{s_0,\dots,s_{2^m-1}}\toD\qletcase
      xr{s_0,\dots,s_{2^m-1}}} {t\toD r}
  \]
  \caption{Rewrite system of $\qlambdens$.}
  \label{tab:ExtTRS}
\end{table}

The type system for $\qlambdens$, including the grammar of types and the
derivation rules, is given in Table~\ref{tab:ExtTS}. The only difference with
the type system of $\lambdens$ (cf.~Table~\ref{tab:TS}), is that rule
$\mathsf{ax}_{\mathsf{am}}$ is no longer needed, since $(b^m,\rho^n)$ is not in
the grammar of $\qlambdens$, and there is a new rule ($+$) typing the
generalised mixed states. We use the symbol $\Vdash$ for $\qlambdens$ to
distinguish it from $\vdash$ used in $\lambdens$.

\begin{table}[t]
  \[
    A:= n\mid (m,n)\mid A\multimap A
  \]
  where $m\leq n\in\mathbb N$.
  \[
    \infer[\mathsf{ax}]{\Gamma,x:A\Vdash x:A}{} \qquad
    \infer[\multimap_i]{\Gamma\Vdash\lambda x.t:A\multimap B}{\Gamma,x:A\Vdash
      t:B} \qquad \infer[\multimap_e]{\Gamma,\Delta\Vdash tr:B}{\Gamma\Vdash
      t:A\multimap B & \Delta\Vdash r:A}
  \]
  \[
    \infer[\mathsf{ax}_\rho]{\Gamma\Vdash\rho^n:n}{} \quad\
    \infer[\mathsf{u}]{\Gamma\Vdash U^mt:n}{\Gamma\Vdash t:n} \quad\
    \infer[\mathsf{m}]{\Gamma\Vdash\pi^mt:(m,n)}{\Gamma\Vdash t:n} \quad\
    \infer[\otimes]{\Gamma,\Delta\Vdash t\otimes r:n+m}{\Gamma\Vdash t:n &
      \Delta\Vdash r:m}
  \]
  \[
    \infer[\mathsf{lc}] {\Gamma\Vdash\qletcase xr{t_0,\dots,t_{2^m-1}}:A}
    {x:n\Vdash t_0:A & \dots & x:n\Vdash t_{2^m-1}:A & \Gamma\Vdash r:(m,n) }
  \]
  \[
    \infer[+]{\Gamma\Vdash\sum_{i=1}^n p_i t_i:A} { \Gamma\Vdash t_1:A &\dots&
      \Gamma\Vdash t_n:A & \sum_{i=1}^np_i=1 }
  \]
  \caption{Type system for $\qlambdens$.}
  \label{tab:ExtTS}
\end{table}

\begin{example}
  \label{ex:telepExt}
  The teleportation algorithm expressed in $\lambdens$ in
  Example~\ref{ex:teleportation}, is analogous for $\qlambdens$, only changing
  the term $\mathsf{letcase}$ by $\mathsf{letcase}^\circ$. Also, the type
  derivation is analogous. The difference is in the reduction. Let $\rho$ be the
  density matrix of a given quantum state (mixed or pure). Let
  \[
    \rho^3_0 =\rho\otimes\beta_{00}\ ,\qquad \rho^3_1 = (\cnot\otimes
    I)\rho^3_0\ ,\qquad\text{and}\qquad \rho^3_2 = (\had\otimes I\otimes
    I)\rho^3_1
  \]
  The trace of the teleportation of $\rho$ in $\lambdens$ is the following:
  \begin{align}\nonumber
    &(\lambda x.\letcase y{\pi^2(\had^1(\cnot^2(x\otimes\beta_{00})))}{y,\Z_3y,\X_3y,\Z_3\X_3y})\rho\\\nonumber
    &\topr \letcase y{\pi^2(\had^1(\cnot^2(\rho\otimes\beta_{00})))}{y,\Z_3y,\X_3y,\Z_3\X_3y}\\\nonumber
    &\topr \letcase y{\pi^2(\had^1(\cnot^2\rho^3_0))}{y,\Z_3y,\X_3y,\Z_3\X_3y}\\\nonumber
    &\topr \letcase y{\pi^2(\had^1\rho^3_1)}{y,\Z_3y,\X_3y,\Z_3\X_3y}\\
    &\topr \letcase y{\pi^2\rho^3_2}{y,\Z_3y,\X_3y,\Z_3\X_3y}\label{eq:last}
  \end{align}
  From \eqref{eq:last}, there are four possible reductions. For $i=0,1,2,3$, let
  $p_i=\tr(\ext{\pi_i}^\dagger\ext{\pi_i}\rho^3_2)$ and
  $\rho^3_{3i}=\frac{\ext{\pi_i}\rho^3_2\ext{\pi_i}^\dagger}{p_i}$. Then,
  \begin{itemize}
  \item $\eqref{eq:last}\topr[p_0] \letcase
    y{(0,\rho^3_{30})}{y,\Z_3y,\X_3y,\Z_3\X_3y} \topr \rho^3_{30}=\rho$.
  \item $\eqref{eq:last}\topr[p_1] \letcase
    y{(1,\rho^3_{31})}{y,\Z_3y,\X_3y,\Z_3\X_3y} \topr \Z_3\rho^3_{31}\topr\rho$.
  \item $\eqref{eq:last}\topr[p_2] \letcase
    y{(2,\rho^3_{32})}{y,\Z_3y,\X_3y,\Z_3\X_3y} \topr \X_3\rho^3_{32}\topr\rho$.
  \item $\eqref{eq:last}\topr[p_3] \letcase
    y{(3,\rho^3_{33})}{y,\Z_3y,\X_3y,\Z_3\X_3y} \topr
    \Z_3\X_3\rho^3_{33}\topr\rho$.
  \end{itemize}

  On the other hand, the trace of the same term, in $\qlambdens$, would be
  analogous until \eqref{eq:last}, just using $\toD$ instead of $\topr$. Then:
  \[
    \eqref{eq:last}\toD p_0\rho +p_1\Z_3\rho^3_{31} +p_2\X_3\rho^3_{32}
    +p_3\Z_3\X_3\rho^3_{33} \toD^* \sum_{i=0}^3 p_i\rho^3_{30}
    \toD(\sum_{i=0}^3p_i)\rho\toD\rho
  \]
\end{example}

\subsection{Interpretation}\label{sec:ExtDenSem}
The interpretation of $\lambdens$ given in Section~\ref{sec:DenSem} considers
already all the traces. Hence, the interpretation of $\qlambdens$ can be
obtained from a small modification of it. We only need to drop the
interpretation of the term that no longer exists, $(b^m,\rho^n)$, and add an
interpretation for the new term $\sum_ip_it_i$ as follows:
\[
  \sem{\sum_ip_it_i}= \{(p_iq_{ij},b_{ij},e_{ij})\mid \sem{t_i} =
  \{(q_{ij},b_{ij},e_{ij})\}_j\}
\]
The interpretation of $\mathsf{letcase}^\circ$ is the same as the interpretation
of $\mathsf{letcase}$.

Then, we can prove a theorem (Theorem~\ref{thm:ExtIntRed}) for $\qlambdens$
analogous to Theorem~\ref{thm:IntRed}.

We need the following auxiliary Lemmas.
\begin{lemma}\label{lem:ExtSumFsem}
  If $\Gamma\Vdash t:A$ and $\theta\vDash\Gamma$, then $\fsem{\sum_ip_it_i} =
  \sum_ip_i\fsem{t_i}$
\end{lemma}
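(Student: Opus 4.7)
The plan is to unfold the definitions of $\sem{\cdot}$ and $\fsem{\cdot}$ on $\sum_{i=1}^n p_i t_i$ and observe that the resulting double sum factors into the required weighted sum of the $\fsem{t_i}$. First, by inversion on rule $(+)$ applied to the hypothesis $\Gamma \Vdash \sum_i p_i t_i : A$, each $t_i$ satisfies $\Gamma \Vdash t_i : A$, and therefore (by the analogue of Lemma~\ref{lem:wellDefined} for $\qlambdens$) the interpretation $\sem{t_i}$ is well-defined with $\w{\sem{t_i}}=1$; write $\sem{t_i}=\{(q_{ij},b_{ij},e_{ij})\}_j$.

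Next, I apply the new clause of the interpretation to obtain
\[
  \sem{\textstyle\sum_i p_i t_i} = \{(p_i q_{ij},\, b_{ij},\, e_{ij}) \mid i,j\},
\]
and then unfold $\fsem{\cdot}$, which by definition converts each triple $(r, b, e)$ of $\sem{\cdot}$ into the contribution $r\,\conv{e}$. Hence
\[
  \fsem{\textstyle\sum_i p_i t_i} \;=\; \sum_{i,j} p_i q_{ij} \conv{e_{ij}}
  \;=\; \sum_i p_i \!\left(\sum_j q_{ij} \conv{e_{ij}}\right)
  \;=\; \sum_i p_i \fsem{t_i},
\]
which is the desired equality.

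The only subtlety to address is the convention that triples sharing the same $(b,e)$ are coalesced by summing their first components: if two different terms $t_i$ and $t_{i'}$ produce triples with identical $(b_{ij},e_{ij}) = (b_{i'j'},e_{i'j'})$, then in the computed set on the left-hand side the coefficients get added before being passed to $\fsem{\cdot}$, but this operation is exactly compatible with the linearity used to split the double sum on the right, so the identification introduces no discrepancy. I expect this bookkeeping step to be the only non-immediate point; otherwise, the proof is a direct calculation from the definitions and requires no induction.
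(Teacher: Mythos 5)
Your proof is correct and follows essentially the same route as the paper's: unfold the interpretation clause for $\sum_i p_i t_i$ and factor the resulting double sum. Your additional remarks on the coalescing convention and the use of $\conv{e_{ij}}$ are careful touches the paper glosses over, but they do not change the argument.
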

\begin{proof}
  Let $\sem{t_i}=\{(q_{ij},b_{ij},e_{ij})\}_j$. Then, we have
  $\fsem{\sum_ip_it_i}=\sum_{ij}p_iq_{ij}e_{ij}=\sum_ip_i\sum_jq_{ij}e_{ij}=\sum_ip_i\fsem{t_i}$.
  \qed
\end{proof}

\begin{lemma}
  \label{lem:ExtSubstitutionSem}
  Let $\sem r=\{(p_i,b_i,e_i)\}_i$, then
  $\fsem{t[r/x]}=\sum_ip_i\fsem[\theta,x=(b_i,e_i)]t$.
\end{lemma}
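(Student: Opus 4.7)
The plan is to adapt the proof of Lemma~\ref{lem:substitutionSem} to the modified grammar of $\qlambdens$. I proceed by induction on $t$, and I strengthen the statement in the same way as in Lemma~\ref{lem:substitutionSem}: if $\sem[\theta,x=(b_i,e_i)]{t}=\{(q_{ij},b'_{ij},\rho_{ij})\}_j$, then $\sem{t[r/x]}=\{(p_iq_{ij},b'_{ij},\rho_{ij})\}_{ij}$. The conclusion on $\fsem{\cdot}$ then follows immediately from the definition $\fsem{t[r/x]}=\sum_{ij}p_iq_{ij}\conv{\rho_{ij}}$ and regrouping by $i$.

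All cases that appear both in $\lambdens$ and $\qlambdens$ (namely $x$, $\lambda x.t$, $tt$, $\rho^n$, $U^n t$, $\pi^n t$, $t\otimes t$, and $\mathsf{letcase}^\circ$, whose interpretation coincides with that of $\mathsf{letcase}$) are handled exactly as in Lemma~\ref{lem:substitutionSem}, reusing the same auxiliary results already invoked there. In particular, the case $x=x$ is immediate from $\sem{x[r/x]}=\sem{r}=\{(p_i,b_i,e_i)\}_i$, while the $\lambda$-case uses the convention that $\para\mapsto\sum_i p_i e_i = \sum_i p_i(\para\mapsto e_i)$. The only term of $\lambdens$ that is absent here, $(b^m,\rho^n)$, simply drops from the case analysis, so there is no obstruction on that front.

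The genuinely new case is the sum $t = \sum_{j=1}^n q_j t_j$. By the definition of substitution, $(\sum_j q_j t_j)[r/x] = \sum_j q_j (t_j[r/x])$. Applying the definition of $\sem{\sum_j q_j t_j}$ and then the strengthened induction hypothesis on each $t_j$, the set $\sem{(\sum_j q_j t_j)[r/x]}$ unfolds to $\{(p_i q_j h_{ijk}, b''_{ijk}, \rho_{ijk}) : \sem[\theta,x=(b_i,e_i)]{t_j} = \{(h_{ijk}, b''_{ijk}, \rho_{ijk})\}_k\}$, which matches $\{(p_i q'_{i\ell}, b'_{i\ell}, \rho_{i\ell})\}_{i\ell}$ for $\sem[\theta,x=(b_i,e_i)]{\sum_j q_j t_j}=\{(q'_{i\ell}, b'_{i\ell}, \rho_{i\ell})\}_\ell$. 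This gives the strengthened statement, and then $\fsem{(\sum_j q_j t_j)[r/x]} = \sum_i p_i \fsem[\theta,x=(b_i,e_i)]{\sum_j q_j t_j}$ follows by regrouping, with Lemma~\ref{lem:ExtSumFsem} ensuring that the sum-of-sums collapses correctly.

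The main obstacle is a bookkeeping one rather than a conceptual one: in the sum case, three indices ($i$ ranging over $\sem{r}$, $j$ ranging over the sum, and $k$ or $\ell$ ranging over the interpretation of each summand) must be combined so that the multiplicities $p_i q_j h_{ijk}$ add up correctly, relying on the convention that $\{\dots,(p,b,e),(q,b,e),\dots\}=\{\dots,(p+q,b,e),\dots\}$. Once this reindexing is handled, the $\qlambdens$-specific case reduces to a clean application of the induction hypothesis and Lemma~\ref{lem:ExtSumFsem}.
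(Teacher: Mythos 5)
Your proposal is correct and follows essentially the same route as the paper: induction on $t$, reusing the cases of Lemma~\ref{lem:substitutionSem} verbatim, and treating the sole new case $\sum_j q_j t_j$ by distributing the substitution over the sum and combining the induction hypothesis with Lemma~\ref{lem:ExtSumFsem}. The only difference is that you also verify the strengthened set-level statement for the sum case (the paper's published argument works directly at the level of $\fsem[]{\cdot}$), which is a harmless --- indeed slightly more careful --- elaboration, since that strengthened hypothesis is what the inherited cases rely on anyway.
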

\begin{proof}
  The proof of the analogous Lemma~\ref{lem:substitutionSem} in $\lambdens$
  follows by induction on $t$. Since the definition of $\sem[]{\cdot}$ is the
  same for $\lambdens$ than for $\qlambdens$, we only need to check the only
  term of $\qlambdens$ which is not a term of $\lambdens$: $\sum_jq_jt_j$. Using
  Lemma~\ref{lem:ExtSumFsem}, and the induction hypothesis, we have
  $\fsem{(\sum_jq_jt_j)[r/x]}=\fsem{\sum_jq_j(t_j[r/x])}
  =\sum_jq_j\fsem{t_j[r/x]} =\sum_jq_j\sum_ip_i\fsem[\theta,x=(b_i,e_i)]{t_j}
  =\sum_ip_i\fsem[\theta,x=(b_i,e_i)]{\sum_jq_jt_j}$. \qed
\end{proof}

\begin{theorem}\label{thm:ExtIntRed}
  If $\Gamma\Vdash t:A$, $\theta\vDash\Gamma$ and $t\toD r$, then $\fsem t=\fsem
  r$.
\end{theorem}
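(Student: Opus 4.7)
\medskip

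\noindent\textbf{Proof plan.} I would argue by induction on the derivation of $t\toD r$, mirroring the structure of Theorem~\ref{thm:IntRed} but adapted to the non-probabilistic rewrite relation of $\qlambdens$. There are two families of cases: the base rewrite axioms of Table~\ref{tab:ExtTRS}, and the contextual rules that let $\toD$ act under a constructor.

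For the base cases, several rules are direct unfoldings of the definitions in Table~\ref{tab:IntTerms}: for $U^m\rho^n\toD\ext{U^m}\rho^n\ext{U^m}^\dagger$, for $\rho\otimes\rho'\toD\rho\otimes\rho'$, and for $\sum_ip_i\rho_i\toD\sum_ip_i\rho_i$, both $\sem t$ and $\sem r$ collapse to a singleton with the same density matrix, so $\fsem t=\fsem r$ on the nose (using Lemma~\ref{lem:ExtSumFsem} for the third). The degenerate-sum rule $\sum_ip_it\toD t$ follows immediately from $\fsem{\sum_ip_it}=\sum_ip_i\fsem t=\fsem t$ by Lemma~\ref{lem:ExtSumFsem}. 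The distributivity rule $(\sum_ip_it_i)r\toD\sum_ip_i(t_ir)$ reduces to checking that when $\sem r=\{(q_j,b_j,e_j)\}_j$ and $\sem{t_i}$ contains abstractions mapping $(b_j,e_j)$ to $\sem[\theta,x=\para[]]{s_i}$-like sets, the two composite sets of triplets agree modulo the equivalence that collapses repeated tags — and then applying $\conv{\cdot}$. The two interesting base cases are $\beta$-reduction $(\lambda x.t)r\toD t[r/x]$ and the $\qletcase$ rule: for $\beta$, I would expand $\fsem{(\lambda x.t)r}$ using the abstraction clause, obtaining $\sum_ip_i\fsem[\theta,x=(b_i,e_i)]t$ with $\sem r=\{(p_i,b_i,e_i)\}_i$, which is exactly $\fsem{t[r/x]}$ by Lemma~\ref{lem:ExtSubstitutionSem}. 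For $\qletcase x{\pi^m\rho^n}{t_0,\dots,t_{2^m-1}}$, I would first compute $\sem{\pi^m\rho^n}=\{(p_i,i,\rho_i^n)\}_i$ with $p_i=\tr(\ext{\pi_i}^\dagger\ext{\pi_i}\rho^n)$ and $\rho_i^n=\ext{\pi_i}\rho^n\ext{\pi_i}^\dagger/p_i$, plug into the letcase clause, and recognise the result as $\sum_ip_i\fsem[\theta,x=(\emptybit,\rho_i^n)]{t_i}$; on the right-hand side, Lemma~\ref{lem:ExtSumFsem} followed by Lemma~\ref{lem:ExtSubstitutionSem} rewrites $\fsem{\sum_ip_it_i[\rho_i^n/x]}$ into the same expression.

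For the contextual rules, the induction hypothesis provides $\fsem{t_j}=\fsem{r_j}$ for the reducing subterm, and I would verify that for each surrounding constructor the density matrix of the whole term depends on the subterm only through its $\fsem[]{\cdot}$. Most of these are routine: under $\lambda x.\cdot$, $U^n\cdot$, $\pi^n\cdot$, $\cdot\otimes s$ and $s\otimes\cdot$, $\qletcase x\cdot{\ldots}$, and under the sum rule (using Lemma~\ref{lem:ExtSumFsem}), this reduces to linearity of the corresponding clause in Table~\ref{tab:IntTerms}. The main obstacle is the application cases, $ts\toD rs$ and $st\toD sr$: here $\sem{ts}$ is defined by a triple summation that looks at the underlying sets $\sem t$ and $\sem s$ as collections of triplets $(p,b,e)$, not merely through their $\fsem[]{\cdot}$. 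To push the induction through, I would observe that the abstraction clause of Table~\ref{tab:IntTerms} stores the function $\para\mapsto\sem[\theta,x=\para]t$, and that $\conv{\para\mapsto\sem[\theta,x=\para]t}=\para\mapsto\fsem[\theta,x=\para]t$ by definition, so after applying $\conv{\cdot}$ the contribution of $\sem t$ collapses to $\fsem t$; symmetrically for $\sem s$ the convention $\para\mapsto\sum_ip_ie_i=\sum_ip_i(\para\mapsto e_i)$ lets the weights factor out. With these observations the induction hypothesis transfers to the full term, completing the proof. \qed
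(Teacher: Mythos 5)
Your overall strategy is the same as the paper's: induction on $\toD$, reusing the Theorem~\ref{thm:IntRed} argument for the rules shared with $\topr[1]$, direct computation for the sum rules, and Lemmas~\ref{lem:ExtSumFsem} and~\ref{lem:ExtSubstitutionSem} for the $\mathsf{letcase}^\circ$ base case (the paper routes that base case through the probabilistic rule $\letcase x{\pi^m\rho^n}{t_0,\dots,t_{2^m-1}}\topr[p_i]t_i[\rho_i^n/x]$ of $\lambdens$ and then invokes Theorem~\ref{thm:IntRed}, whereas you recompute it from the interpretation clauses; both are fine). Your treatment of the application contexts via $\conv{\cdot}$ is also consistent with what the paper does implicitly.

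There is, however, one concrete gap. You dispose of the contextual rule for $\qletcase x\cdot{s_0,\dots,s_{2^m-1}}$ among the ``routine'' cases, on the grounds that the density matrix of the whole term depends on the reducing subterm only through its $\fsem[]{\cdot}$. That is precisely the one context where this claim fails: the clause for $\mathsf{letcase}^\circ$ reads the measurement tags $b_i$ out of the individual triplets of $\sem t=\{(p_i,b_i,\rho_i)\}_i$ in order to select the branch $s_{b_i}$, so two scrutinees with equal $\fsem[]{\cdot}$ but different decompositions into tagged triplets could a priori yield different branch mixtures. Knowing only $\fsem t=\fsem r$ is not enough; you need that the decompositions $\{(p_i,b_i,\rho_i)\}_i$ and $\{(h_k,b''_k,\rho'_k)\}_k$ of $\sem t$ and $\sem r$ match up componentwise. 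The paper singles this case out and closes it by an inversion argument recovering $h_i=p_i$ and $\rho'_i=\rho_i$ from the induction hypothesis (itself a delicate step, since a density matrix does not in general determine its convex decomposition, but at least the issue is confronted). Your proof needs an explicit argument here rather than an appeal to linearity.
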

\begin{proof}
  By induction on the relation $\toD$. Rules $(\lambda x.t)r\toD t[r/x]$,
  $U^m\rho^n\toD \rho'$ and $\rho\otimes\rho'\toD\rho''$ are also valid rules
  for relation $\topr[1]$, and hence the proof of these cases are the same than
  in Theorem~\ref{thm:IntRed}. 
  \qed
\end{proof}

\section{Subject Reduction and Progress}\label{sec:correctness}
In this section we state and prove the subject reduction and progress properties
on both, $\lambdens$ and $\qlambdens$ (Theorems~\ref{thm:SR}
and~\ref{thm:Progress} respectively).

\needspace{2em}
\begin{lemma}
  [Weakening]\label{lem:weak}~
  \begin{itemize}
  \item If $\Gamma\vdash t:A$ and $x\notin FV(t)$, then $\Gamma,x:B\vdash t:A$.
  \item If $\Gamma\Vdash t:A$ and $x\notin FV(t)$, then $\Gamma,x:B\Vdash t:A$.
  \end{itemize}
\end{lemma}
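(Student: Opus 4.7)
The plan is to proceed by straightforward structural induction on the typing derivation, doing the two items in parallel since the rule schemata of $\vdash$ and $\Vdash$ coincide except for the rules $\mathsf{ax}_{\mathsf{am}}$ (present only in $\lambdens$) and $(+)$ (present only in $\qlambdens$). In each case we assume, by $\alpha$-conversion where necessary, that the fresh variable $x$ is distinct from every bound variable appearing in $t$; this is legitimate since $x\notin FV(t)$.

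For the axiom rules ($\mathsf{ax}$, $\mathsf{ax}_\rho$, $\mathsf{ax}_{\mathsf{am}}$) the claim is immediate: the conclusion has the form $\Gamma\vdash t:A$ (resp.\ $\Gamma\Vdash t:A$) where $\Gamma$ is an arbitrary context, so inserting the new assumption $x:B$ yields a valid instance of the same rule. For the single-premise rules ($\multimap_i$, $\mathsf{u}$, $\mathsf{m}$) I would apply the induction hypothesis to the unique premise and then reapply the rule. For $\multimap_i$ specifically, the hypothesis $x\notin FV(\lambda y.t)$ together with $y\neq x$ gives $x\notin FV(t)$, so the IH produces $\Gamma,y:A,x:B\vdash t:B$, and the rule then yields $\Gamma,x:B\vdash\lambda y.t:A\multimap B$.

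The multi-premise rules ($\multimap_e$, $\otimes$, $\mathsf{lc}$, and $(+)$) are where a small choice has to be made, because the context in the conclusion is split between the premises. Since $x\notin FV(t)$ implies $x$ is absent from every subterm, we may apply the IH to whichever premise we wish; concretely, for $\multimap_e$ and $\otimes$ I would extend the left premise's context with $x:B$ and leave the right premise untouched, then reapply the rule. For $\mathsf{lc}$ the branches $t_i$ are typed in the fixed context $x{:}n$ (renaming to avoid clashes), so only the premise $\Gamma\vdash r:(m,n)$ needs weakening by the IH. For the $\qlambdens$-only rule $(+)$, we have $FV(\sum_ip_it_i)=\bigcup_i FV(t_i)$, so $x$ is fresh in every $t_i$ and we apply the IH to each premise before reapplying $(+)$.

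I do not expect any real obstacle: the system is affine rather than strictly linear, and the axiom rule already admits arbitrary extra assumptions in the context, so weakening is compatible with every rule. The only point requiring minimal care is the $\alpha$-renaming step, needed so that the newly introduced variable $x$ never coincides with a bound variable of $t$ (in $\lambda y.t$, in $\letcase y r{\dots}$, and in $\qletcase y r{\dots}$).
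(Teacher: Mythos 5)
Your proposal is correct and follows the same route as the paper, which proves weakening by a straightforward induction on the typing derivation (the paper gives no further detail). Your case analysis — axioms absorb the extra assumption directly, single-premise rules go through the induction hypothesis, and split-context rules weaken one chosen premise — is exactly the expected elaboration of that argument.
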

\begin{proof}
  By a straightforward induction on the derivation of $\Gamma\vdash t:A$ and on
  $\Gamma\Vdash t:A$.\qed
\end{proof}

\begin{lemma}[Strengthening]
  \label{lem:streng}~
  \begin{itemize}
  \item If $\Gamma,x:A\vdash t:B$ and $x\notin FV(t)$, then $\Gamma\vdash t:B$.
  \item If $\Gamma,x:A\Vdash t:B$ and $x\notin FV(t)$, then $\Gamma\Vdash t:B$.
  \end{itemize}
\end{lemma}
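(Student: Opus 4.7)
The plan is to prove both items simultaneously by structural induction on the typing derivation of $\Gamma,x:A\vdash t:B$ (resp.\ $\Gamma,x:A\Vdash t:B$), since the two systems share almost all rules and the non-shared ones ($\mathsf{ax}_{\mathsf{am}}$ for $\lambdens$, $(+)$ for $\qlambdens$) are handled independently. The hypothesis $x\notin FV(t)$ propagates easily: on bound-variable constructors ($\multimap_i$, $\mathsf{lc}$, $\qletcase{}{}{}$) one assumes, by $\alpha$-conversion, that $x$ differs from the bound variable, and the condition then passes to the body; on all other constructors it passes to the immediate subterms since $FV$ distributes.

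The only cases worth thinking about are (i) the axiom rules and (ii) the rules that split the context. For $\mathsf{ax}$, the term is a variable $y$ and $x\notin FV(y)=\{y\}$ forces $y\neq x$; hence the declaration $y:B$ survives in $\Gamma$ and we re-apply $\mathsf{ax}$. For $\mathsf{ax}_\rho$ and $\mathsf{ax}_{\mathsf{am}}$ the term has no free variables and the conclusion is immediate (weakening is not even needed since the rules allow an arbitrary context). For the two-premise rules $\multimap_e$ and $\otimes$, the conclusion is $\Gamma',\Delta'\vdash t_1\,t_2:B$ (resp.\ $t_1\otimes t_2$), and since contexts are disjoint in an affine setting, $x:A$ occurs in exactly one side, say $\Gamma'=\Gamma_0,x:A$. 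From $x\notin FV(t_1\,t_2)=FV(t_1)\cup FV(t_2)$ we conclude $x\notin FV(t_1)$, so the induction hypothesis applied to the derivation of $\Gamma_0,x:A\vdash t_1:A'\multimap B$ yields $\Gamma_0\vdash t_1:A'\multimap B$, and re-applying $\multimap_e$ with the unchanged derivation of $\Delta'\vdash t_2:A'$ gives $\Gamma_0,\Delta'\vdash t_1\,t_2:B$. The $\otimes$ case is identical.

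The $\mathsf{lc}$ rule is even easier because the branches $t_0,\dots,t_{2^m-1}$ are typed in the singleton context $x:n$ (where here $x$ is the binder of $\mathsf{letcase}$, not the strengthened variable), so the variable we are removing can only have come from the context of $r$; a single application of the induction hypothesis to the premise for $r$ suffices. The $(+)$ rule of $\qlambdens$ is handled by applying the induction hypothesis to each $t_i$, using $FV(\sum_i p_i t_i)=\bigcup_i FV(t_i)$. All remaining unary rules ($\mathsf{u}$, $\mathsf{m}$) are routine applications of the induction hypothesis to the unique premise.

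I do not expect a genuine obstacle: the affine discipline, rather than complicating strengthening, actually helps, because the disjointness of the contexts in the split rules removes any ambiguity about where $x:A$ lives. The only thing to be careful about is the bookkeeping in the bound-variable cases, which is standard and only requires the usual implicit $\alpha$-conversion assumption that the strengthened variable is distinct from binders.
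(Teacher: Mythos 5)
Your proposal is correct and takes essentially the same route as the paper, which proves the lemma by a straightforward induction on the typing derivation and leaves the case analysis implicit. Your treatment of the axiom rules, the disjoint context split in $\multimap_e$ and $\otimes$, and the binder cases matches what that induction requires.
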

\begin{proof}
  By a straightforward induction on the derivation of $\Gamma,x:A\vdash t:B$ and
  $\Gamma,x:A\Vdash t:B$.\qed
\end{proof}

\begin{lemma}[Substitution]\label{lem:substitution}~
  \begin{itemize}
  \item If $\Gamma,x:A\vdash t:B$ and $\Delta\vdash r:A$ then
    $\Gamma,\Delta\vdash t[r/x]:B$.
  \item If $\Gamma,x:A\Vdash t:B$ and $\Delta\Vdash r:A$ then
    $\Gamma,\Delta\Vdash t[r/x]:B$.
  \end{itemize}
\end{lemma}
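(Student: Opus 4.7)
The plan is to proceed by induction on the typing derivation of $\Gamma,x:A\vdash t:B$; the $\qlambdens$ statement is proved by an identical induction on $\Gamma,x:A\Vdash t:B$, just replacing $\vdash$ with $\Vdash$ and adding one extra case for the $(+)$ rule. A preliminary reduction handles the situation where $x\notin FV(t)$: Lemma~\ref{lem:streng} gives $\Gamma\vdash t:B$, and iterated application of Lemma~\ref{lem:weak} across the variables of $\Delta$ (all fresh with respect to $FV(t)$) yields $\Gamma,\Delta\vdash t:B$; since $t[r/x]=t$ in this case, the goal follows. In what follows I assume $x\in FV(t)$ and case-split on the last rule used.

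For $(\mathsf{ax})$, the assumption $x\in FV(t)$ forces $t=x$, so $t[r/x]=r$ and $B=A$; iteratively weakening $\Delta\vdash r:A$ by the variables of $\Gamma$ gives $\Gamma,\Delta\vdash r:A$. The closed-term axioms $(\mathsf{ax}_\rho)$ and $(\mathsf{ax}_{\mathsf{am}})$ have $FV(t)=\emptyset$ and therefore already fell under the preliminary step. The unary cases $(\multimap_i)$, $(\mathsf{u})$, $(\mathsf{m})$ are direct: after alpha-renaming the bound variable of any abstraction so as to avoid $x$ and $FV(r)$, apply the induction hypothesis to the single premise and re-apply the same rule. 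For $(\mathsf{lc})$, the subterms $t_0,\dots,t_{2^m-1}$ are typed in the one-variable context $y:n$ (where $y$ is the bound letcase variable, renamed to avoid clashes), hence $x\notin FV(t_i)$ and $t_i[r/x]=t_i$; substitution propagates only into the scrutinee via the induction hypothesis, and reapplying $(\mathsf{lc})$ closes the case. The $\qlambdens$-only rule $(+)$ is immediate: each premise $\Gamma,x:A\Vdash t_i:A$ is handled by the induction hypothesis, producing $\Gamma,\Delta\Vdash t_i[r/x]:A$, after which $(+)$ reassembles $\Gamma,\Delta\Vdash \sum_i p_i\, t_i[r/x]:A$.

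The real work is in the binary, context-splitting rules $(\multimap_e)$ and $(\otimes)$. Consider $t=t_1 t_2$ typed from $\Gamma_1\vdash t_1:A'\multimap B$ and $\Gamma_2\vdash t_2:A'$ with $\Gamma_1,\Gamma_2=\Gamma,x:A$. Since the two premise contexts are disjoint and $x\in FV(t_1 t_2)$, the variable $x$ lies in exactly one of $\Gamma_1,\Gamma_2$; without loss of generality $\Gamma_1=\Gamma_1',x:A$, so that $\Gamma=\Gamma_1',\Gamma_2$. The induction hypothesis yields $\Gamma_1',\Delta\vdash t_1[r/x]:A'\multimap B$; since $x\notin FV(t_2)$ we have $t_2[r/x]=t_2$, and re-applying $(\multimap_e)$ against $\Gamma_2\vdash t_2:A'$ produces $\Gamma_1',\Gamma_2,\Delta\vdash (t_1 t_2)[r/x]:B$, i.e.\ $\Gamma,\Delta\vdash t[r/x]:B$. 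The symmetric sub-case (with $x\in\Gamma_2$) and the $(\otimes)$ case are identical up to swapping the two premises.

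The main obstacle is exactly this bookkeeping in the splitting rules: one must ensure that $\Delta$ is attached only to the side that actually receives the substitution, so that no variable of $\Delta$ is duplicated across the two subderivations. This is automatic here because the induction hypothesis appends $\Delta$ to a single premise, while the other premise keeps its original context unchanged, so the affine discipline of the type system is preserved throughout.
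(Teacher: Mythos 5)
Your proposal is correct and follows essentially the same route as the paper's proof: a syntax-directed induction (the paper phrases it as induction on $t$, which coincides with induction on the typing derivation here), using weakening and strengthening for the cases where $x$ does not occur free, and splitting the context so that $\Delta$ is attached only to the premise receiving the substitution in the $\multimap_e$ and $\otimes$ cases. Your up-front factoring of the $x\notin FV(t)$ situation is a minor organizational tidying of what the paper does case by case, not a different argument.
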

\begin{proof}
  By induction on $t${ (cf.~Appendix~\ref{proof:substitution})}.
  \qed
\end{proof}

\begin{theorem}
  [Subject reduction]
  \label{thm:SR}~
  \begin{itemize}
  \item If $\Gamma\vdash t:A$, and $t\topr[p] r$, then $\Gamma\vdash r:A$.
  \item If $\Gamma\Vdash t:A$, and $t\toD r$, then $\Gamma\Vdash r:A$.
  \end{itemize}
\end{theorem}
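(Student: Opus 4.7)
The plan is to prove both items by induction on the derivation of the reduction, $t\topr[p]r$ and $t\toD r$ respectively, doing case analysis on the rule used at the root. The congruence cases (application, abstraction, tensor, unitary, measurement, letcase, and, in $\qlambdens$, the sum congruence) are routine: apply the inductive hypothesis to the immediate subterm being reduced and re-apply the same typing rule that formed $t$, possibly after inverting that typing rule to expose the subterm's type.

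For the head-reduction rules common to both calculi, I would proceed as follows. For $(\lambda x.t)r\to t[r/x]$, invert the typing derivation to obtain $\Gamma\vdash\lambda x.t:A\multimap B$ and $\Delta\vdash r:A$, then one more inversion gives $\Gamma,x:A\vdash t:B$, and the Substitution Lemma (Lemma~\ref{lem:substitution}) yields $\Gamma,\Delta\vdash t[r/x]:B$; the same argument works for $\toD$. For $U^m\rho^n\to{\rho'}^n$ and $\rho\otimes\rho'\to\rho''$, the reducts are again density matrices of the appropriate dimension, so rule $\mathsf{ax}_\rho$ gives the required type directly. For $\pi^m\rho^n\topr[p_i](i,\rho^n_i)$ in $\lambdens$, the LHS has type $(m,n)$ by rule $\mathsf m$, and rule $\mathsf{ax}_{\mathsf{am}}$ gives the RHS the same type.

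The most delicate case in $\lambdens$ is the letcase reduction $\letcase{x}{(b^m,\rho^n)}{t_0,\dots,t_{2^m-1}}\topr t_{b^m}[\rho^n/x]$: inverting rule $\mathsf{lc}$ we obtain $x:n\vdash t_i:A$ for every $i$ (in particular for $i=b^m$) together with $\Gamma\vdash(b^m,\rho^n):(m,n)$; since $\mathsf{ax}_\rho$ gives $\vdash\rho^n:n$, the Substitution Lemma produces $\Gamma\vdash t_{b^m}[\rho^n/x]:A$ (combining with Strengthening/Weakening if context bookkeeping is needed).

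The new work lies in the $\qlambdens$-specific rules. For $\qletcase{x}{\pi^m\rho^n}{t_0,\dots,t_{2^m-1}}\toD\sum_i p_i\,t_i[\rho^n_i/x]$, invert $\mathsf{lc}$ to get $x:n\Vdash t_i:A$ for each $i$ and note that each $\rho^n_i$ is a density matrix of $n$ qubits, hence $\Vdash\rho^n_i:n$ by $\mathsf{ax}_\rho$; the Substitution Lemma gives $\Gamma\Vdash t_i[\rho^n_i/x]:A$, and rule $(+)$ closes the case, using $\sum_ip_i=1$ which follows from $\tr(\rho^n)=1$. For $\sum_ip_i\rho_i\toD\rho'$, inversion of $(+)$ gives each $\rho_i:n$, so the resulting density matrix has type $n$ by $\mathsf{ax}_\rho$. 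For $\sum_ip_it\toD t$, inversion of $(+)$ gives $\Gamma\Vdash t:A$ directly. For $(\sum_ip_it_i)r\toD\sum_ip_i(t_ir)$, invert $\multimap_e$ and then $(+)$ to obtain $\Gamma\Vdash t_i:A\multimap B$ for each $i$ and $\Delta\Vdash r:A$; apply $\multimap_e$ to each pair to get $\Gamma,\Delta\Vdash t_ir:B$, then $(+)$ to type the sum. I expect the main obstacle to be this last case: one must be careful that the affine context $\Delta$ of $r$ is legitimately reused across all summands, which is harmless here because the original term already consumed $\Delta$ only once under the application, and the sum construct replicates $r$ syntactically under the same aggregate probability $\sum_ip_i=1$; I would formalise this by arguing that in the $(+)$ derivation every branch types $t_ir$ in the same $\Gamma,\Delta$, which is permitted since each branch is an independent typing obligation rather than a parallel consumption.
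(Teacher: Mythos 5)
Your proposal is correct and follows essentially the same route as the paper's proof: induction on the reduction relations with inversion of the typing rules, the Substitution Lemma for the $\beta$, $\mathsf{letcase}$ and $\mathsf{letcase}^\circ$ redexes, $\mathsf{ax}_\rho$ (resp.\ $\mathsf{ax}_{\mathsf{am}}$) for the quantum head reductions, and rules $(+)$ and $\multimap_e$ for the sum-specific cases of $\qlambdens$. Your extra remarks --- that $\sum_i p_i=1$ follows from $\tr(\rho^n)=1$, and that reusing $\Delta$ across the summands of $(\sum_i p_i t_i)r$ is legitimate because each premise of $(+)$ is an independent typing obligation --- are both correct and merely make explicit what the paper leaves implicit.
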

\begin{proof}
  By induction on the relations $\topr[p]$ and $\toD${ (cf.
    Appendix~\ref{proof:SR})}. \qed
\end{proof}

\begin{definition}[Values]~
  \begin{itemize}
  \item A value in $\lambdens$ is a term $v$ defined by the following grammar:
    \begin{align*}
      w &:= x\mid \lambda x.v \mid w\otimes w\\
      v &:= w\mid\rho^n\mid (b^m,\rho^n)
    \end{align*}
  \item A value in $\qlambdens$ (or \qvalue) is a term $v$ defined by the
    following grammar:
    \begin{align*}
      w &:= x\mid\lambda x.v \mid w\otimes w\mid \sum_ip_iw_i\textrm{ with }w_i\neq w_j\textrm{ if }i\neq j\\
      v &:= w\mid \rho^n
    \end{align*}
  \end{itemize}
\end{definition}

\begin{lemma}
  \label{lem:values}~
  \begin{enumerate}
  \item If $v$ is a value, then there is no $t$ such that $v\topr[p] t$ for any
    $p$.
  \item If $v$ is a \qvalue, then there is no $t$ such that $v\toD t$.
  \end{enumerate}
\end{lemma}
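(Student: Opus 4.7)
The plan is to proceed by mutual structural induction on the value and \qvalue{} grammars (simultaneously on $w$ and $v$), showing in each case that no rewrite rule in Table~\ref{tab:TRS} for part~(1), or Table~\ref{tab:ExtTRS} for part~(2), can fire, neither at the root nor through a congruence rule.

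First I would dispatch the leaf cases. For a variable $x$, no rewrite rule has a variable as its entire left-hand side, so no reduction is possible. For $\rho^n$ (and, in $\lambdens$, for $(b^m,\rho^n)$) the rules that mention these forms -- namely $U^m\rho^n$, $\pi^m\rho^n$, $\rho\otimes\rho'$, and the $\letcase{\cdot}{(b^m,\rho^n)}{\cdot}$ elimination -- all require an outer context which is absent; hence no root rule fires and no congruence applies. For $\lambda x.v$ the only potentially applicable rule is the congruence $\lambda x.t\topr[p]\lambda x.r$ (resp.\ $\toD$), and the induction hypothesis on $v$ rules this out.

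The two interesting cases are $w\otimes w'$ and, for part~(2), $\sum_i p_i w_i$. For $w\otimes w'$ the congruence rules are excluded by the induction hypothesis on $w$ and $w'$; the root rule $\rho\otimes\rho'\topr\rho''$ requires both factors to be of the form $\rho^n$, but the grammar of $w$ (namely $w:=x\mid \lambda x.v\mid w\otimes w$) deliberately excludes $\rho^n$, so the pattern does not match. A symmetric argument handles the $\qvalue{}$ $\sum_i p_i w_i$ with pairwise distinct $w_i$: the congruence rule needs some $w_j\toD r_j$, ruled out by the induction hypothesis; the absorption $\sum_i p_i t\toD t$ needs all $w_i$ equal, ruled out by the distinctness side condition; and the collapse $\sum_i p_i\rho_i\toD\rho'$ needs each summand to be a density matrix, again excluded by the grammar of $w$.

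The main obstacle is essentially a bookkeeping one: I must exploit the fact that the subterms under $\otimes$ and under $\sum$ range over the more restrictive class $w$ rather than $v$, so that density matrices cannot slip in and trigger the root rules $\rho\otimes\rho'\topr\rho''$ or $\sum_i p_i\rho_i\toD\rho'$. Once that distinction between $w$ and $v$ is carefully tracked, every case either collapses to a one-line application of the induction hypothesis or a one-line inspection that no rewrite rule has a matching left-hand side.
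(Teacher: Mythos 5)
Your proposal is correct and follows essentially the same route as the paper's proof: structural induction over the value grammar, checking in each case that neither a root rule nor a congruence rule can fire, with the key observations being that the subterms of $\otimes$ and $\sum$ range over $w$ (which excludes $\rho^n$, blocking the root rules $\rho\otimes\rho'\topr\rho''$ and $\sum_ip_i\rho_i\toD\rho'$) and that the pairwise-distinctness side condition blocks $\sum_ip_it\toD t$. Your treatment of the $\sum_ip_iw_i$ case is in fact slightly more thorough than the paper's, which only mentions the distinctness condition.
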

\begin{proof}
  By induction on $v$ in both cases{ (cf.~Appendix~\ref{proof:values})}.\qed
\end{proof}

\needspace{2em}
\begin{theorem}
  [Progress]\label{thm:Progress}~
  \begin{enumerate}
  \item If $\vdash t:A$, then either $t$ is a value or there exist $n$,
    $p_1,\dots,p_n$, and $r_1,\dots,r_n$ such that $t\topr[p_i]r_i$.
  \item If $\Vdash t:A$ and $A\neq(m,n)$, then either $t$ is a \qvalue\ or there
    exists $r$ such that $t\toD r$.
  \end{enumerate}
\end{theorem}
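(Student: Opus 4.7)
I would prove both parts by structural induction on $t$ (equivalently, on the typing derivation), supported by a canonical-forms lemma: in the empty context, a value of type $n$ is $\rho^n$, a value of arrow type $A\multimap B$ is a $\lambda$-abstraction, and a value of type $(m,n)$ is $(b^m,\rho^n)$ in $\lambdens$ while no closed \qvalue\ inhabits $(m,n)$ in $\qlambdens$. For part~(1), $\rho^n$ and $(b^m,\rho^n)$ are values at once. In the $\lambda x.s$ case, the IH on $s$ either yields a value, making $\lambda x.s$ itself a value of shape $\lambda x.v$, or yields a reduct that lifts through the contextual rule. Applications $tr$ reduce either via a contextual rule or, when both sides are values, via the $\beta$-rule after canonical forms force the left-hand side to be a $\lambda$-abstraction. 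The quantum constructors $U^n s$, $\pi^n s$, and $s_1\otimes s_2$ reduce by IH on their subterms; when the subterms are values of type $n$ (or $n_i$), canonical forms collapse them to density matrices and the direct rewrite rule fires. The $\mathsf{letcase}$ case is analogous, using canonical forms on the scrutinee at type $(m,n)$ to extract $(b^m,\rho^n)$ before firing the letcase-reduction.

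Part~(2) proceeds similarly with two modifications. The new sum case $\sum_i p_i t_i$ uses the IH on each summand (each has the same type $A\neq(m,n)$, so the IH is available): if any summand reduces, use the sum's contextual rule; otherwise all summands are \qvalue s, and the result is either already a \qvalue\ (distinct summands) or collapses via $\sum_i p_i t\toD t$ or $\sum_i p_i \rho_i\toD \rho'$. The $\mathsf{letcase}^\circ$ case is the delicate one: its scrutinee $r$ has the excluded type $(m,n)$, so the IH does not apply to $r$ directly, yet no closed \qvalue\ inhabits $(m,n)$ either. I would handle this by a sub-induction on the shape of $r$: a closed well-typed term of type $(m,n)$ must be $\pi^m s$, an application, a nested $\mathsf{letcase}^\circ$, or a sum, and each sub-case either reduces by the IH plus a contextual rule or eventually yields $r=\pi^m\rho^n$, triggering the main $\mathsf{letcase}^\circ$ rewrite.

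The main obstacle is the $\lambda$-case in both parts: the IH on the body must accommodate open terms even though the theorem is stated for closed $t$. The standard remedy is to strengthen the induction hypothesis to cover typed open terms (with an appropriate treatment of neutral forms containing free variables) and to track which context shapes actually arise beneath closed well-typed terms. A secondary difficulty, specific to part~(2), is the sub-induction on the scrutinee $r:(m,n)$ for the $\mathsf{letcase}^\circ$ case, which must rule out stuck configurations that the main hypothesis excludes from progress for the top-level term but which might otherwise appear as nested sub-scrutinees.
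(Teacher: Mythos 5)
Your proposal follows essentially the same route as the paper: the paper likewise strengthens the statement to open terms, replacing the two-way disjunction by the trichotomy ``$t$ is a value, $t$ reduces, or $t$ contains a free variable and does not rewrite,'' and then inducts on the typing derivation, using inline canonical-forms reasoning (which values can inhabit $n$, $(m,n)$, $A\multimap B$) exactly where you invoke your canonical-forms lemma. The one place where you genuinely diverge is the $\mathsf{letcase}^\circ$ case of part~(2). The paper's strengthened statement for $\qlambdens$ silently drops the side condition $A\neq(m,n)$, so it can apply the induction hypothesis directly to the scrutinee $r:(m,n)$; but this forces it to account for the closed term $\pi^m\rho^n$ of type $(m,n)$, which is neither a \qvalue\ nor the redex of any rule in Table~\ref{tab:ExtTRS} (the measurement only fires as the argument of a $\mathsf{letcase}^\circ$), and the appendix resolves this by asserting a step $\pi^m\rho^n\toD\rho'$ that does not appear in the rewrite system. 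Your alternative---keeping the exclusion of $(m,n)$ in the main induction and running a sub-induction on the shape of the scrutinee, so that $\pi^m\rho^n$ is the configuration that triggers the top-level $\mathsf{letcase}^\circ$ rule rather than a term that reduces on its own---is more faithful to the rules as stated, at the cost of the extra bookkeeping you already identify (applications, sums and nested $\mathsf{letcase}^\circ$'s as scrutinees, each dispatched by the outer induction hypothesis plus a contextual rule). Both routes establish the theorem; yours localises the $(m,n)$ anomaly in the $\mathsf{letcase}^\circ$ case instead of absorbing it into the strengthened invariant.
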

\begin{proof}
  We relax the hypotheses and prove the theorem for open terms as well. That is:
  \begin{enumerate}
  \item If $\Gamma\vdash t:A$, then either $t$ is a value, there exist $n$,
    $p_1,\dots,p_n$, and $r_1,\dots,r_n$ such that $t\topr[p_i]r_i$, or $t$
    contains a free variable, and $t$ does not rewrite.
  \item If $\Gamma\Vdash t:A$, then either $t$ is a \qvalue, there exists $r$
    such that $t\toD r$, or $t$ contains a free variable, and $t$ does not
    rewrite.
  \end{enumerate}
  In both cases, we proceed by induction on the type derivation{
    (cf.~Appendix~\ref{proof:Progress})}.\qed
\end{proof}

\section{Examples}\label{sec:Examples}
\begin{example}\label{ex:example}
  Consider the following experiment: Measure some $\rho$ and then toss a coin to
  decide whether to return the result of the measurement, or to give the result
  of tossing a new coin.
  \subsubsection*{The experiment in $\lambdens$.}
  This experiment can be implemented in $\lambdens$ as follows:
  \[
    \begin{split}
      (\letcase y{\pi^1\ket +\bra +}{\lambda x.x,\lambda x.\letcase w{\pi^1\ket +&\bra +}{w,w}})\\
      &(\letcase z{\pi^1\rho}{z,z})
    \end{split}
  \]
  \paragraph*{Trace:}
  We give one possible probabilistic trace. Notice that, by using different
  strategies, we would get different derivation trees. We will not prove
  confluence in this setting (cf.~\cite{DiazcaroMartinezLSFA17} for a full
  discussion on the notion of confluence of probabilistic rewrite systems), but
  we conjecture that such a property is meet.

  We use the following notations:
  \begin{align*}
    s &= \pi^1\ket +\bra +\\
    t_0 &= \lambda x.x\\
    t_1 &=\lambda x.\letcase ws{w,w}\\
    \rho &=\frac 34\ket 0\bra 0+\frac{\sqrt 3}4\ket 0\bra 1+\frac{\sqrt 3}4\ket 1\bra 0+\frac 14\ket 1\bra 1\\
    r_1 &=\letcase ys{t_0,t_1}\\
    r_2 &=\letcase z{\pi^1\rho}{z,z}\\
    \mathsf l_x &= \letcase y{(x,\ket x\bra x)}{y,y}\textrm{ with }x=0,1\\
    r_1^x &= \letcase y{(x,\ket x\bra x)}{t_0,t_1}\textrm{ with }x=0,1
  \end{align*}

  Using this notation, the probabilistic trace is given by the tree in
  Table~\ref{tab:Trace}. Therefore, with probability $\frac 58$ we get $\ket
  0\bra 0$, and with probability $\frac 38$ we get $\ket 1\bra 1$. Thus, the
  density matrix of this mixed state is $\frac 58\ket 0\bra 0+\frac 38\ket 1\bra
  1$.
  \begin{table}[t]
    \begin{center}
      \scalebox{.974}{ \parbox{1.01\textwidth}{\centering
          \begin{tikzpicture}
            \node (A) at (0,0) {$r_1r_2$}; \node (Ao) at (-2.75,-1) {$r_1\mathsf
              l_0$}; \node (B) at (-2.75,-2) {$r_1\ket 0\bra 0$}; \node (Bo) at
            (-4.25,-3) {$r_1^0\ket 0\bra 0$}; \node (Bt) at (-1.25,-3)
            {$r_1^1\ket 0\bra 0$}; \node (D) at (-4.25,-4) {$t_0\ket 0\bra 0$};
            \node (Db) at (-4.25,-5) {$\ket 0\bra 0$}; \node (E) at (-1.25,-4)
            {$t_1\ket 0\bra 0$}; \node (Hb) at (-2.25,-6) {$\mathsf l_0$}; \node
            (H) at (-2.25,-7) {$\ket 0\bra 0$}; \node (Ib) at (-.25,-6)
            {$\mathsf l_1$}; \node (I) at (-.25,-7) {$\ket 1\bra 1$}; \node (J)
            at (-1.25,-5) {$\letcase ys{y,y}$};
            \node (At) at (2.75,-1) {$r_1\mathsf l_1$}; \node (C) at (2.75,-2)
            {$r_1\ket 1\bra 1$}; \node (Co) at (1.25,-3) {$r_1^0\ket 1\bra 1$};
            \node (F) at (1.25,-4) {$t_0\ket 1\bra 1$}; \node (Fb) at (1.25,-5)
            {$\ket 1\bra 1$}; \node (Ct) at (4.25,-3) {$r_1^1\ket 1\bra 1$};
            \node (G) at (4.25,-4) {$\letcase ys{y,y}$}; \node (Kb) at (3.25,-6)
            {$\mathsf l_0$}; \node (K) at (3.25,-7) {$\ket 0\bra 0$}; \node (Lb)
            at (5.25,-6) {$\mathsf l_1$}; \node (L) at (5.25,-7) {$\ket 1\bra
              1$}; \node (M) at (4.25,-5) {$t_1\ket 1\bra 1$}; \draw[thick,->]
            (A) -- (Ao) node[midway,sloped,above] {$\frac 34$}; \draw[thick,->]
            (Ao) -- (B) node[midway,right] {$\scriptstyle 1$}; \draw[thick,->]
            (B) -- (Bo) node[midway,sloped,rotate=270,left] {$\frac 12$};
            \draw[thick,->] (Bo) -- (D) node[midway,right] {$\scriptstyle 1$};
            \draw[thick,->] (D) -- (Db) node[midway,right] {$\scriptstyle 1$};
            \draw[thick,->] (J) -- (Hb) node[midway,sloped,rotate=270,left]
            {$\frac 12$}; \draw[thick,->] (Hb) -- (H) node[midway,right]
            {$\scriptstyle 1$}; \draw[thick,->] (B) -- (Bt)
            node[midway,sloped,rotate=90,right] {$\frac 12$}; \draw[thick,->]
            (Bt) -- (E) node[midway,right] {$\scriptstyle 1$}; \draw[thick,->]
            (E) -- (J) node[midway,right] {$\scriptstyle 1$}; \draw[thick,->]
            (J) -- (Ib) node[midway,sloped,rotate=90,right] {$\frac 12$};
            \draw[thick,->] (Ib) -- (I) node[midway,right] {$\scriptstyle 1$};
            \draw[thick,->] (A) -- (At) node[midway,sloped,above] {$\frac 14$};
            \draw[thick,->] (At) -- (C) node[midway,right] {$\scriptstyle 1$};
            \draw[thick,->] (C) -- (Ct) node[midway,sloped,rotate=90,right]
            {$\frac 12$}; \draw[thick,->] (Ct) -- (G) node[midway,right]
            {$\scriptstyle 1$}; \draw[thick,->] (G) -- (M) node[midway,right]
            {$\scriptstyle 1$}; \draw[thick,->] (C) -- (Co)
            node[midway,sloped,rotate=270,left] {$\frac 12$}; \draw[thick,->]
            (Co) -- (F) node[midway,right] {$\scriptstyle 1$}; \draw[thick,->]
            (F) -- (Fb) node[midway,right] {$\scriptstyle 1$}; \draw[thick,->]
            (M) -- (Kb) node[midway,sloped,rotate=270,left] {$\frac 12$};
            \draw[thick,->] (Kb) -- (K) node[midway,right] {$\scriptstyle 1$};
            \draw[thick,->] (M) -- (Lb) node[midway,sloped,rotate=90,right]
            {$\frac 12$}; \draw[thick,->] (Lb) -- (L) node[midway,right]
            {$\scriptstyle 1$};
          \end{tikzpicture}
        } }
    \end{center}
    \caption{Trace of the $\lambdens$ term implementing the experiment of
      Example~\ref{ex:example}.}
    \label{tab:Trace}
  \end{table}

  \paragraph*{Typing:}
  \begin{equation}
    \label{eq:t0}
    \infer[\multimap_i]
    {y:1\vdash\lambda x.\letcase w{\pi^1\ket +\bra +}{w,w}:1\multimap 1}
    {
      \infer[\mathsf{lc}]
      {y:1,x:1\vdash\letcase w{\pi^1\ket +\bra +}{w,w}:1}
      {
        \infer[\mathsf{ax}]{y:1,x:1,w:1\vdash w:1}{}
        &
        \infer[\mathsf{ax}]{y:1,x:1,w:1\vdash w:1}{}
        &
        \infer[\mathsf{m}]
        {\vdash\pi^1\ket +\bra +:(1,1)}
        {
          \infer[\mathsf{ax}_\rho]{\vdash\ket +\bra +:1}{}
        }
      }
    }
  \end{equation}
  \begin{equation}
    \label{eq:letcase}
    \infer[\mathsf{lc}]
    {\vdash\letcase y{\pi^1\ket +\bra +}{t_0,t_1}:1\multimap 1}
    {
      \infer[\multimap_i]
      {y:1\vdash\lambda x.x:1\multimap 1}
      {\infer[\mathsf{ax}]{y:1,x:1\vdash x:1}{}}
      &
      \infer[\eqref{eq:t0}]
      {y:1\vdash t_1:1\multimap 1}
      {\vdots}
      &
      \infer[\mathsf{m}]
      {\vdash\pi^1\ket +\bra +:(1,1)}
      {\infer[\mathsf{ax}_\rho]{\vdash\ket +\bra +:1}{}}
    }
  \end{equation}

  \noindent\scalebox{.95}{
    \parbox{\textwidth}{
      \[
        \infer[\multimap_e] {\vdash(\letcase y{\pi^1\ket +\bra
            +}{t_0,t_1})(\letcase z{\pi^1\rho}{z,z}):1} {
          \infer[\eqref{eq:letcase}] {\vdash\letcase y{\pi^1\ket +\bra
              +}{t_0,t_1}:1\multimap 1} {\vdots} & \infer[\mathsf{lc}]
          {\vdash\letcase z{\pi^1\rho}{z,z}:1} { \infer[\mathsf{ax}]{z:1\vdash
              z:1}{} &
            \infer[\mathsf{m}]{\vdash\pi^1\rho:(1,1)}{\infer[\mathsf{ax}_\rho]{\vdash\rho:1}{}}
          } }
      \]
    } }

  \paragraph*{Interpretation:}
  \begin{align*}
    &\sem[\emptyset]s =\{(\frac 12,0,\ket 0\bra 0),(\frac 12,1,\ket 1\bra 1)\}\\
    &\sem[y=(\emptybit,\ket 0\bra 0)]{t_0} =\{(1,\emptybit,\para\mapsto\{(1,\mathsf b,\mathsf e)\})\}\\
    &\sem[y=(\emptybit,\ket 1\bra 1)]{t_1} =\{(1,\emptybit,\para\mapsto\{(\frac 12,\emptybit,\ket 0\bra 0),(\frac 12,\emptybit,\ket 1\bra 1)\})\}\\
    &\sem[\emptyset]{r_1} =\{(\frac 12,\emptybit,\para\mapsto \{(\frac 12,\emptybit,\ket 0\bra 0),(\frac 12,\emptybit,\ket 1\bra 1)\}),(\frac 12,\emptybit,\para\mapsto\{(1,\mathsf b,\mathsf e)\} )\}\\
    &\sem[\emptyset]{\pi^1\rho} =\{(\frac 34,0,\ket 0\bra 0),(\frac 14,1,\ket 1\bra 1)\}\\
    &\sem[\emptyset]{r_2} =\{(\frac 34,\emptybit,\ket 0\bra 0),(\frac 14,\emptybit,\ket 1\bra 1)\}\\
  \end{align*}
  Then,
  \begin{align*}
    \sem[\emptyset]{r_1r_2} =
    \{&(\frac 3{16},\emptybit,\ket 0\bra 0),(\frac 1{16},\emptybit,\ket 0\bra
        0),(\frac 3{16},\emptybit,\ket 1\bra 1),\\
      &(\frac 1{16},\emptybit,\ket 1\bra 1),(\frac 38,\emptybit,\ket 0\bra 0),(\frac 18,\emptybit,\ket 1\bra 1)\}
  \end{align*}
  Hence,
  \begin{align*}
    \fsem[\emptyset]{r_1r_2} &
                               = \frac 3{16}\ket 0\bra 0 +\frac 1{16}\ket 0\bra 0+\frac 3{16}\ket 1\bra 1 +\frac 1{16}\ket 1\bra 1 +\frac 38\ket 0\bra 0+\frac 18\ket1\bra 1\\
                             &=\frac 58\ket 0\bra 0+\frac 38\ket 1\bra 1
  \end{align*}

  \subsubsection*{The experiment in $\qlambdens$.} In $\qlambdens$, the example
  becomes:
  \[
    \begin{split}
      t:=(\qletcase y{\pi^1\ket +\bra +}{\lambda x.x,\lambda x.\qletcase w{&\pi^1\ket +\bra +}{w,w}})\\
      &(\qletcase z{\pi^1\rho}{z,z})
    \end{split}
  \]

  \paragraph*{Trace:} In this case the trace is not a tree, because the relation
  $\toD$ is not probabilistic. We use the same $\rho$ as before: $\frac 34\ket
  0\bra 0+\frac{\sqrt 3}4\ket 1\bra 0+\frac{\sqrt 3}4\ket 0\bra 1+\frac 14\ket
  1\bra 1$.

  \begin{align*}
    t& \toD (\qletcase y{\pi^1\ket +\bra +}{\lambda x.x,\lambda x.\qletcase w{\pi^1\ket +\bra +}{w,w}})\\
     &\qquad(\frac 34\ket 0\bra 0+\frac 14\ket 1\bra 1)\\
     &\toD
       ( \frac 12\lambda x.x+\frac 12 \lambda x.\qletcase w{\pi^1\ket +\bra +}{w,w})(\frac 34\ket 0\bra 0+\frac 14\ket 1\bra 1)\\
     &\toD(\frac 12\lambda x.x+\frac 12(\lambda x.\frac 12\ket 0\bra 0+\frac 12\ket 1\bra 1))(\frac 34\ket 0\bra 0+\frac 14\ket 1\bra 1)\\
     &\toD
       \frac 12((\lambda x.x)(\frac 34\ket 0\bra 0+\frac 14\ket 1\bra 1))\\
     &\qquad+
       \frac 12((\lambda x.\frac 12\ket 0\bra 0+\frac 12\ket 1\bra 1)(\frac 34\ket 0\bra 0+\frac 14\ket 1\bra 1))\\
     &\toD
       \frac 12((\lambda x.x)(\frac 34\ket 0\bra 0+\frac 14\ket 1\bra 1))
       +\frac 12(\frac 12\ket 0\bra 0+\frac 12\ket 1\bra 1)\\
     &\toD
       \frac 12(\frac 34\ket 0\bra 0+\frac 14\ket 1\bra 1)
       +\frac 12(\frac 12\ket 0\bra 0+\frac 12\ket 1\bra 1)\\
     &\toD\frac 58\ket 0\bra 0+\frac 38\ket 1\bra 1
  \end{align*}

  \paragraph*{Typing and Interpretation:} Since $t$ does not contain sums, its
  typing is analogous to the term in $\lambdens$, as well as the interpretation.
\end{example}

\begin{example}\label{ex:page371}
  In \cite[p 371]{NielsenChuang00} there is an example of the freedom in the
  operator-sum representation by showing two quantum operators, which are
  actually the same. One is the process of tossing a coin and, according to its
  results, applying $I$ or $Z$ to a given qubit The second is the process
  performing a projective measurement with unknown outcome to the same qubit.
  These operations can be encoded in $\lambdens$ by:
  \begin{align*}
    O_1&=\lambda y.\letcase x{\pi^1\ket +\bra +}{y,\Z y}\\
    O_2&=\lambda y.\letcase x{\pi^1y}{x,x}
  \end{align*}
  with $\pi^1 = \{\ket 0\bra 0,\ket 1\bra 1\}$.

  Let us apply those operators to the qubit $\rho=\frac 34\ket 0\bra
  0+\frac{\sqrt 3}4\ket 0\bra 1+\frac{\sqrt 3}4\ket 1\bra 0+\frac 14\ket 1\bra
  1$. We can check that the terms $O_1\rho$ and $O_2\rho$ have different
  interpretations $\sem[]{\cdot}$. Let $\rho^-=\Z\rho\Z^\dagger$, then
  \begin{align*}
    \sem[\emptyset]{(\lambda y.\letcase x{\pi^1\ket +\bra +}{y,\Z y})\rho}
    &=\{(\frac 12,\emptybit,\rho),(\frac 12,\emptybit,\rho^-)\}\\
    \sem[\emptyset]{(\lambda y.\letcase x{\pi^1y}{x,x})\rho}
    &= \{(\frac 34,\emptybit,\ket 0\bra 0), (\frac 14,\emptybit,\ket 1\bra 1)\}    
  \end{align*}
  However, they have the same interpretation $\fsem[]{\cdot}$.
  \begin{align*}
    &\fsem[\emptyset]{(\lambda y.\letcase x{\pi^1\ket +\bra +}{y,\Z y})\rho}\\
    & = \frac 12\rho+\frac 12\rho^-\\
    & =\frac 34\ket 0\bra 0+\frac 14\ket 1\bra 1\\
    &=\fsem[\emptyset]{(\lambda y.\letcase x{\pi^1y}{x,x})\rho}
  \end{align*}
  The trace of $O_1\rho$ is given in Table~\ref{tab:TracesEx371-1}, and the
  trace of $O_2\rho$ in Table~\ref{tab:TracesEx371-2}.
  \begin{table}[t]
    \begin{center}
      \begin{tikzpicture}
        \node (A) at (0,0) {$O_1\rho$}; \node (B) at (0,-1.5) {$\letcase
          x{\pi^1\ket +\bra +}{\rho,\Z\rho}$}; \node (C) at (-2.5,-3) {$\letcase
          x{(0,\ket 0\bra 0)}{\rho,\Z\rho}$}; \node (D) at (2.5,-3) {$\letcase
          x{(1,\ket 1\bra 1)}{\rho,\Z\rho}$}; \node (E) at (-2.5,-4.5) {$\rho$};
        \node (F) at (2.5,-4.5) {$\Z\rho$}; \node (G) at (2.5,-6) {$\rho^-$};
        \draw[thick,->] (A) -- (B) node[midway,right] {$\scriptstyle 1$};
        \draw[thick,->] (B) -- (C) node[midway,sloped,above] {$\frac 12$};
        \draw[thick,->] (B) -- (D) node[midway,sloped,above] {$\frac 12$};
        \draw[thick,->] (C) -- (E) node[midway,right] {$\scriptstyle 1$};
        \draw[thick,->] (D) -- (F) node[midway,right] {$\scriptstyle 1$};
        \draw[thick,->] (F) -- (G) node[midway,right] {$\scriptstyle 1$};
      \end{tikzpicture}
    \end{center}
    \caption{Trace of the terms $O_1\rho$ from Example~\ref{ex:page371} in
      $\lambdens$.}
    \label{tab:TracesEx371-1}
  \end{table}
  \begin{table}[t]
    \begin{center}
      \begin{tikzpicture}
        \node (Hp) at (0,-6) {$O_2\rho$}; \node (H) at (0,-7) {$\letcase
          x{\pi^1\rho}{x,x}$}; \node (I) at (-2.5,-8.5) {$\letcase x{(0,\ket
            0\bra 0)}{x,x}$}; \node (J) at (2.5,-8.5) {$\letcase x{(1,\ket 1\bra
            1)}{x,x}$}; \node (K) at (-2.5,-10) {$\ket 0\bra 0$}; \node (L) at
        (2.5,-10) {$\ket 1\bra 1$}; \draw[thick,->] (Hp) -- (H)
        node[midway,right] {$\scriptstyle 1$}; \draw[thick,->] (H) -- (I)
        node[midway,sloped,above] {$\frac 34$}; \draw[thick,->] (H) -- (J)
        node[midway,sloped,above] {$\frac 14$}; \draw[thick,->] (I) -- (K)
        node[midway,right] {$\scriptstyle 1$}; \draw[thick,->] (J) -- (L)
        node[midway,right] {$\scriptstyle 1$};
      \end{tikzpicture}
    \end{center}
    \caption{Trace of the term $O_2\rho$ from Example~\ref{ex:page371} in
      $\lambdens$.}
    \label{tab:TracesEx371-2}
  \end{table}
  The first term produces $\rho$, with probability $\frac 12$, and $\rho^-$,
  with probability $\frac 12$, while the second term produces either $\ket 0\bra
  0$ with probability $\frac 34$ or $\ket 1\bra 1$, with probability $\frac 14$.

  However, if we encode the same terms in $\qlambdens$, we can get both programs
  to produce the same density matrix:
  \begin{align*}
    O^\circ_1&=\lambda y.\qletcase x{\pi^1\ket +\bra +}{y,\Z y}\\
    O^\circ_2&=\lambda y.\qletcase x{\pi^1 y}{x,x}
  \end{align*}
  The traces of $O^\circ_1\rho$ and $O^\circ_2\rho$ are as follow:
  \[
    \begin{array}{rl@{\ }|@{\ }rl}
      &O^\circ_1\rho&
      &O^\circ_2\rho\\
      &=(\lambda y.\qletcase x{\pi^1\ket +\bra +}{y,\Z y})\rho&
      &=(\lambda y.\qletcase x{\pi^1 y}{x,x})\rho\\
      &\toD\qletcase x{\pi^1\ket +\bra +}{\rho,\Z\rho}&
      &\toD\qletcase x{\pi^1\rho}{x,x})\\
      &\toD(\frac 12\rho)+(\frac 12\Z\rho)&
      &\toD(\frac 34\ket 0\bra 0)+(\frac 14\ket 1\bra 1)\\
      &\toD(\frac 12\rho)+(\frac 12\rho^-)&
      &\toD\frac 34\ket 0\bra 0+\frac 14\ket 1\bra 1\\
      &\toD\frac 34\ket 0\bra 0+\frac 14\ket 1\bra 1&
      &    
    \end{array}
  \]

\end{example}

\section{Conclusions}\label{sec:Conclusion}
In this paper we have presented the calculus $\lambdens$, which is a quantum
data / classical control extension to the lambda calculus where the data is
manipulated by density matrices. The main importance of this calculus is its
interpretation into density matrices, which can equate programs producing the
same density matrices. Then, we have given a second calculus, $\qlambdens$,
where the density matrices are generalised to accommodate arbitrary terms, and
so, programs producing the same density matrices, rewrite to such a matrix,
thus, coming closer to its interpretation. The control of $\qlambdens$ is not
classical nor quantum, however it can be seen as a weaker version of the quantum
control approach. It is indeed not classical control because a generalised
density matrix of terms is allowed ($\sum_ip_it_i$). It is not quantum control
because superposition of programs are not allowed (indeed, the previous sum is
not a quantum superposition since all the $p_i$ are positive and so no
interference can occur). However, it is quantum in the sense that programs in a
kind of generalised mixed-states are considered. We preferred to call it {\em
  probabilistic control}.

As depicted in Example~\ref{ex:page371}, the calculus $\qlambdens$ allows to
represent the same operator in different ways. Understanding when two operators
are equivalent is important from a physical point of view: it gives insights on
when two different physical processes produce the same dynamics. To the best of
our knowledge, it is the first lambda calculus for density matrices.

\subsection*{Future work and open questions}
As pointed out by B\v{a}descu and Panangaden~\cite{BadescuPanangadenQPL15}, one
of the biggest issues with quantum control is that it does not accommodate well
with traditional features from functional programming languages like recursion.
Ying~\cite{Ying14} went around this problem by introducing a recursion based on
second quantisation. Density matrices are DCPOs with respect to the L\"owner
order. Is the form of weakened quantum control suggested in this paper monotone?
Can it be extended with recursion? Could this lead to a concrete quantum
programming language, like
Quipper~\cite{GreenLeFanulumsdaineRossSelingerValironPLDI13}?

All these open questions are promising new lines of research that we are willing
to follow. In particular, we have four ongoing works trying to answer some of
these questions:
\begin{itemize}
\item The most well studied quantum lambda calculus is, without doubt,
  Selinger-Valiron's $\lambda_q$~\cite{SelingerValironMSCS06}. Hence, we are
  working on the mutual simulations between $\lambdens$ and $\lambda_q$, and
  between $\qlambdens$ and a generalisation of $\lambda_q$ into mixed states.
\item We are also working on a first prototype of an implementation of
  $\qlambdens$.
\item We are studying extensions to both $\lambdens$ and $\qlambdens$ with
  recursion and with polymorphism.
\item Finally, we are studying a more sophisticated denotational semantics for
  both calculi than the one given in this paper. We hope such a semantics to be
  adequate and fully abstract.
\end{itemize}

\paragraph{Acknowledgements} We want to thank the anonymous reviewer for some
important references and suggestions on future lines of work.

  \newpage
  \appendix
  \section{Detailed Proofs of Section~\ref{sec:DenSem}}
  \subsection{Proof of Lemma~\ref{lem:listMN}}\label{proof:listMN}
  By induction on the type derivation.
  \begin{itemize}
  \item Let $\Gamma,x:(m,n)\vdash x:(m,n)$ as a
    consequence of rule $\mathsf{ax}$. Since $\theta\vDash\Gamma,x:(m,n)$, we have $\theta(x)=(b,e)$ with $e\in\D$ and
    $b\neq\emptybit$. Hence, $\sem x=\{(1,b,e)\}$.
  \item Let $\Gamma,\Delta\vdash tr:(m,n)$ as a consequence of
    $\Gamma\vdash t:\vec A\multimap(m,n)$, $\Delta\vdash r:A$ and rule
    $\multimap_e$. Since $\theta\vDash\Gamma,\Delta$, we have
    $\theta\vDash\Gamma$. Since $\sem{tr}$ is well-defined, we have $\sem
    r=\{(p_i,b_i,e_i)\}_i$, $\sem t=\{(q_j,b'_j,f_j)\}_i$, and
    $f_j(b_i,e_i)=\{(h_{ijk},b_{ijk}'',g_{ijk})\}_k$, and hence
    $\sem{tr}=\{(p_iq_jh_{ijk},b''_{ijk},g_{ijk})\}_{ijk}$. By the induction
    hypothesis, $f_i\in\tsem{\vec A\multimap(m,n)}$, so, by
    definition, $b_{ijk}''\neq\emptybit$ and $g_{ijk}\in\D$.

  \item Let $\Gamma\vdash\pi^m t:(m,n)$ as a consequence of $\Gamma\vdash t:n$
    and rule $\mathsf{m}$. Then we have that $\sem{\pi^mt}$ is equal to
    $\{p_j\tr(\ext{\pi_i}^\dagger\ext{\pi_i}\rho_j),i,\frac{\ext{\pi_i}\rho_j\ext{\pi_i}^\dagger}{\tr(\ext{\pi_i}^\dagger\ext{\pi_i}\rho_j)}\mid
    \sem t=\{(p_j,\emptybit,\rho_j)\}_j\}$, with $i\neq\emptybit$. Notice that
    $\frac{\ext{\pi_i}\rho_j\ext{\pi_i}^\dagger}{\tr(\ext{\pi_i}^\dagger\ext{\pi_i}\rho_j)}\in\D$.
  \item Let $\Gamma\vdash(b^m,\rho^n):(m,n)$ as a consequence of rule
    $\mathsf{ax}_{\mathsf{am}}$. Then, $\sem{(b^m,\rho^n)}=\{(1,b^m,\rho^n)\}$.
    Notice that $\rho^n\in\D$.
  \item Let $\Gamma\vdash\letcase xr{t_0,\dots,t_{2^{m'}-1}}:(m,n)$ as a consequence of $x:n'\vdash t_k:(m,n)$,
    for $k=0,\dots,2^{m'}-1$, $\Gamma\vdash r:(m',n')$ and rule $\mathsf{lc}$.
    Since $\sem{\letcase xr{t_0,\dots,t_{2^{m'}-1}}}$ is well-defined, $\sem
    r\!=\{(p_i,b_i,\rho_i^n)\}_i$,
    $\sem[\theta,x=(\emptybit,\rho_i^n)]{t_i}=\{(q_{ij},b_{ij}',e_{ij})\}_j$ and
    $\sem{\letcase xr{t_0,\dots,t_{2^{m'}-1}}}
    \!=\!\{(p_iq_{ij},b_{ij}',e_{ij})\}_{ij}$. By the induction hypothesis,
    $b_{ij}' \neq\emptybit$ and $e_{ij}\in\D$.
    \qed
  \end{itemize}

\subsection{Proof of Lemma~\ref{lem:wellDefined}}\label{proof:wellDefined}
By induction on $t$.
\begin{itemize}
\item Let $t=x$. Then $x:A\in\Gamma$. $\sem x=\{(1,b,e)\}$, with
  $\theta(x)=(b,e)$. Since $\theta\vDash\Gamma$ we have $e\in\tsem A$.
\item Let $t=\lambda x.r$. Then $A=B\multimap C$, $\Gamma,x:B\vdash r:C$ and
  $\sem{\lambda x.r}=\{(1,\emptybit,\para\mapsto\sem[\theta,x=\para]r)\}$. Let
  $e\in\tsem B$ and $b\in\mathbb N^\varepsilon$ such that $P(b,B)$. Then, since
  $\theta\vDash\Gamma$, we have $\theta,x=(b,e)\vDash\Gamma,x:B$. Hence, by the
  induction hypothesis, $\trd{\sem[\theta,x=(b,e)]r}\subseteq\tsem C$.
  Therefore, $\para\mapsto\sem[\theta,x=\para]r\in\tsem{B\multimap C}$.
\item Let $t=rs$. Then $\Gamma=\Gamma_1,\Gamma_2$, with $\Gamma_1\vdash
  r:B\multimap A$ and $\Gamma_2\vdash s:B$. Since $\theta\vDash\Gamma$, we have
  $\theta\vDash\Gamma_1$ and $\theta\vDash\Gamma_2$, so by the induction
  hypothesis, $\sem s=\{(p_i,b_i,e_i)\}_i$, $\sum_ip_i=1$ and $e_i\in\tsem B$.
  Similarly, $\sem r=\{(q_j,b'_j,f_j)\}_j$, $\sum_jq_j=1$ and
  $f_j\in\tsem{B\multimap A}$. Let
  $f_j(b_i,e_i)=\{(h_{ijk},b_{ijk}'',g_{ijk})\}_k$. Then,
  $\sem{rs}=\{(p_iq_jh_{ijk},b_{ijk}'',g_{ijk})\}_{ijk}$. By
  Lemma~\ref{lem:listMN}, we have $P(b,B)$, so by definition of
  $\tsem{B\multimap A}$, we have $\{g_{ijk}\}=\trd{f_j(b_i,e_i)}\subseteq\tsem
  A$ and $\w{f_j(b_i,e_i)}=1$. Notice that
  $\sum_{ijk}p_iq_jh_{ijk}=\sum_ip_i\sum_jq_j\sum_kh_{ijk}=1$.

\item Let $t=\rho^n$. Then $A=n$ and $\sem{\rho^n}=\{(1,\emptybit,\rho^n)\}$.
  Notice that $\rho^n\in\D=\tsem A$.
\item Let $t= U^mr$. Then $A=n$ and $\Gamma\vdash r:n$. By the induction
  hypothesis, $\sem r=\{(p_i,b_i,\rho^n_i)\}_i$, with $\sum_ip_i=1$ and
  $\rho^n_i\in\D$. Then, $\ext{U^m}\rho^n\ext{U^m}^\dagger\in\D=\tsem A$, and
  $\sem{U^mr}=\{(p_i,\emptybit,\ext{U^m}\rho^n\ext{U^m}^\dagger)\}$.
\item Let $t=\pi^m r$. Then $A=(m,n)$ and $\Gamma\vdash r:n$. By the induction
  hypothesis, $\sem r=\{(p_i,b_i,\rho^n_i)\}_i$, with $\sum_ip_i=1$ and
  $\rho^n_i\in\D$. Let $\pi^m = \{\pi_j\}_j$, and
  $q_{ij}=\tr(\ext{\pi_j}^\dagger\ext{\pi_j}\rho^n_i)$. So,$\sum_j q_{ij}=1$.
  Let
  $\rho^n_{ij}=\frac{\ext{\pi_j}\rho^n_i\ext{\pi_j}^\dagger}{q_{ij}}\in\D=\tsem{(m,n)}$.
  Then, we have $\sem{\pi^m r}=\{(p_iq_{ij},j,\rho^n_{ij})\}_{ij}$. Notice that
  $\sum_{ij}p_iq_{ij}=\sum_ip_i\sum_jq_{ij}=1$.
\item Let $t=r\otimes s$. Then $A=n+m$ and $\Gamma=\Gamma_1,\Gamma_2$, with
  $\Gamma_1\vdash r:n$ and $\Gamma_2\vdash s:m$. Since $\theta\vDash\Gamma$, we
  have $\theta\vDash\Gamma_1$ and $\theta\vDash\Gamma_2$. Then, by the induction
  hypothesis, $\sem r=\{(p_i,b_i,\rho^n_i)\}_i$, with $\sum_i p_i=1$ and
  $\rho^n_i\in\D$. Similarly, $\sem s=\{(q_j,b'_j,\rho^m_j)\}_j$, with $\sum_j
  q_j=1$ and $\rho^m_j\in\D[m]$. Hence, $\sem{r\otimes
    s}=\{(p_iq_j,\emptybit,\rho^n_i\otimes\rho^m_j)\}_{ij}$. Notice that
  $\rho^n_i\otimes\rho^m_j\in\D[n+m]$, and
  $\sum_{ij}p_iq_j=\sum_ip_i\sum_jq_j=1$.
\item Let $t=(b^m,\rho^n)$. Then $A=(m,n)$ and
  $\sem{(b^m,\rho^n)}=\{(1,b^m,\rho^n)\}$, with $\rho^n\in\D=\tsem A$.
\item Let $t=\letcase xr{s_0,\dots,s_{2^m-1}}$. Then, $x:n\vdash s_k:A$, for
  $k=0,\dots,2^m-1$, and $\Gamma\vdash r:(m,n)$. Hence, by the induction
  hypothesis, $\sem r=\{(p_i,b_i,\rho_i^n)\}_i$ with $\rho_i^n\in\D$ and
  $\sum_ip_i=1$. By Lemma~\ref{lem:listMN}, $b_i\neq\emptybit$, so
  $\theta,x=(\emptybit,\rho_i^n)\vDash x:n$ and hence, by the induction
  hypothesis,
  $\sem[\theta,x=(\emptybit,\rho_i^n)]{s_{b_i}}=\{(q_{ij},b_{ij}',e_{ij})\}_j$,
  with $\sum_jq_{ij}=1$ and $e_{ij}\in\tsem A$. Hence, $\sem{\letcase
    xr{s_0,\dots,s_{2^m-1}}} =\{(p_iq_{ij},b_{ij}',e_{ij})\}_{ij}$. Notice that
  i $\sum_{ij}p_iq_{ij}=\sum_ip_i\sum_jq_{ij}=1$. \qed
\end{itemize}

\subsection{Proof of
  Lemma~\ref{lem:substitutionSem}}\label{proof:substitutionSem}
\noindent We need the next four lemmas and the corollary as auxiliary results.

\begin{lemma}\label{lem:InterpAbs}
  If $\Gamma\vdash t:A\multimap B$ and $\theta\vDash\Gamma$, then we have $\fsem
  t=\sum_ip_i(\para\mapsto\fsem[\theta,x=\para]{r_i})$. \qed
\end{lemma}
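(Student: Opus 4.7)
The plan is to proceed by structural induction on $t$, with the key observation being that the auxiliary operation $\conv{\cdot}$ is defined only on two patterns (density matrices and functions of the canonical form $\para\mapsto\sem[\theta',y=\para]{r}$), so well-definedness of $\fsem t$ for $t$ of function type already forces every third component $e_i$ in $\sem t = \{(p_i,b_i,e_i)\}_i$ to have that canonical shape. Once this structural fact is in hand, the conclusion is a direct calculation: pushing $\conv{\cdot}$ through the sum in the definition of $\fsem t$ and applying the second defining clause of $\conv{\cdot}$ yields $\fsem t = \sum_i p_i\conv{e_i} = \sum_i p_i\bigl(\para\mapsto\fsem[\theta,y_i=\para]{r_i}\bigr)$, which is the desired equality (with $r_i$ the witnesses extracted from the canonical form of $e_i$).

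Concretely, I would first invoke Lemma~\ref{lem:wellDefined} to get $\sem t=\{(p_i,b_i,e_i)\}_i$ with $\trd{\sem t}\subseteq\tsem{A\multimap B}$ and $\sum_i p_i=1$. Then, by induction on the typing derivation of $\Gamma\vdash t:A\multimap B$, I would establish that each $e_i$ has the canonical form. The abstraction case is immediate, since $\sem{\lambda y.r}=\{(1,\emptybit,\para\mapsto\sem[\theta,y=\para]{r})\}$. The variable case uses the fact that admissible valuations $\theta\vDash\Gamma$ map function-typed variables to canonical values (such values always originate, via substitution, from the interpretation of an actual subterm). The application case $rs$ uses the induction hypothesis on $r$ of type $C\multimap(A\multimap B)$: each function $f_j$ appearing in $\sem r$ is canonical, so evaluating $f_j(b_i,e_i)$ produces, unfolding $\sem{\lambda\cdot}$, a set of triplets whose third components are themselves canonical. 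The $\mathsf{letcase}$ case is analogous, since each branch $t_k$ has type $A\multimap B$ and yields canonical components by induction.

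The main obstacle, and the point where care is needed, is the variable case: it relies on the informal convention that valuations only ever carry canonical-form entries at function-typed variables. In this paper this convention is baked into the definition of $\conv{\cdot}$ (the second clause only matches canonical functions), so it should be spelled out explicitly as part of the setup for $\fsem{\cdot}$ rather than being reproved from scratch. Once that is acknowledged, the lemma reduces to the routine calculation sketched above and serves precisely as the bridge needed to establish the substitution lemma for $\fsem{\cdot}$, where $\beta$-redexes of the form $(\lambda x.s)r$ must be handled on both sides of the equality.
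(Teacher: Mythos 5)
The paper does not actually prove this lemma: it is stated in Appendix~\ref{proof:substitutionSem} with an immediate \qed, as one of several auxiliary facts taken to be evident from the definitions. Your proposal therefore supplies an argument where the paper gives none, and the argument you sketch is the right one: by Lemma~\ref{lem:wellDefined} one gets $\sem t=\{(p_i,b_i,e_i)\}_i$ with each $e_i\in\tsem{A\multimap B}$, and the whole content of the lemma is that each such $e_i$ is \emph{canonical}, i.e.\ of the shape $\para\mapsto\sem[\theta',y=\para]{r_i}$, so that the second clause of $\conv{\cdot}$ applies and $\fsem t=\sum_ip_i\conv{e_i}$ unfolds to the claimed form. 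Your identification of the variable case as the weak point is accurate: $\tsem{A\multimap B}$ as defined in Table~\ref{tab:IntTy} contains arbitrary functions, not only canonical ones, so $\theta\vDash\Gamma$ does not by itself guarantee that $\conv{\theta(x)}$ is defined; this has to be adopted as a standing convention on valuations (or the definition of $\tsem{A\multimap B}$ has to be restricted), exactly as you say.

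One place where your induction needs slightly more care than the sketch admits is the application case. There the third components $g_{ijk}$ of $\sem{rs}$ come from $f_j(b_i,e_i)=\sem[\theta_j,y=(b_i,e_i)]{u_j}$, where $u_j$ is the body extracted from the canonical form of $f_j$. The typing derivation of $u_j$ is not a subderivation of that of $rs$, so an induction literally on the typing derivation of $t$ does not reach it; you need either a measure that makes $u_j$ smaller (it is a proper subterm of $r$ when $f_j$ originates from an abstraction inside $r$, but tracking this through the interpretation is exactly the work being elided), or, more cleanly, a strengthening of $\tsem{A\multimap B}$ to \emph{hereditarily} canonical functions so that canonicity of the outputs is part of membership and Lemma~\ref{lem:wellDefined} delivers it directly. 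This is the same kind of gap you already flagged for variables, inherited from the paper's informal setup rather than introduced by you; with either repair the calculation you give closes the proof.
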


\begin{lemma}\label{lem:InterpApp}
  If $\fsem t=\sum_ip_i(\para\mapsto\fsem[\theta',x=\para]{r_i})$ and $\sem
  s=\{(q_j,b_j,e_j)\}_j$, then we have
  $\fsem{ts}=\sum_{ij}p_iq_j\fsem[\theta',x=(b_j,e_j)]{r_i}$ \qed
\end{lemma}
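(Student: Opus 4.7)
The plan is to unfold the definitions of $\sem{\cdot}$ and $\fsem{\cdot}$ at the application $ts$, and then identify the claimed sum on the right-hand side. Start by writing $\sem t = \{(\alpha_k,\beta_k,f_k)\}_k$, so that $\fsem t = \sum_k \alpha_k \conv{f_k}$. Combining the hypothesis $\fsem t = \sum_i p_i(\para\mapsto\fsem[\theta',x=\para]{r_i})$ with the clause $\conv{\para\mapsto\sem[\theta,x=\para]{r}} = \para\mapsto\fsem[\theta,x=\para]{r}$ in the definition of $\conv{\cdot}$, one reads back that $\sem t$ must be of the form $\{(p_i,b'_i,f_i)\}_i$ with $f_i = \para\mapsto\sem[\theta',x=\para]{r_i}$ for some labels $b'_i$, modulo the convention that identifies triples sharing a common $(b,e)$ and the convention $\para\mapsto\sum_i p_i e_i = \sum_i p_i(\para\mapsto e_i)$.

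Next, apply the interpretation of application from Table~\ref{tab:IntTerms} with $\sem s = \{(q_j,b_j,e_j)\}_j$. Evaluating $f_i(b_j,e_j) = \sem[\theta',x=(b_j,e_j)]{r_i}$ and naming this set $\{(h_{ijk},b''_{ijk},g_{ijk})\}_k$, the definition gives
\[
  \sem{ts} = \{(p_i q_j h_{ijk},\, b''_{ijk},\, g_{ijk})\}_{ijk}.
\]
Then apply the definition of $\fsem{\cdot}$ to this set and rearrange the triple sum:
\[
  \fsem{ts} \;=\; \sum_{ijk} p_i q_j h_{ijk}\,\conv{g_{ijk}} \;=\; \sum_{ij} p_i q_j \sum_k h_{ijk}\,\conv{g_{ijk}} \;=\; \sum_{ij} p_i q_j\,\fsem[\theta',x=(b_j,e_j)]{r_i},
\]
where the last equality is exactly the definition of $\fsem[\theta',x=(b_j,e_j)]{r_i}$ applied to $\sem[\theta',x=(b_j,e_j)]{r_i} = \{(h_{ijk},b''_{ijk},g_{ijk})\}_k$.

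The main obstacle is the first step: cleanly justifying the inversion from the given shape of $\fsem t$ to a matching shape of $\sem t$. The concern is that the merging convention on triples and the linearity convention for $\para\mapsto$ mean that $\sem t$ is only determined up to these equivalences, so one must either choose a canonical representative or argue directly that both $\sem{\cdot}$ and $\fsem{\cdot}$ respect the equivalences, making the final equality independent of the choice. Once this bookkeeping is settled, the rest is a mechanical unfolding of the clause for $\sem{tr}$ and a reordering of a finite sum, following the same style as Lemma~\ref{lem:InterpAbs}.
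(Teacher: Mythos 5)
Your proof is correct and is exactly the direct unfolding the paper intends: the lemma is stated in the appendix with its proof omitted (a bare \qed), and your argument — unfold $\sem{ts}$ via the application clause, then regroup the triple sum — is the only work there is to do. Your worry about inverting the hypothesis on $\fsem t$ back to a matching shape of $\sem t$ is well spotted but harmless in context, since the lemma is only ever invoked with the decomposition supplied by Lemma~\ref{lem:InterpAbs}, which is read off directly from $\sem t$, so the canonical representative is always the one in play and both $\sem[]{\cdot}$ and $\fsem[]{\cdot}$ respect the merging convention on triples.
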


\begin{corollary}
  \label{cor:InterpApp}
  If $\fsem t\!=\sum_ip_i(\para\mapsto\fsem[\theta',x=\para]{r_i})$ and $\fsem
  s=\sum_jq_je_j$, then
  $\fsem{ts}\!=\sum_{ij}p_iq_j\fsem[\theta',x=(\emptybit,e_j)]{r_i}$ \qed
\end{corollary}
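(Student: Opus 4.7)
The plan is to reduce the corollary directly to the immediately preceding Lemma~\ref{lem:InterpApp}. The only distinction between the two statements is that the corollary's hypothesis provides the coarser information $\fsem s = \sum_j q_j e_j$ in place of the refined $\sem s = \{(q_j, b_j, e_j)\}_j$. To bridge this, I would first unfold the hypothesis via the definition of $\fsem{\cdot}$: whenever $\fsem s = \sum_j q_j e_j$, there must exist $\sem s = \{(q_j, b_j, f_j)\}_j$ (sharing weights and indexing with the given sum) such that $\conv{f_j} = e_j$ for each $j$, because $\fsem{\cdot}$ is defined precisely as $\sum_j q_j \conv{f_j}$.

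With this unpacking in hand, applying Lemma~\ref{lem:InterpApp} to the given $\fsem t$ and this $\sem s$ yields
\[
  \fsem{ts} = \sum_{ij} p_i q_j \fsem[\theta', x=(b_j, f_j)]{r_i}.
\]
It then remains to identify $(b_j, f_j)$ with $(\emptybit, e_j)$. The convention implicit in writing $\fsem s$ as a bare weighted sum $\sum_j q_j e_j$, with no bit annotations, puts us in the regime where $b_j = \emptybit$: otherwise, by Lemma~\ref{lem:listMN}, $s$ would have a measurement type $(m,n)$ and the bit information would need to be tracked explicitly rather than collapsed into a $\fsem{\cdot}$-level sum. The identification $f_j \leftrightarrow e_j$ is then exactly the $\conv{\cdot}$-wrapping that underlies the passage from $\sem{\cdot}$ to $\fsem{\cdot}$, applied at the position bound to $x$ in the environment.

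The main obstacle is purely notational rather than mathematical: one must reconcile the $\sem{\cdot}$-level data $(b_j, f_j)$ appearing inside $\fsem[\theta', x=\cdot]{r_i}$ with the $\fsem{\cdot}$-level data $e_j$ in the conclusion, using the mutual-recursion convention that $\conv{f_j} = e_j$ propagates coherently into the binding of $x$. Once this substitution of conventions is done consistently, the corollary follows immediately from Lemma~\ref{lem:InterpApp}.
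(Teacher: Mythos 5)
Your proposal is correct and takes the same route the paper intends: the paper states this corollary with no proof at all, treating it as an immediate consequence of Lemma~\ref{lem:InterpApp} together with the definition of $\fsem[]{\cdot}$ (unfold $\fsem[]{s}$ into $\sem[]{s}$, apply the lemma, identify $\conv{f_j}$ with $e_j$), which is exactly what you spell out. One small caveat: your appeal to Lemma~\ref{lem:listMN} runs in the converse direction (that lemma says type $(m,n)$ forces $b_j\neq\emptybit$, not that $b_j\neq\emptybit$ forces type $(m,n)$); what is actually needed --- and what the paper silently assumes --- is that when the type of $s$ is not $(m,n)$ the bit component of the binding of $x$ is never inspected, so replacing $b_j$ by $\emptybit$ leaves $\fsem[\theta',x=(b_j,e_j)]{r_i}$ unchanged.
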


\begin{lemma}\label{lem:InterpUM}
  If $\Gamma\vdash t:n$ and $\theta\vDash\Gamma$, then
  $\fsem{U^mt}=\ext{U^m}\fsem t\ext{U^m}^\dagger$ and
  $\fsem{\pi^mt}=\sum_i\ext{\pi_i}\fsem t\ext{\pi_i}^\dagger$. \qed
\end{lemma}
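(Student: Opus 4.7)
The plan is to prove both identities by direct unfolding of the interpretations, using the fact that since $\Gamma\vdash t:n$, Lemma~\ref{lem:wellDefined} applies and gives us $\sem t=\{(p_i,b_i,\rho_i)\}_i$ with $\rho_i\in\D$ and $\sum_ip_i=1$. Hence by the definition of $\conv{\cdot}$ and $\fsem{\cdot}$, we have $\fsem t=\sum_ip_i\rho_i$ (each $\conv{\rho_i}=\rho_i$ since $\rho_i$ is a density matrix). The rest is bilinearity of the matrix operations.

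For the unitary case, I would unfold $\sem{U^mt}$ using the rule in Table~\ref{tab:IntTerms}, obtaining $\sem{U^mt}=\{(p_i,\emptybit,\ext{U^m}\rho_i\ext{U^m}^\dagger)\}_i$. Applying the definition of $\fsem{\cdot}$ gives
\[
  \fsem{U^mt}=\sum_ip_i\,\ext{U^m}\rho_i\ext{U^m}^\dagger =\ext{U^m}\Bigl(\sum_ip_i\rho_i\Bigr)\ext{U^m}^\dagger =\ext{U^m}\fsem t\,\ext{U^m}^\dagger,
\]
where the middle equality is bilinearity of the matrix product in its middle argument.

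For the measurement case, I would similarly unfold $\sem{\pi^mt}$, obtaining a set indexed by the pair $(i,j)$ with weight $p_j\tr(\ext{\pi_i}^\dagger\ext{\pi_i}\rho_j)$ and third component $\frac{\ext{\pi_i}\rho_j\ext{\pi_i}^\dagger}{\tr(\ext{\pi_i}^\dagger\ext{\pi_i}\rho_j)}$. Applying $\fsem{\cdot}$, the probability factors cancel the denominators, leaving
\[
  \fsem{\pi^mt}=\sum_{i,j}p_j\,\ext{\pi_i}\rho_j\ext{\pi_i}^\dagger =\sum_i\ext{\pi_i}\Bigl(\sum_jp_j\rho_j\Bigr)\ext{\pi_i}^\dagger =\sum_i\ext{\pi_i}\fsem t\,\ext{\pi_i}^\dagger.
\]

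There is essentially no obstacle: the statement is a bookkeeping check that the density-matrix interpretation $\fsem{\cdot}$ agrees with the postulates of quantum mechanics written in operator-sum form. The only points requiring a little care are (i) invoking Lemma~\ref{lem:wellDefined} to know that the third components are genuinely density matrices (so that $\conv{\rho_i}=\rho_i$ and the division by $\tr(\ext{\pi_i}^\dagger\ext{\pi_i}\rho_j)$ makes sense whenever the corresponding weight is nonzero, with the standard convention that zero-probability branches contribute zero), and (ii) using the convention stated just before the definition of $\conv{\cdot}$ that equal-second-component triples may be merged so that $\fsem t$ is unambiguously a single matrix. Once those are in place, the identities follow from bilinearity. \qed
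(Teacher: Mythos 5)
Your proof is correct: the paper states Lemma~\ref{lem:InterpUM} without any proof (the \qed follows the statement directly), and your direct unfolding of $\sem{U^mt}$ and $\sem{\pi^mt}$ from Table~\ref{tab:IntTerms}, combined with Lemma~\ref{lem:wellDefined} to ensure the third components lie in $\D$ and the trace factors cancel, is exactly the routine computation the author evidently considered immediate. Your two points of care (zero-probability branches and the merging convention for equal triples) are appropriate and do not reveal any gap.
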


\begin{lemma}
  \label{lem:InterpTensor}
  If $\Gamma\vdash t:n$, $\Delta\vdash r:m$ and $\theta\vDash\Gamma,\Delta$,
  then $\fsem{t\otimes r}=\fsem t\otimes\fsem r$. \qed
\end{lemma}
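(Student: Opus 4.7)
The plan is a short direct computation using the definitions of $\sem{\cdot}$ and $\fsem{\cdot}$ together with the bilinearity of the tensor product; no induction is needed.

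First, from $\theta \vDash \Gamma, \Delta$ I immediately get $\theta \vDash \Gamma$ and $\theta \vDash \Delta$. Applying Lemma~\ref{lem:wellDefined} to each of the two hypotheses, both $\sem t$ and $\sem r$ are well-defined; moreover, since $n$ and $m$ are base types, $\trd{\sem t} \subseteq \D_n$ and $\trd{\sem r} \subseteq \D[m]$. I would thus write $\sem t = \{(p_i, b_i, \rho_i)\}_i$ and $\sem r = \{(q_j, b'_j, \rho'_j)\}_j$, where each $\rho_i$ and each $\rho'_j$ is a genuine density matrix (not a function), with $\sum_i p_i = \sum_j q_j = 1$.

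Next, I would apply the interpretation clause for tensor from Table~\ref{tab:IntTerms} to read off $\sem{t\otimes r} = \{(p_i q_j,\emptybit,\rho_i \otimes \rho'_j)\}_{i,j}$. Unfolding the definition of $\fsem{\cdot}$ and using the convention $\conv{\rho} = \rho$ for density matrices — which applies here since each $\rho_i \otimes \rho'_j$ is a density matrix on $n+m$ qubits — I get $\fsem{t\otimes r} = \sum_{i,j} p_i q_j\,(\rho_i \otimes \rho'_j)$. Finally, by bilinearity of the tensor product, this rearranges as $\bigl(\sum_i p_i \rho_i\bigr) \otimes \bigl(\sum_j q_j \rho'_j\bigr) = \fsem t \otimes \fsem r$, which is the desired equality.

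The argument is essentially bookkeeping. The only genuinely important subtlety is making sure the third components of the triplets are density matrices rather than elements of a function space $\tsem{A\multimap B}$, so that $\conv{\cdot}$ reduces to the identity and $\fsem{\cdot}$ is literally a convex combination of matrices; that is exactly what the restriction to base types $n$ and $m$, together with Lemma~\ref{lem:wellDefined}, guarantees. Bilinearity of $\otimes$ over sums of matrices is then the only mathematical fact used.
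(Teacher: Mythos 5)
Your computation is correct and is exactly the intended argument: the paper states Lemma~\ref{lem:InterpTensor} with no proof body (treating it as immediate from the definitions), and your unfolding of the tensor clause of $\sem[]{\cdot}$, the identification $\conv{\rho}=\rho$ justified via Lemma~\ref{lem:wellDefined}, and bilinearity of $\otimes$ is precisely the bookkeeping being elided. No gaps.
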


\subsubsection*{Proof of Lemma~\ref{lem:substitutionSem}.}
We proceed by induction on $t$, however, we enforce the hypothesis by also
showing that if $\sem[\theta,x=(b_i,e_i)] t=\{(q_{ij},b'_{ij},\rho_{ij})\}_j$,
then $\sem{t[r/x]}=\{(p_iq_{ij},b'_{ij},\rho_{ij})\}_{ij}$.
\begin{itemize}
\item Let $t=x$. $\fsem{x[r/x]}=\fsem
  r=\sum_ip_i[e_i]=\sum_ip_i\fsem[\theta,x=(b_i,e_i)]{x}$.

  Notice that $\sem[\theta,x=(b_i,e_i)] x=\{(1,b_i,e_i)\}$ and $\sem
  r=\{(p_i,b_i,e_i)\}_i$.
\item Let $t=y$. Let $\theta(y)=(b',f)$. Then, $\fsem{y[r/x]}=\fsem
  y=f=\fsem[\theta,x=(b_i,e_i)]y=\sum_ip_i\fsem[\theta,x=(b_i,e_i)]y$.

  Notice that $\sem[\theta,x=(b_i,e_i)]
  y=\{(1,b',f)\}=\{(p_i,b',f')\}_i=\sem{y[r/x]}$.
\item Let $t=\lambda y.s$. Then, using the induction hypothesis, we have
  $\fsem{(\lambda y.s)[r/x]} =\fsem{\lambda y.s[r/x]}
  =\!\para\!\mapsto\!\fsem[\theta,y=\para]{s[r/x]}
  =\!\para\!\mapsto\!\sum_ip_i\fsem[\theta,y=\para,x=(b_i,e_i)]s
  =\sum_ip_i(\para\mapsto\fsem[\theta,y=\para,x=(b_i,e_i)]s)
  =\sum_ip_i\fsem[\theta,x=(b_i,e_i)]{\lambda y.s}$. Notice that $\lambda y.s$
  cannot have type $n$ or $(m,n)$.
\item Let $t=s_1s_2$
  \begin{itemize}
  \item Let $x\in FV(s_1)$. By Lemma~\ref{lem:InterpAbs}, we have
    $\fsem[\theta,x=(b_i,e_i)]{s_1}\!=\sum_jq_{ij}(\para\mapsto\fsem[\theta,x=(b_i,e_i),y=\para]{s'_{ij}})$.
    Then, by the induction hypothesis, $\fsem{s_1[r/x]}
    =\sum_ip_i\fsem[\theta,x=(b_i,e_i)]{s_1} =\sum_{ij} p_iq_{ij}(\para
    \mapsto\fsem[\theta,x=(b_i,e_i),y=\para]{s'_{ij}})$. Let $\sem{s_2}$
    $=\{(h_k,b'_k,f_k)\}_k$. Hence, by Lemma~\ref{lem:InterpApp}, we have that
    $\fsem{s_1[r/x]s_2}$
    $=\sum_{ijk}p_iq_{ij}h_k\fsem[\theta,x=(b_i,e_i),y=(b'_k,f_k)]{s'_{ij}}$.
    Since $x\notin FV(s_2)$, we also have that
    $\sem[\theta,x=(b_i,e_i)]{s_2}=\sem{s_2}$, so, also by
    Lemma~\ref{lem:InterpApp}, we have
    $\fsem[\theta,x=(b_i,e_i)]{s_1s_2}=\sum_{jk}q_{ij}h_k\fsem[\theta,x=(b_i,e_i),y=(b'_k,f_k)]{s'_{ij}}$.
    Therefore, we have
    $\fsem{(s_1s_2)[r/x]}=\sum_ip_i\fsem[\theta,x=(b_i,e_i)]{s_1s_2}$. Since
    $s'_{ij}$ is smaller than $s_1$, the induction hypothesis applies, and so,
    if $\sem[\theta,x=(b_i,e_i),y=(b'_k,f_k))]{s'_{ij}} =
    \{(q_{ij},b'_{ij},\rho_{ij})\}_j$, then
    $\sem[\theta,y=(b'_k,f_k)]{s'_{ij}[r/x]}
    =\{(p_iq_{ij},b'_{ij},\rho_{ij})\}_{ij}$.

    \newcommand\macro{\fsem[\theta,x=(b_i,e_i)]{s_2}}
  \item Let $x\in FV(s_s)$. By Lemma~\ref{lem:InterpAbs}, we have
    $\fsem[\theta,x=(b_i,e_i)]{s_1}=\sum_jq_j\para\mapsto\fsem[\theta,x=(b_i,e_i),y=\para]{s'_{ij}}$.
    By Corollary~\ref{cor:InterpApp}, we have that
    $\fsem[\theta,x=(b_i,e_i)]{s_1s_2}=
    \sum_jq_j\fsem[\theta,x=(b_i,e_i),y=(\emptybit,\macro)]{s'_{ij}}$. By the
    induction hypothesis, we have
    $\fsem{s_2[r/x]}=\sum_ip_i\fsem[\theta,x=(b_i,e_i)]{s_2}$. Since $x\notin
    FV(s_2[r/x])$, we have
    $\fsem{s_2[r/x]}=\fsem[\theta,x=(b_i,e_i)]{s_2[r/x]}$. Therefore, by
    Corollary~\ref{cor:InterpApp}, we have
    $\fsem[\theta,x=(b_i,e_i)]{s_1s_2[r/x]}=
    \sum_{ij}p_iq_j\fsem[\theta,x=(b_i,e_i),y=(\emptybit,\macro)]{s_{ij}'}$\parfillskip=0pt

    $=\sum_ip_i\fsem[\theta,x=(b_i,e_i)]{s_1s_2}$. Since $s'_{ij}$ is smaller
    than $s_1$, the induction hypothesis applies, and therefore, if
    \newcommand\aca{\fsem[\theta,x=(b_i,e_i)]{s_2}}
    $\sem[\theta,x=(b_i,e_i),y=(\emptybit,\aca)]{s'_{ij}}=\{(q_{ij},b'_{ij},\rho_{ij})\}_j$,
    then
    $\sem[\theta,y=(\emptybit,\aca)]{s'_{ij}[r/x]}=\{(p_iq_{ij},b'_{ij},\rho_{ij})\}_{ij}$.
  \end{itemize}
\item Let $t=\rho^n$.
  $\fsem{\rho^n[r/x]}=\fsem{\rho^n}=\rho^n=\fsem[\theta,x=(b_i,e_i)]{\rho^n}=\sum_ip_i\fsem[\theta,x=(b_i,e_i)]{\rho^n}$.
  In addition,
  $\sem[\theta,x=(b_i,e_i)]{\rho^n}=\{(1,\emptybit,\rho^n)\}=\{(p_i,\emptybit,\rho^n)\}_i=\sem{\rho^n[r/x]}$.
\item Let $t=U^m s$. By the induction hypothesis,
  $\fsem{s[r/x]}=\sum_ip_i\fsem[\theta,x=(b_i,e_i)] s$. By
  Lemma~\ref{lem:InterpUM}, $\fsem{U^ms[r/x]}
  =\ext{U^m}\fsem{s[r/x]}\ext{U^m}^\dagger
  =\ext{U^m}\sum_ip_i\fsem[\theta,x=(b_i,e_i)] s\ext{U^m}^\dagger $
  $=\sum_ip_i\ext{U^m}\fsem[\theta,x=(b_i,e_i)] s\ext{U^m}^\dagger
  =\sum_ip_i\fsem[\theta,x=(b_i,e_i)]{U^m s}$. Let
  $\sem[\theta,x=(b_i,e_i)]s=\{(q_{ij},b_{ij},\rho_{ij})\}_j$, then
  $\sem[\theta,x=(b_i,e_i)]{U^ms}=\{(q_{ij},\emptybit,\ext{U^m}\rho_{ij}\ext{U^m})\}_j$.
  In addition, by the induction hypothesis, we have
  $\sem{s[r/x]}=\{(p_iq_{ij},b_{ij},\rho_{ij})\}_{ij}$, therefore
  $\sem{(U^ms)[r/x]}=\sem{U^ms[r/x]}=\{(p_iq_{ij},\emptybit,\ext{U^m}\rho_{ij}\ext{U^m})\}_{ij}$.
\item Let $t=\pi^m s$. By the induction hypothesis,
  $\fsem{s[r/x]}=\sum_ip_i\fsem[\theta,x=(b_i,e_i)] s$. By
  Lemma~\ref{lem:InterpUM}, $\fsem{\pi^ms[r/x]}
  =\sum_j\ext{\pi_j}\fsem{s[r/x]}\ext{\pi_j}^\dagger
  =\!\sum_j\ext{\pi_j}\sum_jp_i\fsem[\theta,x=(b_i,e_i)]s\ext{\pi_j}^\dagger $
  $=\sum_ip_i\sum_j\ext{\pi_j}\fsem[\theta,x=(b_i,e_i)]s\ext{\pi_j}^\dagger
  =\sum_ip_i\fsem[\theta,x=(b_i,e_i)]{\pi^ms}$. Let
  $\sem[\theta,x=(b_i,e_i)]s=\{(q_{ij},b_{ij},\rho_{ij})\}_j$, then
  $\sem[\theta,x=(b_i,e_i)]{\pi^ms}=\{(q_{ij}\tr(\ext{\pi_k}^\dagger\ext{\pi_k}\rho_{ij}),k,\frac{\ext{\pi_k}\rho_{ij}\ext{\pi_k}}{\tr(\ext{\pi_k}^\dagger\ext{\pi_k}\rho_{ij})})\}_{jk}$.
  By the induction hypothesis,
  $\sem{s[r/x]}=\{(p_iq_{ij},b_{ij},\rho_{ij})\}_{ij}$, therefore, we have
  $\sem{(\pi^ms)[r/x]}=\sem{\pi^ms[r/x]}=\{(p_iq_{ij}\tr(\ext{\pi_k}^\dagger\ext{\pi_k}\rho_{ij}),k,\frac{\ext{\pi_k}\rho_{ij}\ext{\pi_k}}{\tr(\ext{\pi_k}^\dagger\ext{\pi_k}\rho_{ij})})\}_{ijk}$.
\item Let $t=s_1\otimes s_2$, with $x\in FV(s_1)$. By the induction hypothesis,
  $\fsem{s_1[r/x]}=\sum_ip_i\fsem[\theta,x=(b_i,e_i)]{s_1}$, and notice that,
  since $x\notin FV(s_2)$, we have $\fsem{s_2}=\fsem[\theta,x=(b_i,e_i)]{s_2}$.
  Therefore, by Lemma~\ref{lem:InterpTensor}, $\fsem{s_1[r/x]\otimes s_2}\!
  =\!\fsem{s_1[r/x]}\otimes\fsem{s_2}
  =(\sum_ip_i\fsem[\theta,x=(b_i,e_i)]{s_1})\otimes\fsem[\theta,x=(b_i,e_i)]{s_2}
  \!=\!\sum_ip_i\fsem[\theta,x=(b_i,e_i)]{s_1}\otimes\fsem[\theta,x=(b_i,e_i)]{s_2}
  $ $=\sum_ip_i\fsem[\theta,x=(b_i,e_i)]{s_1\otimes s_2}$, where the last step
  is due to Lemma~\ref{lem:InterpTensor} as well. Note that if
  $\sem[\theta,x=(b_i,e_i)]{s_1}=\{(q_{ij},b_{ij},\rho_{ij})\}_j$, and
  $\sem[\theta,x=(b_i,e_i)]{s_2}=\{(h_k,b'_k,\rho'_k)\}_k$, we have
  $\sem[\theta,x=(b_i,e_i)]{s_1\otimes s_i} =
  \{(q_{ij}h_k,\emptybit,\rho_{ij}\otimes\rho'_k)\}_{kj}$. By the induction
  hypothesis $\sem{s_1[r/x]}=\{(p_iq_{ij},b_{ij},\rho_{ij})\}_{ij}$, and so
  $\sem{(s_1\otimes s_2)[r/x]}=\sem{s_1[r/x]\otimes s_2} =
  \{(p_iq_{ij}h_k,\emptybit,\rho_{ij}\otimes\rho'_k)\}_{ijk}$.
\item Let $t=(b^m,\rho^n)$.
  $\fsem{(b^m,\rho^n)[r/x]}=\fsem{(b^m,\rho^n)}=\rho^n=\fsem[\theta,x=(b_i,e_i)]{(b^m,\rho^n)}$
  $=\sum_ip_i\fsem[\theta,x=(b_i,e_i)]{(b^m,\rho^n)}$. Moreover,
  $\sem[\theta,x=(b_i,e_i)]{(b^m,\rho^n)}
  =\{(1,b^m,\rho^n)\}=\{(p_i,b^m,\rho^n)\}_i
  =\sem{(b^m,\rho^n[r/x])}=\sem{(b^m,\rho^n)[r/x]}$.
\item Let $t=\letcase ys{t_0,\dots,t_{2^m-1}}$. Let $\sem[\theta,x=(b_i,e_i)] s
  = \{(q_{ij},b'_{ij},\rho_{ij})\}_j$. Then, by the induction hypothesis,
  $\sem{s[r/x]} = \{p_iq_{ij},b'_{ij},\rho_{ij}\}_{ij}$. Let, forall $i$ and
  $j$, $\sem[\theta,x = (b_i,e_i),y = (\emptybit,\rho_{ij})]{t_{b'_{ij}}} =
  \{(h_{ijk},b''_{ijk},f_{ijk})\}_k$. Hence, we have that $\sem[\theta,x =
  (b_i,e_i)]{\letcase ys{t_0,\dots,t_{2^m-1}}} =
  \{(q_{ij}h_{ijk},b''_{ijk},f_{ijk}\}_{jk}$ and also $\sem{\letcase
    y{s[r/x]}{t_0,\dots,t_{2^m-1}}}
  =\{(p_iq_{ij}h_{ijk},b''_{ijk},f_{ijk})\}_{ijk}$. Therefore, we have
  $\fsem{\letcase y{s[r/x]}{t_0,\dots,t_{2^m-1}}} =
  \sum_ip_i\fsem[\theta,x=(b_i,e_i)]{\letcase ys{t_0,\dots,t_{2^m-1}}}$ \qed
\end{itemize}

\subsection{Proof of Theorem~\ref{thm:IntRed}}\label{proof:IntRed}
By induction on the relation $\topr[p]$.
\begin{itemize}
\item $(\lambda x.t)r\topr t[r/x]$. We have $\sem{\lambda
    x.t}=\{1,\emptybit,\para\mapsto\sem[\theta,x=\para]t)\}$. By
  Lemma~\ref{lem:wellDefined}, we have $\sem r=\{(p_i,b_i,e_i)\}_i$ with $\sum_i
  p_i=1$, and $\sem[\theta,x=(b_i,e_i)]t=\{(q_{ij},b'_{ij},g_{ij})\}_j$.
  Therefore, we have that $\sem{(\lambda
    x.t)r}=\{(p_iq_{ij},b'_{ij},g_{ij})\}_{ij}$, and $\fsem{(\lambda
    x.t)r}=\sum_{ij}p_iq_{ij}g_{ij}$ which, by Lemma~\ref{lem:substitutionSem},
  is equal to $\fsem{t[r/x]}$.
\item $\letcase x{(b^m,\rho^n)}{t_0,\dots,t_{2^m-1}}\topr t_{b^m}[\rho^n/x]$.
  Then, we have that $\sem{\letcase x{(b^m,\rho^n)}{t_0,\dots,t_{2^m-1}}} =
  \sem[\theta,x=(\emptybit,\rho^n)]{t_{b^m}}$. Therefore, we have that
  $\fsem{\letcase x {(b^m,\rho^n)} {t_0,\dots,t_{2^m-1}}} =
  \fsem[\theta,x=(\emptybit,\rho^n)]{t_{b^m}}$. Since
  $\sem{\rho^n}=\{(1,\emptybit,\rho^n)\}$, we have, by
  Lemma~\ref{lem:substitutionSem},
  $\fsem[\theta,x=(\emptybit,\rho^n)]{t_{b^m}}=\fsem{t_{b^m}[\rho^n/x]}$.
\item $U^m\rho^n\topr\rho'^n$, with $\rho'^n=\ext{U^m}\rho^n\ext{U^m}^\dagger$.
  Then $\sem{U^m\rho^n}=\{(1,\emptybit,\ext{U^m}\rho^n\ext{U^m}^\dagger)\}$
  $=\sem{\rho'^n}$, so $\fsem{U^m\rho^n}=\rho'^n=\fsem{\rho'^n}$.
\item $\pi^n\rho^n\topr[p_i](i^m,\rho_i^n)$, with
  $p_i=\tr(\ext{\pi_i}^\dagger\ext{\pi_i}\rho^n)$ and
  $\rho_i^n=\frac{\ext{\pi_i}\rho^n\ext{\pi_i}^\dagger}{p_i}$. We have
  $\sem{\pi^m\rho^n}=\{(p_i,i,\rho_i^n)\}_i$. Hence,
  $\fsem{\pi^m\rho^n}=\sum_ip_i\rho_i^n=\sum_i p_i\fsem{\rho_i^n}$.
\item $\rho_1\otimes\rho_2\topr\rho$, with $\rho=\rho_1\otimes\rho_2$. We have
  $\sem{\rho_1\otimes\rho_2}=\{(1,\emptybit,\rho)\}$, so
  $\fsem{\rho_1\otimes\rho_2}=\rho=\fsem{\rho}$.
\item Contextual cases: Let $s\topr[p_i] s_i$. Then
  \begin{itemize}
  \item $\lambda x.s\topr[p_i]\lambda x.s_i$. By the induction hypothesis,
    $\fsem[\theta']s=\sum_ip_i\fsem[\theta']{s_i}$, for all
    $\theta'\vDash\Gamma,x:A$. Then, $\fsem{\lambda
      x.s}=\para\mapsto\fsem[\theta,x=\para]s=\para\mapsto
    \sum_ip_i\fsem[\theta,x=\para]{s_i} =\sum_i
    p_i(\para\mapsto\fsem[\theta,x=\para]{s_i}) =\sum_ip_i\fsem{\lambda x.s_i}$.
  \item $ts\topr[p_i]ts_i$. By the induction hypothesis, $\fsem s=\sum_ip_i\fsem
    {s_i}$. Then, we have
    $\sem{ts}=\{(p_iq_jh_{ijk},b''_{ijk},g_{ijk})\}_{ijk}$, with $\sem
    s=\{(p_i,b_i,e_i)\}_i$, $\sem t=\{(q_j,b'_j,f_j)\}_j$, and
    $f_j(b_i,e_i)=\{(h_{ijk},b''_{ijk},g_{ijk})\}_k$. Hence,
    $\fsem{ts}=\sum_{ijk}p_iq_jh_{ijk}g_{ijk}
    =\sum_ip_i(\sum_{jk}q_jh_{ijk}g_{ijk}) =\sum_ip_i\fsem{ts_i}$
  \item $st\topr[p_i]s_it$. By the induction hypothesis, $\fsem s=\sum_ip_i\fsem
    {s_i}$. Then, we have
    $\sem{st}=\{(q_jp_ih_{jik},b''_{jik},g_{jik})\}_{jik}$, with $\sem
    t=\{(q_j,b'_j,f_j)\}_j$, $\sem s=\{(p_i,b_i,e_i)\}_i$, and
    $e_i(b_j,f_j)=\{(h_{jik},b''_{jik},g_{jik})\}_k$. Hence,
    $\fsem{st}=\sum_{jik}q_jp_ih_{jik}g_{jik}
    =\sum_ip_i(\sum_{jk}q_jh_{ijk}g_{ijk}) =\sum_ip_i\fsem{s_it}$
  \item $U^ms\topr[p_i] U^ms_i$. By the induction hypothesis, $\fsem
    s=\sum_ip_i\fsem {s_i}$. By Lemma~\ref{lem:InterpUM}, we have $\fsem{U^m s}
    =\ext{U^m}\sum_ip_i\fsem{s_i}\ext{U^m}^\dagger
    =\sum_ip_i\ext{U^m}\fsem{s_i}\ext{U^m}^\dagger$ $=\sum_ip_i\fsem{U^ms_i}$.
  \item $\pi^ms\topr[p_i]r=\pi^ms_i$. By the induction hypothesis, $\fsem
    s=\sum_ip_i\fsem {s_i}$. By Lemma~\ref{lem:InterpUM}, we have $\fsem{\pi^m
      s} =\ext{\pi_j}\sum_ip_i\fsem{s_i}\ext{\pi_j}^\dagger
    =\sum_ip_i\ext{\pi_j}\fsem{s_i}\ext{\pi_j}^\dagger
    =\sum_ip_i\fsem{\pi^ms_i}$.
  \item $t\otimes s\topr[p_i]r=t\otimes s_i$. By the induction hypothesis,
    $\fsem s=\sum_ip_i\fsem {s_i}$. By Lemma~\ref{lem:InterpTensor},
    $\fsem{t\otimes s}=\fsem t\otimes\fsem s=\fsem t\otimes(\sum_ip_i\fsem{s_i})
    =\sum_ip_i\fsem t\otimes\fsem{s_i}=\sum_ip_i\fsem{t\otimes s_i}$.
  \item $s\otimes t\topr[p_i]r=s_i\otimes t$. Analogous to previous case.
  \item $\letcase xs{t_0,\dots,t_{2^m-1}}\topr[p_i]\letcase
    x{s_i}{t_0,\dots,t_{2^m-1}}$. By the induction hypothesis, we have that
    $\fsem s=\sum_ip_i\fsem {s_i}$. Let $\sem s=\{(p_i,b_i,\rho_i)\}_i$ and
    $\sem[\theta,x=(\emptybit,\rho_i)]{t_{b_i}}=\{(q_{ij},b'_{ij},e_{ij})\}_j$,
    then, $\fsem{\letcase
      xs{t_0,\dots,t_{2^m-1}}}=\sum_{ij}p_iq_{ij}e_{ij}=\sum_ip_i(\sum_jq_{ij}e_{ij})$
    Notice that $\sum_jq_{ij}e_{ij}=\fsem{\letcase
      x{s_i}{t_0,\dots,t_{2^m-1}}}$. \qed
  \end{itemize}
\end{itemize}
\section{Detailed Proofs of Section~\ref{sec:ExtDenSem}}
\subsection{Proof of Theorem~\ref{thm:ExtIntRed}}\label{proof:ExtIntRed}
By induction on the relation $\toD$. Rules $(\lambda x.t)r\toD t[r/x]$,
$U^m\rho^n\toD \rho'$ and $\rho\otimes\rho'\toD\rho''$ are also valid rules for
relation $\topr[1]$, and hence the proof of these cases are the same than in
Theorem~\ref{thm:IntRed}.

Remaining cases:
\begin{itemize}
\item $\qletcase x{\pi^m\rho^n}{t_0,\dots,t_{2^m-1}} \toD \sum_i
  p_it_i[\rho'/x]$, with $p_i=\tr(\ext{\pi_i}^\dagger\ext{\pi_i}\rho^n)$ and
  $\rho' = \frac{\ext{\pi_i}\rho^n\ext{\pi_i}^\dagger}{p_i}$. Since the
  interpretation of $\mathsf{letcase}^\circ$ coincides with the interpretation
  of $\mathsf{letcase}$, and $\letcase x{\pi^m\rho^n}{t_0,\dots,t_{2^m-1}}
  \topr[p_i] t_i[\rho'/x]$, we can conclude by Theorem~\ref{thm:IntRed}, and
  Lemma~\ref{lem:ExtSumFsem} that $\fsem{\qletcase x{\pi^m\rho^n}
    {t_0,\dots,t_{2^m-1}}}
  =\sum_ip_i\fsem{t_i[\rho'/x]}=\fsem{\sum_ip_i{t_i[\rho'/x]}}$.
\item $\sum_i p_i\rho_i\toD \rho'$, with $\rho'=\sum_ip_i\rho_i$.
  $\fsem{\sum_ip_i\rho_i}=\sum_ip_i\rho_i=\fsem{\rho'}$.
\item $\sum_i p_i t \toD t$. Let $\sem t=\{(q_j,b_j,e_j)\}_j$. Then,
  $\sem{\sum_ip_it}=\{(p_iq_j,b_j,e_j)\}_{ij}$ and so,
  $\fsem{\sum_ip_it}=\sum_{ij}p_iq_je_j=(\sum_ip_i)\sum_jq_je_j=\sum_jq_je_j=\fsem
  t$.
\item $(\sum_i p_i t_i)r \toD \sum_i p_i (t_ir)$. Let
  $\sem{t_i}=\{(q_{ij},b_{ij},f_{ij})\}_j$, $\sem{r}=\{(h_k,b'_k,e_k)\}_k$, and
  $f_{ij}(b'_k,e_k)=\{(\ell_{ijkh},b''_{ijkh},g_{ijkh})\}_h$. Then, we have that
  $\sem{t_ir}$ is equal to
  $\{(q_{ij}h_k\ell_{ijkh},b''_{ijkh},g_{ijkh})\}_{jkh}$, and
  $\fsem{\sum_ip_i(t_ir)}=\sum_{jkh}p_iq_{ij}h_k\ell_{ijkh}g_{ijkh}$. On the
  other hand, $\sem{\sum_ip_it_i}=\{(p_iq_{ij},b_{ij},f_{ij})\}_{ij}$, and so
  $\sem{(\sum_ip_it_i)r}=\{\!(p_iq_{ij}h_k\ell_{ijkh},b''_{ijkh},g_{ijkh})\!\}_{ijkh}$.
  Hence,
  $\fsem{(\sum_i\!p_it_i)r}\!=\!\sum_{jkh}p_iq_{ij}h_k\ell_{ijkh}g_{ijkh}$
  $=\fsem{\sum_ip_i(t_ir)}$.

\item Contextual cases: Only two cases need to be checked. All the other cases
  are analogous to those of the proof of Theorem~\ref{thm:IntRed}.
  \begin{itemize}
  \item Let $\sum_ip_it_i\toD\sum_ip_ir_i$, where $t_j\toD r_j$ and $\forall
    i\neq j$, $t_i=r_i$. By the induction hypothesis, $\fsem{t_j}=\fsem{r_j}$.
    Hence, using Lemma~\ref{lem:ExtSumFsem},
    $\fsem{\sum_ip_it_i}=\sum_ip_i\fsem{t_i}=\sum_ip_i\fsem{r_i}=\fsem{\sum_ip_ir_i}$.
  \item Let $\qletcase{x}{t}{s_0,\dots,s_n}\toD\qletcase{x}{r}{s_0,\dots,s_n}$,
    where $t\toD r$. Let $\sem t=\{(p_i,b_i,\rho_i)\}_i$ and
    $\sem[\theta,x=(\emptybit,\rho_i)]{s_{b_i}}=\{(q_{ij},b'_{ij},e_{ij})\}_j$,
    then $\fsem{\qletcase{x}{t}{s_0,\dots,s_{2^m-1}}}=\sum_{ij}p_iq_{ij}e_{ij}$.

    On the other hand, let $\sem r=\{(h_k,b''_k,\rho'_k)\}_k$. By the induction
    hypothesis, $\sum_kh_k\rho'_k=\sum_ip_i\rho_i$. We conclude by inversion
    that $h_i=p_i$ and $\rho'_i=\rho_i$, which prove the case. \qed
  \end{itemize}
\end{itemize}
\section{Detailed Proofs of Section~\ref{sec:correctness}}
\subsection{Proof of Lemma~\ref{lem:substitution}}\label{proof:substitution}
By induction on $t$. The only difference between $\lambdens$ and $\qlambdens$
are in terms $(b^m,\rho^n)$, which is not present in $\qlambdens$, and
$\sum_ip_it_i$, which is not present in $\lambdens$. Hence, we can prove both
calculi at the same time. We use $\vdash$ generically to refer also to $\Vdash$
where it is also valid.
\begin{itemize}
\item Let $t=x$. Then $B=A$. By Lemma~\ref{lem:weak}, $\Gamma,\Delta\vdash r:A$.
  Notice that $t[r/x]=r$.
\item Let $t=y$. Then, by Lemmas~\ref{lem:weak} and~\ref{lem:streng},
  $\Gamma,\Delta\vdash y:B$. Notice that $t[r/x]=y$.
\item Let $t=\lambda y.s$. Then $B=C\multimap D$ and, by inversion,
  $\Gamma,x:A,y:C\vdash s:D$. Then, by the induction hypothesis,
  $\Gamma,y:C,\Delta\vdash s[r/x]:D$, so, by rule $\multimap_i$,
  $\Gamma,\Delta\vdash\lambda y.(s[r/x]):C\multimap D$. Notice that $\lambda
  y.(s[r/x])=(\lambda y.s)[r/x]$.
\item Let $t=t_1t_2$. Then $\Gamma,x:A=\Gamma_1,\Gamma_2$, with $\Gamma_1\vdash
  t_1:C\multimap B$ and $\Gamma_2\vdash t_2:C$.
  \begin{itemize}
  \item If $x:A\in\Gamma_1$, then, by the induction hypothesis
    $\Gamma_1\setminus\{x:A\},\Delta\vdash t_1[r/x]:C\multimap B$, so by rule
    $\multimap_e$, $\Gamma_1\setminus\{x:A\},\Gamma_2,\Delta\vdash
    t_1[r/x]t_2:B$. Notice that $\Gamma_1\setminus\{x:A\},\Gamma_2=\Gamma$ and
    $t_1[r/x]t_2=(t_1t_2)[r/x]$.
  \item If $x:A\in\Gamma_2$, then, by the induction hypothesis
    $\Gamma_2\setminus\{x:A\},\Delta\vdash t_2[r/x]:C$, so by rule
    $\multimap_e$, $\Gamma_1,\Gamma_2\setminus\{x:A\},\Delta\vdash
    t_1(t_2[r/x]):B$. Notice that $\Gamma_1,\Gamma_2\setminus\{x:A\}=\Gamma$ and
    $t_1(t_2[r/x])=(t_1t_2)[r/x]$.
  \end{itemize}
\item Let $t=\rho^n$. Then $B=n$. By Lemmas~\ref{lem:weak} and~\ref{lem:streng},
  $\Gamma,\Delta\vdash\rho^n:n$. Notice that $t[r/x]=\rho^n$.
\item Let $t=U^ms$. Then $B=n$ and $\Gamma,x:A\vdash s:n$. Then, by the
  induction hypothesis, $\Gamma,\Delta\vdash s[r/x]:n$. So, by rule $u$,
  $\Gamma,\Delta\vdash U^m(s[r/x]):n$. Notice that $U^m(s[r/x])=(U^ms)[r/x]$.
\item Let $t=\pi^ms$. Then $B=(m,n)$ and $\Gamma,x:A\vdash s:n$. Then, by the
  induction hypothesis, $\Gamma,\Delta\vdash s[r/x]:n$. So, by rule $m$,
  $\Gamma,\Delta\vdash \pi^m(s[r/x]):(m,n)$. Notice that
  $\pi^m(s[r/x])=(\pi^ms)[r/x]$.
\item Let $t=t_1\otimes t_2$. Then $B=n_1+n_2$, $\Gamma,x:A=\Gamma_1,\Gamma_1$
  with $\Gamma_i\vdash t_i:n_i$ for $i=1,2$. Let $x:A\in\Gamma_i$ for some
  $i=1,2$. Then, by the induction hypothesis,
  $\Gamma_i\setminus\{x:A\},\Delta\vdash t_i[r/x]$, so by rule $\otimes$, either
  $\Gamma,\Delta\vdash t_1[r/x]\otimes t_2:n_1+n_2$, or $\Gamma,\Delta\vdash
  t_1\otimes t_2[r/x]:n_1+n_2$. In the first case, notice that $t_1[r/x]\otimes
  t_2=(t_1\otimes t_2)[r/x]$, and in the second, $t_1\otimes
  t_2[r/x]=(t_1\otimes t_2)[r/x]$.
\item Let $t=(b^m,\rho^n)$. Then $B=(m,n)$. By Lemmas~\ref{lem:weak}
  and~\ref{lem:streng}, $\Gamma,\Delta\vdash(b^m,\rho^n):(m,n)$. Notice that
  $t[r/x]=(b^m,\rho^n)$.
\item Let $t=\sum_ip_it_i$. Then $\Gamma,x:A\Vdash t_i:B$ and so, by the
  induction hypothesis, $\Gamma,\Delta\Vdash t_i[r/x]:B$. Therefore, by rule
  $+$, $\Gamma,\Delta\Vdash\sum_ip_it_i[r/x]:B$.
\item Let $t=\letcase ys{t_0,\dots,t_{2^m-1}}$. $y:n\vdash t_i:B$, for
  $i=0,\dots,2^m-1$, and, $\Gamma\vdash s:(m,n)$. By the induction hypothesis,
  $\Gamma,\Delta\vdash s[r/x]:(m,n)$. So, by rule $\mathsf{lc}$,
  $\Gamma,\Delta\vdash\letcase y{s[r/x]}{t_0,\dots,t_{2^m-1}}\!\!:B$. Notice
  that $\letcase y{s[r/x]}{t_0,\dots,t_{2^m-1}}=(\letcase
  ys{t_0,\dots,t_{2^m-1}})[r/x]$.\qed
\end{itemize}

\subsection{Proof of Theorem~\ref{thm:SR}}\label{proof:SR}
By induction on the relations $\topr[p]$ and $\toD$.
\begin{enumerate}
\item
  \begin{itemize}
  \item Let $t=(\lambda x.t')s$, $r=t'[r/s]$ and $p=1$. Then
    $\Gamma\vdash(\lambda x.t')s:A$, so, $\Gamma_1\vdash\lambda x.t':B\multimap
    A$ and $\Gamma_2\vdash s:B$, with $\Gamma=\Gamma_1,\Gamma_2$. Hence,
    $\Gamma_1,x:B\vdash t':A$, and so, by Lemma~\ref{lem:substitution},
    $\Gamma\vdash t'[s/x]:A$.
  \item Let $t=\letcase x{(b^m,\rho^n)}{t_0,\dots,t_{2^m-1}}$,
    $r=t_{b^m}[\rho^n/x]$ and $p=1$. Then, by inversion, $x:n'\vdash t_i:A$, for
    $i=0,\dots,2^m-1$, and $\Gamma\vdash
    (b^m,\rho^n):(m',n')$. 
    So, by inversion again, $m'=m$ and $n'=n$. By rule $\mathsf{ax}_\rho$,
    $\Gamma\vdash\rho^n:n$. Hence, by Lemma~\ref{lem:substitution},
    $\Gamma\vdash t_{b^m}[\rho^n/x]:A$.
  \item Let $t=U^m\rho^n$, $r={\rho'}^n$ and $p=1$, with
    ${\rho'}^n=\ext{U^m}\rho^n\ext{U^m}^\dagger$. Then $A=n$. By rule
    $\mathsf{ax}_\rho$, $\Gamma\vdash{\rho'}^n:n$.
  \item Let $t=\pi^m\rho^n$, $r=(i^m,\rho_i^n)$, with
    $\rho_i^n=\frac{\ext{\pi_i}\rho^n\ext{\pi_i}^\dagger}p$ and
    $p=\tr(\ext{\pi_1}^\dagger\ext{\pi_i}\rho^n)$. Then $B=(m,n)$. By rule
    $\mathsf{m}$, $\Gamma\vdash (i^m,\rho_i^n):(m,n)$.
  \item Let $t=\rho_1^n\otimes\rho_2^m$ and $r=\rho$, with
    $\rho=\rho_1^n\otimes\rho_2^m$. Then, $A=n+m$, with $\vdash\rho_1^n:n$ and
    $\vdash\rho_2^m:m$. Since $\rho$ is a density matrix of $(n+m)$-qubits,
    $\vdash\rho:n+m$.
  \item Contextual cases: Let $s\topr[p] s'$, then
    \begin{itemize}
    \item Consider $t=\lambda x.s$ and $r=\lambda x.s'$. Then $A=B\multimap C$
      and $\Gamma,x:B\vdash s:C$. So, by the induction hypothesis,
      $\Gamma,x:B\vdash s':C$ and by rule $\multimap_i$, $\Gamma\vdash \lambda
      x.s':B\multimap C$.
    \item Consider $t=t's$ and $r=t's'$. Then $\Gamma=\Gamma_1,\Gamma_2$, with
      $\Gamma_1\vdash t':B\multimap A$ and $\Gamma_2\vdash s:B$. By the
      induction hypothesis, $\Gamma_2\vdash s':B$, so by rule $\multimap_e$,
      $\Gamma\vdash t's':A$.
    \item Consider $t=st'$ and $r=s't'$. Then $\Gamma=\Gamma_1,\Gamma_2$, with
      $\Gamma_1\vdash s:B\multimap A$ and $\Gamma_2\vdash t':B$. By the
      induction hypothesis, $\Gamma_1\vdash s':B\multimap A$, so by rule
      $\multimap_e$, $\Gamma\vdash s't':A$.
    \item Consider $t=U^ms$ and $r=U^ms'$. Then $A=n$ and $\Gamma\vdash s:n$. By
      the induction hypothesis $\Gamma\vdash s':n$, so by rule $\mathsf{u}$,
      $\Gamma\vdash U^ms':n$.
    \item Consider $t=\pi^ms$ and $r=\pi^ms'$. Then $A=n$ and $\Gamma\vdash
      s:n$. By the induction hypothesis $\Gamma\vdash s':n$, so by rule
      $\mathsf{m}$, $\Gamma\vdash \pi^ms':n$.
    \item Consider $t=t'\otimes s$ and $r=t'\otimes s'$. Then $A=n+m$ and
      $\Gamma=\Gamma_1,\Gamma_2$, with $\Gamma_1\vdash t':n$ and $\Gamma_2\vdash
      s:m$. By the induction hypothesis $\Gamma_2\vdash s':m$, so by rule
      $\otimes$, $\Gamma\vdash t'\otimes s':n+m$.
    \item Consider $t=s\otimes t'$ and $r=s'\otimes t'$. Then $A=n+m$ and
      $\Gamma=\Gamma_1,\Gamma_2$, with $\Gamma_1\vdash s:n$ and $\Gamma_2\vdash
      t':m$. By the induction hypothesis $\Gamma_2\vdash s':m$, so by rule
      $\otimes$, $\Gamma\vdash t'\otimes s':n+m$.
    \item Consider $t=\letcase xs{t_0,\dots,t_{2^m-1}}$ and $r=\letcase
      x{s'}{t_0,\dots,t_{2^m-1}}$. Then $x:n\vdash t_i:A$ for $i=0,\dots,2^m-1$,
      and $\Gamma\vdash s:(m,n)$. So, by the induction hypothesis, $\Gamma\vdash
      s':(m,n)$ and by rule $\mathsf{lc}$, $\Gamma\vdash\letcase
      x{s'}{t_0,\dots,t_{2^m-1}}:A$.\qed
    \end{itemize}
  \end{itemize}
\item
  \begin{itemize}
  \item Let $t=(\lambda x.t')s$ and $r=t'[s/x]$. Analogous to the same rule in
    $\lambdens$.
  \item Let $t=\qletcase x{\pi^m\rho^n}{t_0,\dots,t_{2^m}-1}$ and $r=\sum_i p_i
    t_i[\rho^n_i/x]$, with
    $\rho_i^n=\frac{\ext{\pi_i}\rho^n\ext{\pi_i}^\dagger}{p_i}$ and
    $p_i=\tr(\ext{\pi_i}^\dagger\ext{\pi_i}\rho^n)$. Then
    $\Gamma\Vdash\pi^m\rho^n:(m,n)$ and $x:n\Vdash t_i:A$. By
    Lemma~\ref{lem:substitution}, $\Gamma\Vdash t_i[\rho_i^n/x]:A$, then, by
    rule $+$, $\Gamma\Vdash\sum_ip_it_i[\rho_i^n/x]:A$.
  \item Let $t=U^m\rho^n$ and $r={\rho'}^n$, with
    $\ext{U^m}\rho^n\ext{U^m}^\dagger={\rho'}^n$. Analogous to the same rule in
    $\lambdens$.
  \item Let $t=\rho_1^n\otimes\rho^m_2$ and $r=\rho$, with
    $\rho=\rho_1^n\otimes\rho_2^m$. Analogous to the same rule in $\lambdens$.
  \item Let $t=\sum_i p_i\rho_i$ and $r=\rho'$, with $\rho'=\sum_ip_i\rho_i$.
    Then, $\Gamma\Vdash\sum_ip_i\rho_i:n$, and by rule $\mathsf{ax}_\rho$,
    $\Gamma\Vdash\rho':n$.
  \item Let $t=\sum_i p_i r$. Then, $\Gamma\Vdash r:A$.
  \item Let $t=(\sum_i p_i t_i)r$ and $\sum_i p_i (t_ir)$. Then,
    $\Gamma=\Gamma_1,\Gamma_2$, $\Gamma_1\Vdash t_i:B\multimap A$ and
    $\Gamma_2\Vdash r:B$. Therefore, by rule $\multimap_e$,
    $\Gamma_1,\Gamma_2\Vdash t_ir:A$, and by rule $+$,
    $\Gamma_1,\Gamma_2\Vdash\sum_ip_i(t_ir):A$.
  \item Contextual cases. All the contextual cases are analogous to the same
    rules in $\lambdens$, except for the contextual rule of $\sum$: Consider
    $t=\sum_ip_it_i$ and $r=\sum_ip_ir_i$, with $t_j\toD r_j$, and $\forall
    i\neq j$, $t_i=r_i$. By inversion, $\Gamma\Vdash t_i:A$. By the induction
    hypothesis, $\forall i$, $\Gamma\Vdash r_i:A$. Then, by rule $+$,
    $\Gamma\Vdash\sum_ip_ir_i$. \qed
  \end{itemize}
\end{enumerate}

\subsection{Proof of Lemma~\ref{lem:values}}\label{proof:values}
\begin{enumerate}
\item We proceed by induction on $v$.
  \begin{itemize}
  \item Let $v=x$. Then we are done, since variables do not rewrite.
  \item Let $v=\lambda x.v'$. By the induction hypothesis, $v'$ does not
    rewrite, and so $v$ neither does.
  \item Let $v=w_1\otimes w_2$. By the induction hypothesis, nor $w_1$ nor $w_2$
    rewrite, and the only rule rewriting a $\otimes$ in head position needs both
    $w_1$ and $w_2$ to be density matrices, which are not according to the
    grammar.
  \item Let $v=\rho^n$. Then we are done, since density matrices are constants
    and do not rewrite.
  \item Let $v=(b^m,\rho^n)$. Then we are done, since pairs are constants and do
    not rewrite.
  \end{itemize}
\item We proceed by induction on $v$. The only difference with the previous case
  is that the last term is not present, and a new term is introduced. Hence, let
  $v=\sum_i p_i w_i$. The only possible rewrite would be if there were some
  $w_i=w_j$, which is explicitly excluded from the grammar. So, this term does
  not rewrite \qed
\end{enumerate}

\subsection{Proof of Theorem~\ref{thm:Progress}}\label{proof:Progress}
We relax the hypotheses and prove the theorem for open terms as well. That is:
\begin{enumerate}
\item If $\Gamma\vdash t:A$, then either $t$ is a value, there exist $n$,
  $p_1,\dots,p_n$, and $r_1,\dots,r_n$ such that $t\topr[p_i]r_i$, or $t$
  contains a free variable, and $t$ does not rewrite.
\item If $\Gamma\Vdash t:A$, then either $t$ is a \qvalue, there exists $r$ such
  that $t\toD r$, or $t$ contains a free variable, and $t$ does not rewrite.
\end{enumerate}
The proofs proceeds as follows.
\begin{enumerate}
\item We proceed by induction on the derivation of $\vdash t:A$.
  \begin{itemize}
  \item Let $\Gamma,x:A\vdash x:A$ as a consequence of rule $\mathsf{ax}$. Then,
    we are done since $x$ is a free variable and does not rewrite.
  \item Let $\Gamma\vdash\lambda x.t:A\multimap B$ as a consequence of
    $\Gamma,x:A\vdash t:B$ and rule $\multimap_i$. Then, by the induction
    hypothesis, either
    \begin{itemize}
    \item $t$ is not a value and there exist $n$, $p_1,\dots,p_n$ and
      $r_1,\dots,r_n$ such that $t\topr[p_i] r_i$, which case $\lambda
      x.t\topr[p_i]\lambda x.r_i$;
    \item $t$ is a value, in which case $\lambda x.t$ is a value as well; or
    \item $t$ contains a free variable, and $t$ does not rewrite, in which case
      the same happens to $\lambda x.t$.
    \end{itemize}
  \item Let $\Gamma,\Delta\vdash tr:B$ as a consequence of $\Gamma\vdash
    t:A\multimap B$, $\Delta\vdash r:A$ and rule $\multimap_e$. Then, by the
    induction hypothesis, one of the following cases happens:
    \begin{itemize}
    \item There exist $n$, $p_1,\dots,p_n$, and $t_1,\dots,t_n$ such that
      $t\topr[p_i] t_i$, in which case $tr\topr[p_i]t_ir$.
    \item There exist $n$, $p_1,\dots,p_n$ and $r_1,\dots,r_n$ such that
      $r\topr[p_i] r_i$, in which case $tr\topr[p_i]tr_i$.
    \item $t$ is a value and $r$ does not rewrite. The only values which can be
      typed by $A\multimap B$ are:
      \begin{itemize}
      \item $t=x$, in which case $xr$ contains a free variable and does not
        rewrite.
      \item $t=\lambda x.v$, in which case $(\lambda x.v)r\topr v[r/x]$.
      \end{itemize}
    \item $t$ is not a value, contains a free variable, and does not rewrite,
      and $r$ does not rewrite, in which case $tr$ contains a free variable and
      does not rewrite.
    \end{itemize}
  \item Let $\Gamma\vdash\rho^n:n$ as a consequence of rule $\mathsf{ax_\rho}$.
    Then, we are done since $\rho^n$ is a value.
  \item Let $\Gamma\vdash U^mt:n$ as a consequence of $\Gamma\vdash t:n$ and
    rule $\mathsf{u}$. Then, by the induction hypothesis, one of the following
    cases happens:
    \begin{itemize}
    \item $t$ is a value. The only values which can be typed by $n$ are:
      \begin{itemize}
      \item $t=x$, in which case $U^mx$ contains a free variable and does not
        rewrite.
      \item $t=\bigotimes x_i$, in which case $U^m\bigotimes x_i$ contains free
        variables and does not rewrite.
      \item $t=\rho^n$, in which case $U^m\rho^n\topr[1]\rho'$, with
        $\rho'=\ext{U^m}\rho^n\ext{U^m}^\dagger$.
      \end{itemize}
    \item There exist $n$, $p_1,\dots,p_n$ and $t_1,\dots,t_n$ such that
      $t\topr[p_i]t_i$, in which case $U^mt\topr[p_i]t_i$;
    \item $t$ contains a free variable and does not rewrite, in which case the
      same is true for $U^mt$.
    \end{itemize}
  \item Let $\Gamma\vdash \pi^mt:(m,n)$ as a consequence of $\Gamma\vdash t:n$
    and rule $\mathsf{m}$. Then, by the induction hypothesis, one of the
    following cases happens:
    \begin{itemize}
    \item $t$ is a value. The only values which can be typed by $n$ are:
      \begin{itemize}
      \item $t=x$, in which case $\pi^mx$ contains a free variable and does not
        rewrite.
      \item $t=\bigotimes x_i$, in which case $\pi^m \bigotimes x_i$ contains
        free variables and does not rewrite.
      \item $t=\rho^n$, in which case $\pi^m\rho^n\topr[p_i](i,\rho_i^n)$, with
        $p_i=\tr(\ext{\pi_i}^\dagger\ext{\pi_i}\rho^n)$ and
        $\rho_i^n=\frac{\ext{\pi_i}\rho^n\ext{\pi_i}^\dagger}{p_i}$.
      \end{itemize}
    \item There exist $n$, $p_1,\dots,p_n$ and $t_1,\dots,t_n$ such that
      $t\topr[p_i]t_i$, in which case $\pi^mt\topr[p_i]t_i$;
    \item $t$ contains a free variable and does not rewrite, in which case the
      same is true for $\pi^mt$.
    \end{itemize}
  \item Let $\Gamma,\Delta,\vdash t\otimes r:n+m$ as a consequence of
    $\Gamma\vdash t:n$, $\Delta\vdash r:m$ and rule $\otimes$. Then, by the
    induction hypothesis, one of the following happens:
    \begin{itemize}
    \item There exist $k$, $p_1,\dots,p_n$, and $t_1,\dots,t_k$ such that
      $t\topr[p_i]t_i$, in which case $t\otimes r\topr[p_i]t_i\otimes r$.
    \item There exist $k$, $p_1,\dots,p_n$, and $r_1,\dots,r_k$ such that
      $r\topr[p_i]r_i$, in which case $t\otimes r\topr[p_i]t\otimes r_i$.
    \item $t$ is a value and $r$ does not rewrite. The only values which can be
      typed by $n$ are:
      \begin{itemize}
      \item $t=x$, then $t\otimes r$ contains a free variable and does not
        rewrite.
      \item $t=\bigotimes x_i$, then $t\otimes r$ contains free variables and
        does not rewrite.
      \item $t=\rho^n$, then:
        \begin{itemize}
        \item If $r=\rho^m$, $t\otimes r\topr \rho'$, with
          $\rho'=\rho^n\otimes\rho^m$.
        \item If $r$ contains a free variable and does not rewrite, then the
          same is true for $t\otimes r$.
        \end{itemize}
      \end{itemize}
    \item $t$ contains a free variable and does not rewrite, and $r$ does not
      rewrite, in which case $t\otimes r$ contains a free variable and does not
      rewrite.
    \end{itemize}
  \item Let $\Gamma\vdash(b^m,\rho^n):(m,n)$ as a consequence of rule
    $\mathsf{ax}_{\mathsf{am}}$. Then, we are done since $(b^m,\rho^n)$ is a
    value.
  \item Let $\Gamma\vdash\letcase xr{t_0,\dots,t_{2^m-1}}:A$ as a consequence of
    $x:n\vdash t_i:A$ for $i=0,\dots,2^m-1$, $\Gamma\vdash r:(m,n)$, and rule
    $\mathsf{lc}$. By the induction hypothesis, the possible cases for $r$ are:
    \begin{itemize}
    \item $r$ is a value. The possible values which can be typed by $(m,n)$ are:
      \begin{itemize}
      \item $r=y$, in which case $\letcase xr{t_0,\dots,t_{2^m-1}}$ contains a
        free variable and does not reduce.
      \item $r=(b^m,\rho^n)$, in which case $\letcase
        xr{t_0,\dots,t_{2^m-1}}\topr t_{b^m}[\rho^n/x]$.
      \end{itemize}
    \item There exist $k$, $p_1,\dots,p_k$, and $r_1,\dots,r_k$ such that
      $r\topr[p_i]r_i$, and $\letcase
      xr{t_0,\dots,t_{2^m-1}}\!\topr[p_i]\!\letcase
      x{r_i}{t_0,\dots,t_{2^m-1}}$.
    \item $r$ contains a free variable and does not rewrite, in which case the
      same is true for $\letcase xr{t_0,\dots,t_{2^m-1}}$.
    \end{itemize}
  \end{itemize}
\item We proceed by induction on the derivation of $\Gamma\Vdash t:A$.
  \begin{itemize}
  \item Let $\Gamma,x:A\Vdash x:A$ as a consequence of rule $\mathsf{ax}$. Then,
    we are done since $x$ is a free variable and does not rewrite.
  \item Let $\Gamma\Vdash\lambda x.t:A\multimap B$ as a consequence of
    $\Gamma,x:A\Vdash t:B$ and rule $\multimap_i$. Then, by the induction
    hypothesis, either
    \begin{itemize}
    \item $t$ is not a \qvalue\ and there exists $r$ such that $t\toD r$, which
      case $\lambda x.t\toD\lambda x.r$;
    \item $t$ is a \qvalue, in which case $\lambda x.t$ is a \qvalue\ as well;
      or
    \item $t$ contains a free variable, and $t$ does not rewrite, in which case
      the same happens to $\lambda x.t$.
    \end{itemize}
  \item Let $\Gamma,\Delta\Vdash tr:B$ as a consequence of $\Gamma\Vdash
    t:A\multimap B$, $\Delta\Vdash r:A$ and rule $\multimap_e$. Then, by the
    induction hypothesis, one of the following cases happens:
    \begin{itemize}
    \item There exists $t'$ such that $t\toD t'$, in which case $tr\toD t'r$.
    \item There exists $r'$ such that $r\toD r'$, in which case $tr\toD tr'$.
    \item $t$ is a \qvalue\ and $r$ does not rewrite. The only \qvalue[s]\ which
      can be typed by $A\multimap B$ are:
      \begin{itemize}
      \item $t=x$, in which case $xr$ contains a free variable and does not
        rewrite.
      \item $t=\lambda x.v$, in which case $(\lambda x.v)r\toD v[r/x]$.
      \item $t=\sum_i p_i t_i$, where $\Gamma\Vdash t_i:A\multimap B$. Then,
        $tr\toD\sum_i p_i (t_ir)$.
      \end{itemize}
    \item $t$ is not a \qvalue, contains a free variable, and does not rewrite,
      and $r$ does not rewrite, in which case, if $t$ is not a sum, $tr$
      contains a free variable and does not rewrite. If $t=\sum_ip_it_i$ is a
      sum, $tr\toD\sum_ip_i(t_ir)$.
    \end{itemize}
  \item Let $\Gamma\Vdash\rho^n:n$ as a consequence of rule $\mathsf{ax_\rho}$.
    Then, we are done since $\rho^n$ is a \qvalue.
  \item Let $\Gamma\Vdash U^mt:n$ as a consequence of $\Gamma\Vdash t:n$ and
    rule $\mathsf{u}$. Then, by the induction hypothesis, one of the following
    cases happens:
    \begin{itemize}
    \item $t$ is a \qvalue. Since $\lambda x.v$ cannot be typed by $n$, the only
      \qvalue[s]\ that can be typed by $n$ are either $\rho^n$, or they contain
      free variables:
      \begin{itemize}
      \item Let $t=\rho^n$, then $U^m\rho^n\toD\rho'$, with
        $\rho'=\ext{U^m}\rho^n\ext{U^m}^\dagger$.
      \item Let $t$ contain a free variable. Notice that it can only be either a
        free variable by itself, a tensor of \qvalue[s]\ containing free
        variables, or a linear combination of different \qvalue[s]\ containing
        free variables. In any case, $t$ contains a free variable and does not
        rewrite. Hence, $U^mt$ contains a free variable and does not rewrite.
      \end{itemize}
    \item There exists $r$ such that $t\toD r$, in which case $U^mt\toD r$;
    \item $t$ contains a free variable and does not rewrite, in which case the
      same is true for $U^mt$.
    \end{itemize}
  \item Let $\Gamma\Vdash \pi^mt:(m,n)$ as a consequence of $\Gamma\Vdash t:n$
    and rule $\mathsf{m}$. Then, by the induction hypothesis, one of the
    following cases happens:
    \begin{itemize}
    \item $t$ is a \qvalue. Since $\lambda x.v$ cannot be typed by $n$, the only
      \qvalue[s]\ that can be typed by $n$ are either $\rho^n$, or they contain
      free variables:
      \begin{itemize}
      \item Let $t=\rho^n$, then $\pi^m\rho^n\toD\rho'$, with
        $\rho'=\sum_i\ext{\pi_i}\rho^n\ext{\pi_i}^\dagger$.
      \item Let $t$ contain a free variable. Notice that it can only be either a
        free variable by itself, a tensor of \qvalue[s]\ containing free
        variables, or a linear combination of different \qvalue[s]\ containing
        free variables. In any case, $t$ contains a free variable and does not
        rewrite. Hence, $\pi^mt$ contains a free variable and does not rewrite.
      \end{itemize}
    \item There exists $r$ such that $t\toD r$, in which case $\pi^mt\toD r$;
    \item $t$ contains a free variable and does not rewrite, in which case the
      same is true for $\pi^mt$.
    \end{itemize}
  \item Let $\Gamma,\Delta,\Vdash t\otimes r:n+m$ as a consequence of
    $\Gamma\Vdash t:n$, $\Delta\Vdash r:m$ and rule $\otimes$. Then, by the
    induction hypothesis, one of the following happens:
    \begin{itemize}
    \item There exists $t'$ such that $t\toD t'$, in which case $t\otimes r\toD
      t'\otimes r$.
    \item There exists $r'$ such that $r\toD r'$, in which case $t\otimes r\toD
      t\otimes r'$.
    \item $t$ is a \qvalue\ and $r$ does not rewrite. The only \qvalue[s]\ that
      can be typed by $n$ are either $\rho^n$, or they contain free variables.
      \begin{itemize}
      \item Let $t=\rho^n$, then:
        \begin{itemize}
        \item If $r=\rho^m$, $t\otimes r\toD \rho'$, with
          $\rho'=\rho^n\otimes\rho^m$.
        \item If $r$ contains a free variable and does not rewrite, then the
          same is true for $t\otimes r$.
        \end{itemize}
      \item Let $t$ contain a free variable. Then $t\otimes r$ contains a free
        variable and does not rewrite.
      \end{itemize}
    \item $t$ contains a free variable and does not rewrite, and $r$ does not
      rewrite, in which case $t\otimes r$ contains a free variable and does not
      rewrite.
    \end{itemize}
  \item Let $\Gamma\Vdash\qletcase xr{t_0,\dots,t_{2^m-1}}:A$ as a consequence
    of $x:n\Vdash t_i:A$ for $i=0,\dots,2^m-1$, $\Gamma\Vdash r:(m,n)$, and rule
    $\mathsf{lc}$. By the induction hypothesis, the possible cases for $r$ are:
    \begin{itemize}
    \item $r$ is a \qvalue. The only possible \qvalue[s]\ that can be typed by
      $(m,n)$ are variables or linear combination of variables. In any case, we
      have that $\qletcase xr{t_0,\dots,t_{2^m-1}}$ contains at least a free
      variable and does not rewrite.
    \item There exists $r'$ such that $r\toD r'$, in which case, we have to
      distinguish two cases:
      \begin{itemize}
      \item $r\neq\pi^m\rho^n$, and therefore $\qletcase
        xr{t_0,\dots,t_{2^m-1}}\toD \qletcase x{r'}{t_0,\dots,t_{2^m-1}}$.
      \item $r=\pi^m\rho^n$, and therefore $\qletcase
        xr{t_0,\dots,t_{2^m-1}}\toD \sum_ip_it_i[\rho^n_i/x]$, where
        $p_i=\tr(\ext{\pi_i}^\dagger\ext{\pi_i}\rho^n)$ and
        $\rho_i^n=\frac{\ext{\pi_i}\rho^n\ext{\pi_i}^\dagger}{p_i}$.
      \end{itemize}
    \item $r$ contains a free variable and does not rewrite, in which case the
      same is true for $\qletcase xr{t_0,\dots,t_{2^m-1}}$.
    \end{itemize}
  \item Let $\Gamma\Vdash\sum_ip_it_i:A$ as a consequence of $\Gamma\Vdash
    t_i:A$, $\sum_ip_i=1$, and rule $+$. If $\sum_ip_it_i$ is a \qvalue, then we
    are done. If it is not a value, then one of the following cases is true:
    \begin{itemize}
    \item $t_j=t_k$ for some $j\neq k$, in which case $\sum_ip_it_i\toD
      (\sum_{i\neq j,k}p_it_i)+(p_j+p_k)t_j$.
    \item At least one $t_i$ is not a value. By the induction hypothesis, if
      $t_i$ is not a value, either it rewrites, or it contains a free variable
      and does not rewrite. If at least one $t_i$ rewrites, then $\sum_ip_it_i$
      rewrites. If none of these rewrites and at least one contains a free
      variable, then $\sum_ip_it_i$ does not rewrite and contain a free
      variable. \qed
    \end{itemize}
  \end{itemize}
\end{enumerate}

\end{document}